\newtheorem{theorem}{Theorem}
\newtheorem{lemma}{Lemma}
\newtheorem{proposition}{Proposition}
\newtheorem{corollary}{Corollary}
\renewcommand{\mathbf}{\bm}
\renewcommand{\mathbb}{\mathds} 
\definecolor{NewBlue}{rgb}{0.1, 0.1, 0.7}
\definecolor{NewRed}{rgb}{0.7, 0.1, 0.1}
\theoremstyle{definition}
\newtheorem{definition}{Definition}[section]
\newcolumntype{L}[1]{>{\raggedright\let\newline\\\arraybackslash\hspace{0pt}}m{#1}}
\newcolumntype{C}[1]{>{\centering\let\newline\\\arraybackslash\hspace{0pt}}m{#1}}
\newcolumntype{R}[1]{>{\raggedleft\let\newline\\\arraybackslash\hspace{0pt}}m{#1}}
\begin{document}

\title{Entanglement buffering with multiple quantum memories}

\date{\today}

\author{Álvaro G. Iñesta}
\thanks{These two authors contributed equally.\\
Email:
\href{mailto:a.gomezinesta@tudelft.nl}{a.gomezinesta@tudelft.nl},
\href{mailto:b.j.davies@tudelft.nl}{b.j.davies@tudelft.nl}.}

\author{Bethany Davies}
\thanks{These two authors contributed equally.\\
Email:
\href{mailto:a.gomezinesta@tudelft.nl}{a.gomezinesta@tudelft.nl},
\href{mailto:b.j.davies@tudelft.nl}{b.j.davies@tudelft.nl}.}

\author{Sounak Kar}

\author{Stephanie Wehner}

\affiliation{QuTech, Delft University of Technology, Lorentzweg 1, 2628 CJ Delft, The Netherlands}
\affiliation{EEMCS, Quantum Computer Science, Delft University of Technology, Mekelweg 4, 2628 CD Delft, The Netherlands}
\affiliation{Kavli Institute of Nanoscience, Delft University of Technology, Lorentzweg 1, 2628 CJ Delft, The Netherlands}

\begin{abstract}
    Entanglement buffers are systems that maintain high-quality entanglement, ensuring it is readily available for consumption when needed.
    In this work, we study the performance of a two-node buffer, where each node has one long-lived quantum memory for storing entanglement and multiple short-lived memories for generating fresh entanglement.
    Newly generated entanglement may be used to purify the stored entanglement, which degrades over time. Stored entanglement may be removed due to failed purification or consumption.
    We derive analytical expressions for the system performance, which is measured using the entanglement availability and the average fidelity upon consumption. 
    Our solutions are computationally efficient to evaluate, and they provide fundamental bounds to the performance of purification-based entanglement buffers.
    We show that purification must be performed as frequently as possible to maximise the average fidelity of entanglement upon consumption, even if this often leads to the loss of high-quality entanglement due to purification failures.
    Moreover, we obtain heuristics for the design of good purification policies in practical systems. A key finding is that simple purification protocols, such as DEJMPS, often provide superior buffering performance compared to protocols that maximize output fidelity.
\end{abstract}

\maketitle

\section{Introduction}\label{sec.intro}
Entanglement is a fundamental resource for many quantum network applications, including some quantum key distribution protocols \cite{Ekert1991,Bennett1992}, distributed quantum sensing \cite{Lloyd2008,Qian2019,England2019,Wu2023}, and coordination tasks where communication is either prohibited or insufficiently fast~\cite{Brassard2005,Broadbent2008}.
Pre-distributing entanglement between remote parties would eliminate the need to generate and distribute entangled states on demand, saving time and resources \cite{Chakraborty2019,Ghaderibaneh2022,Pouryousef2022,inesta2023performance}. However, entanglement degrades over time due to decoherence, preventing long-term storage.

Entanglement buffers are systems that store entanglement until it is needed for an application.
Passive buffers, which store entanglement in quantum memories, are constrained by the coherence time of these memories~\cite{Askarani2021}.
To overcome this limitation, purification-based entanglement buffers have been proposed~\cite{Davies2023a,Elsayed2023}.
These systems store entangled states and employ purification protocols to ensure the states remain high quality, mitigating the effects of decoherence.
Purification protocols take $m$ low-quality entangled states as input and produce $n$ higher-quality states as output, typically with $m>n$~\cite{Bennett1996, Deutsch1996, Dur1999, Yan2023}.
These protocols often involve some probability of failure, in which case all the input states are lost and no entanglement is produced.
Here, we focus on purification-based buffers.

As proposed in ref.~\cite{Davies2023a}, the performance of an entanglement buffer can be measured with two quantities: the availability (probability that entanglement is available for consumption when requested, see Definition~\ref{def.A}) and the average consumed fidelity (average quality of entanglement at the time of consumption, see Definition~\ref{def.F}). As well as having practical utility, entanglement buffers are a useful theoretical tool in order to understand the impact of several important interacting processes that occur in a quantum network: ongoing generation, purification, and consumption of entanglement. 
Of major interest is the impact of the entanglement purification protocol on the performance of the system. Since the success probability of entanglement purification typically depends on the fidelity of the input states, any rate and fidelity metrics are inherently coupled in systems making use of purification.
This coupling adds complexity to analytical calculations.
Consequently, most analytical studies on the performance of quantum networking systems exclude purification, and its impact on performance is typically explored with numerical methods~\cite{Haldar2024, Victora2023}. Nevertheless, as is a main result in this work, for entanglement buffering systems closed-form solutions are obtainable for a fully general purification protocol. One may then efficiently compute the performance of a particular purification policy, as well as make formal statements about how often purification should be applied to the buffered entanglement.

Here, we study the \emph{1G$n$B system}: a purification-based entanglement buffer with one good (long-lived) memory and $n$ bad (short-lived) memories.
The good memory can store entanglement, which can be consumed at any time by an application.
In contrast, bad memories can generate entanglement concurrently but cannot store it; they act as communication qubits.
For instance, carbon-13 nuclear spins in diamond can serve as good memories with coherence times up to $1$ min~\cite{Bradley2019}, while electron spins in nitrogen-vacancy centers may function as communication qubits, \mbox{with coherence times generally below $1$~s~\cite{Abobeih2018}}.

Each time entanglement is generated in some of the bad memories, the system may choose to immediately use it to purify the entanglement stored in the good memory. If purification is not attempted, the newly generated entanglement is discarded.
We illustrate the 1G$n$B system in Figure~\ref{fig.1GnB-sketch}.
Note that the physical platform must enable easy access to stored entanglement for consumption and purification.
However, network activities, such as repeated entanglement generation attempts and purification, may introduce additional noise, reducing memory lifetimes.
For example, in ref.~\cite{Pompili2021}, even when the carbon-13 nuclear spin used as a storage qubit is protected from network noise by applying stronger magnetic fields, it exhibits a shortened lifetime of approximately 11.6 ms.


The 1G$n$B buffering system is a generalisation of the 1G1B system that was originally proposed in \cite{Davies2023a}. 1G1B is a system with only one good quantum memory and one bad memory. Here, we generalise the work in \cite{Davies2023a} in three main ways. 
Firstly, we now consider several ($n$) bad memories. Including several bad memories in our model now means that there is the possibility of generating multiple entangled links in the same entanglement generation attempt, for example via frequency~\cite{Wengerowsky2018,Askarani2021,Chen2023} or time multiplexing~\cite{Mower2011,Krutyanskiy2024}, which are commonly proposed ways of improving the rate of entanglement generation~\cite{Collins2007,Munro2010,Dam2017}.
Moreover, the simultaneous generation of multiple links opens up the use of stronger purification protocols, thereby providing an improvement to system fidelity metrics as well as the rate.
Note again that the physical implementation of the buffer must allow for such multiplexing and for purification of the generated entanglement.
The second generalisation from previous work is that we now model the system in discrete time rather than continuous time, which is more accurate to real-world systems, as entanglement generation typically happens in discrete attempts~(see e.g. refs.~\cite{Barrett2005, Togan2010, Bernien2013, Zhou2024}).
Finally, we now derive our solutions for a fully arbitrary purification protocol. In particular, the solutions for performance metrics presented in ref.~\cite{Davies2023a} only apply for purification protocols with a constant probability of success (i.e. the success probability must be independent of the fidelity of the buffered quantum state).
However, in this work, we remove this assumption and derive closed-form solutions for the availability and the average consumed fidelity of buffers that use arbitrary purification protocols. This is in contrast to \cite{Elsayed2023}, where although performance metrics are derived analytically and the probability of success is not necessarily constant, their computation requires solving a linear system of equations, which has dimension that scales with system parameters such as the memory lifetime.


\begin{figure}[th]
    \centering
    \includegraphics[width=0.9\linewidth]{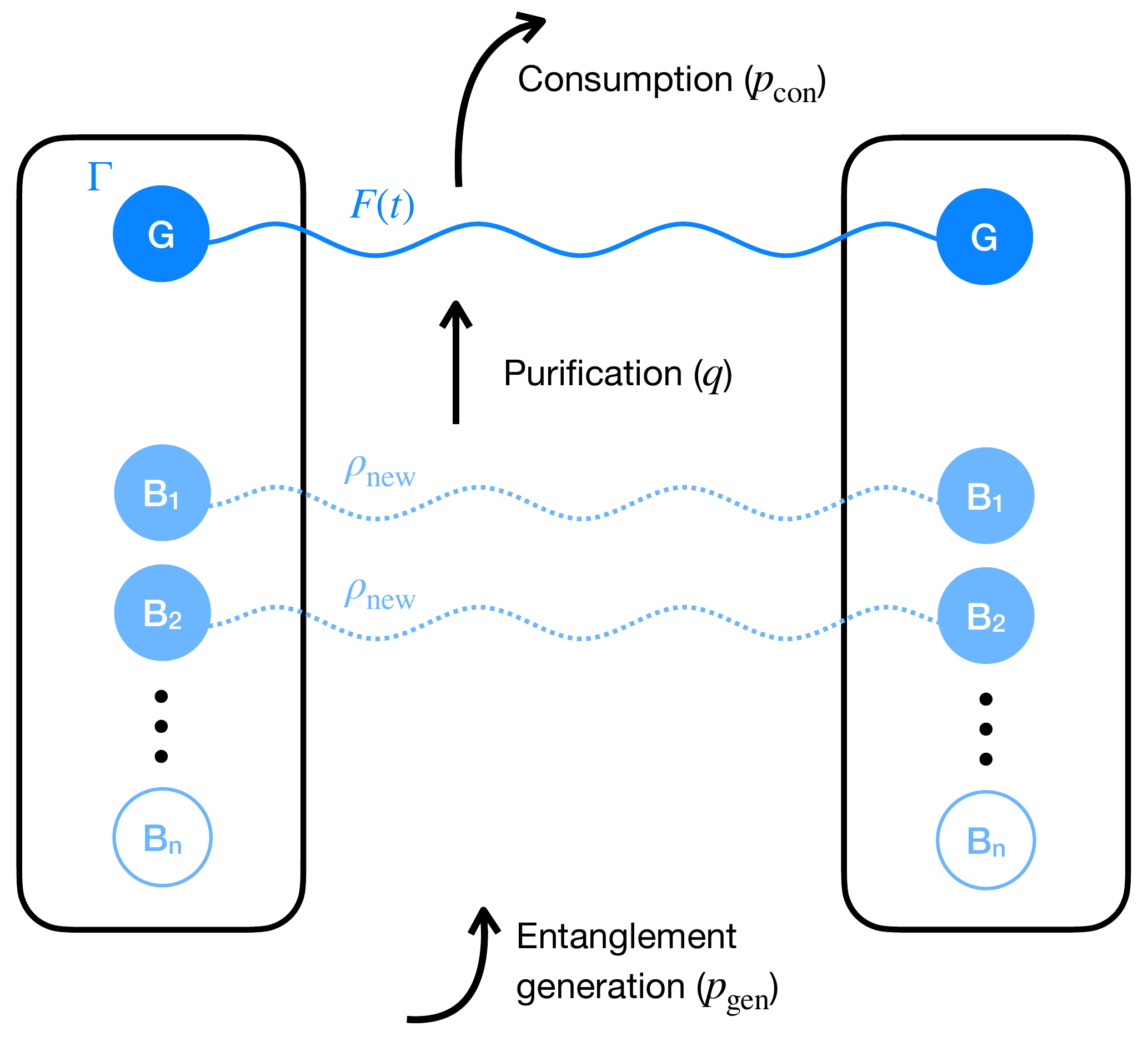}
    \caption{\textbf{Illustration of the 1G$n$B buffering system.}
    Entanglement generation is attempted in every bad memory (B$_1$, $\dots$, B$_n$) simultaneously in each time slot. Each memory succeeds with probability $p_\mathrm{gen}$.
    The good memory, G, stores entanglement, which decoheres at rate $\Gamma$.
    When G is full and new entanglement is generated in any of the B memories, a purification subroutine is applied with probability $q$.
    Entanglement is consumed from G with probability $p_\mathrm{con}$ in each time slot.
    }
    \label{fig.1GnB-sketch}
\end{figure}

In this letter, we firstly provide analytical expressions for the availability, $A$, and the average consumed fidelity, $\overline F$, of the 1G$n$B system (see model description in Section~\ref{sec.system}).
Then, we use these expressions to find fundamental limits to the performance of entanglement buffers.
Lastly, we investigate how the 1G$n$B system should be operated: because there is a large amount of freedom in the choice of purification protocols, it is not clear what purification strategies should be employed to maximise $A$ and $\overline F$.
For example, would it be beneficial to use a purification subroutine that provides a larger fidelity boost (which could increase $\overline F$) if this comes at the cost of a higher probability of failure (which means losing high-quality entanglement more frequently, decreasing $A$ and maybe also $\overline F$)?
Our main findings are the following:
\begin{itemize}
    \item {\sc Monotonic performance} --
    We show that, to maximise the average consumed fidelity, purification must be performed as much as possible, i.e. every time entanglement is generated in any of the bad memories. This holds even if the purification protocol has a large probability of failure.
    Nevertheless, there is a tradeoff between both performance metrics, since the availability decreases when purification is performed more frequently.
    \item {\sc Fundamental bounds} --- We provide upper and lower bounds for the availability and the average consumed fidelity of a 1G$n$B system, which constitute fundamental limits to the impact that a purification policy can have on the performance.
    \item {\sc Simple can be better than optimal} --- Simple purification protocols can greatly outperform advanced purification protocols that maximise the fidelity of the output entangled state. For example, we find that a buffering system using the 2-to-1 purification protocol from ref.~\cite{Deutsch1996} (known as DEJMPS) can outperform a system using the \mbox{$n$-to-1} optimal bilocal Clifford protocol from ref.~\cite{Jansen2022}, in terms of both availability and average consumed fidelity.
\end{itemize}

\section{The 1G$n$B system}\label{sec.system}
In this section, we provide a short description of the entanglement buffering setup (see Figure~\ref{fig.1GnB-sketch}). The goal of the system is to buffer bipartite entanglement shared between two nodes. These nodes could be, for example, two end users in a quantum network or two processors in a quantum computing cluster.
We refer to bipartite entanglement as an \emph{entangled link} between the two nodes.
In the 1G$n$B system:
\begin{itemize}
    \item Each node has \emph{one long-lived memory} (good, G) and \emph{$n$ short-lived memories} (bad, B).
    
    \item The G memories are used to store the entangled link. We assume \emph{the link stored in memory is a Werner state} (any bipartite state can be transformed into a Werner state with the same fidelity by applying extra noise, a process known as \emph{twirling}~\cite{Bennett1996a,Horodecki1999}).
    Such a state can be parametrised with its fidelity to the target maximally entangled state, $F$.
    
    \item The entangled link stored in G is subject to \emph{depolarising noise} with memory lifetime~$1/\Gamma$, which causes an exponential decay in fidelity with rate $\Gamma$.
    That is, if the link in memory has an initial fidelity $F$, after time $t$ this reduces to
    \begin{equation}
        F \mapsto \left(F-\frac{1}{4}\right) e^{-\Gamma t} + \frac{1}{4}.
        \label{eq.fidelity_decay}
    \end{equation}

    \item Before each entanglement generation attempt, the system checks if a new \emph{consumption request} has arrived. The arrival of a new consumption request in each time step occurs with probability $p_\mathrm{con}$. If there is a link stored in memory G when a consumption request arrives, the link is immediately consumed and therefore removed from the memory. This takes up the entire time step. If there is no link available, the request is discarded and the system proceeds with the entanglement generation attempt.
    
    \item The B memories are used to generate new entangled links. In the literature, these are usually called communication or broker qubits~\cite{Benjamin2006}. This communication qubit can be, for example, the electron spin in a nitrogen-vacancy center~\cite{Bernien2013,Rozpedek2019,Lee2022}.
    Every time step that is not taken up by consumption, \emph{entanglement generation} is attempted in all $n$ bad memories simultaneously, e.g. using frequency or spatial multiplexing, and each of them independently generates an entangled link with probability $p_\mathrm{gen}$.
    This means that, after each multiplexed attempt, the number of successfully generated links follows a binomial distribution with parameters $(n,p_{\mathrm{gen}})$.
    Each of these new links is of the form $\rho_\mathrm{new}$, which is an arbitrary state that depends on the entanglement generation protocol employed (see e.g. refs. \cite{Barrett2005,Campbell2008,Togan2010,Jones2016}).

    \item When $k\geq1$ entangled links are generated in the B memories and the G memory is empty, one of the links is \textit{transferred} to the G memory. If the G memory is occupied, the new links may be used to \mbox{\emph{purify}} the link in memory. The system decides to attempt purification with probability $q$. If the system does not decide to purify, the new links are discarded. If the system decides to attempt purification and this succeeds, then the resultant link in the G memory is twirled, converting it into the form of a Werner state with the same fidelity.
\end{itemize}

Table \ref{tab.variables} summarises all variables of the system.
Next, we discuss how to model the purification strategy.

\subsection{Purification policy}
The main degree of freedom in the 1G$n$B system is the choice of purification protocol. This is given by the purification policy.
\begin{definition}
    The \emph{purification policy} $\pi$ is a function that indicates the purification protocol that must be used when $k$ links are generated in the B memories,
    \begin{equation}
        \pi: k \in \{1, \dots,n\} \mapsto \pi(k) \in \mathcal{P}_{k+1},
    \end{equation}
    where $\mathcal{P}_{m}$ is the set of all $m$-to-1 purification protocols.
\end{definition}


    Protocol $\pi(k)$ of purification policy $\pi$ is the \mbox{$(k+1)$-to-1} purification protocol that is used when $k$ new links are generated in the B memories (examples of basic protocols can be found in refs.~\cite{Deutsch1996,Bennett1996,Dehaene2003}; see ref.~\cite{Dur2007} for a survey). The purification protocol updates the fidelity of the buffered link from $F$ to $J_k(F)$, where
    \begin{equation}\label{eq.Jk_def}
            J_k(F) = \frac{1}{4} + \frac{a_k(\rho_{\mathrm{new}})\left(F-\frac{1}{4}\right)+b_k(\rho_{\mathrm{new}})}{c_k(\rho_{\mathrm{new}})\left(F-\frac{1}{4}\right)+d_k(\rho_{\mathrm{new}})}.
    \end{equation}
    We call $J_k$ the \emph{jump function of protocol $\pi(k)$}.
    The protocol succeeds with probability
    \begin{equation}\label{eq.pk_def}
        p_k(F) = c_k(\rho_{\mathrm{new}}) \left(F-\frac{1}{4}\right) + d_k(\rho_{\mathrm{new}}),
    \end{equation}
    otherwise all of the links (including the buffered one) are discarded and the G memory becomes empty.
In Appendix B of ref. \cite{Davies2023a}, the forms (\ref{eq.Jk_def}) and (\ref{eq.pk_def}) for the output fidelity and success probability are justified, given that the buffered link is a Werner state with fidelity $F$ and any other input state is given by the same arbitrary density matrix $\rho_\mathrm{new}$.
We therefore see that the action of any purification protocol on the fidelity of the buffered link is determined by the four parameters $a_k(\rho_\mathrm{new})$, $b_k(\rho_\mathrm{new})$, $c_k(\rho_\mathrm{new})$, $d_k(\rho_\mathrm{new})$.
In Appendix~\ref{app.abcd}, we discuss the values that these coefficients can take. As an example, we also provide the explicit form of these coefficients for the well-known 2-to-1 DEJMPS protocol\cite{Deutsch1996}.
    
Lastly, note that purification policy $\pi$ employs protocol $\pi(k)$ when $k$ new links are generated. However, this does not mean that all the new links are used in the protocol. For example, a policy may simply replace the link in memory with a newly generated link and ignore the rest of the new links.

\subsection{Fidelity of the buffered entanglement}
Given the system description, we now view 1G$n$B as a discrete-time stochastic process. In particular, at time $t$ the state of the system is the fidelity $F(t)$ of the buffered link, as this is the only quantity that can change over time. If there is no link in the buffered memory at time $t$, we let $F(t)=0$. This is for notational convenience, as recalling the decoherence (\ref{eq.fidelity_decay}), one can never reach zero fidelity if there is a link present.

We now outline the characteristic behaviours of $F(t)$ when moving from time $t$ to time $t+1$. 

If $F(t)=0$, then if entanglement generation is unsuccessful, in the next time step the fidelity will remain at that value if entanglement generation was unsuccessful. If entanglement generation is successful, in the next time step the fidelity will be $F_{\mathrm{new}}$, where $F_{\mathrm{new}} = \bra{\Phi_{00}}\rho_{\mathrm{new}}\ket{\Phi_{00}}$ is the fidelity of freshly generated links. We will assume that $F_{\mathrm{new}}>1/4$.

If $F(t)>0$, then in the next time step this could evolve in one of the following ways: ($i$) if no purification is attempted then the fidelity simply decoheres by one unit of time according to (\ref{eq.fidelity_decay}); ($ii$) if $k$ new links are generated and purification is successfully performed, the fidelity decoheres by one time step and is then mapped according to the corresponding jump function (\ref{eq.Jk_def}); ($iii$) if a consumption request has arrived or if purification fails, the link is removed and the system becomes empty.

In Figure~\ref{fig.fidelity-vs-time-sketch}, we illustrate an example of how the fidelity may evolve. 

In the following subsection, we define the two performance metrics: the availability and the average consumed fidelity. We then present simple closed-form solutions for these two performance metrics in the 1G$n$B system.

\begin{figure}[th]
    \centering
    \includegraphics[width=\linewidth]{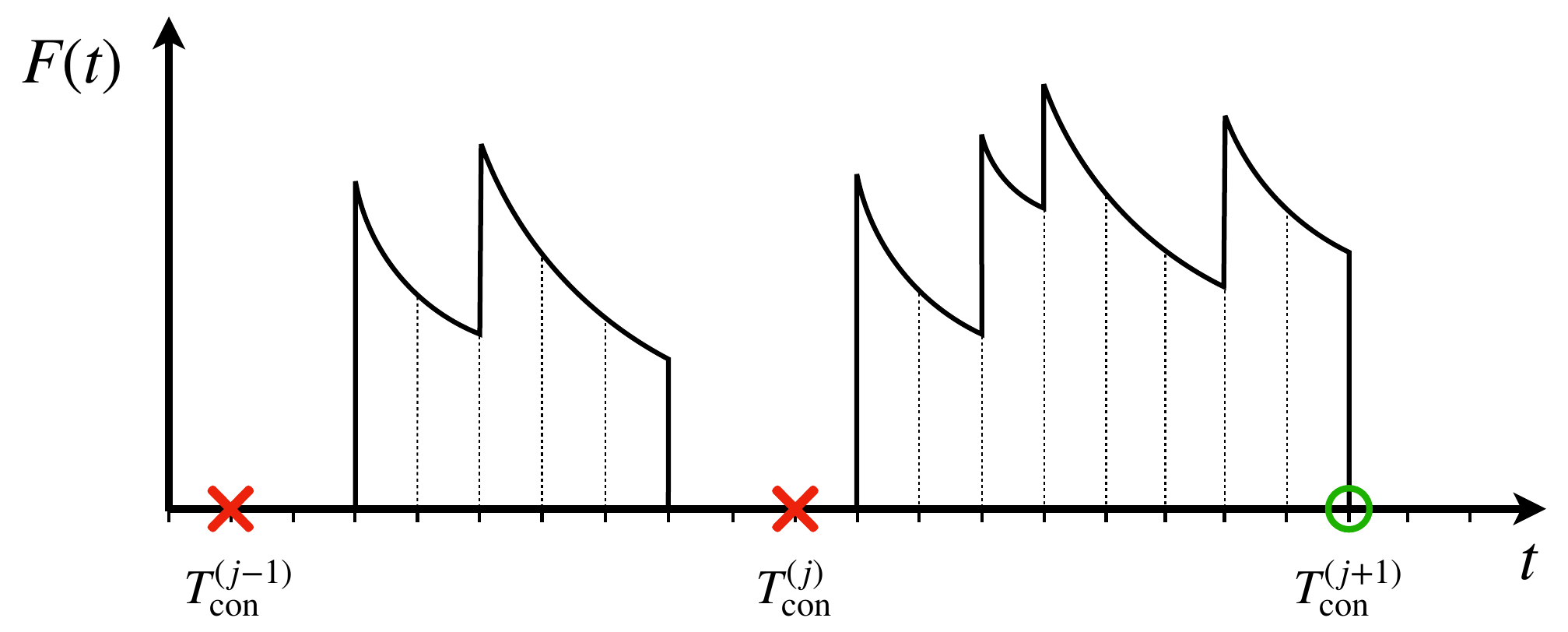}
    \caption{\textbf{Example dynamics of the 1G$n$B system.} Here, the fidelity $F(t)$ of the link in the G memory is plotted against time. The vertical lines represent discretisation of time. The jumps in fidelity occur when the link is purified successfully. In between purifications, the link is subject to decoherence and the fidelity decreases. The link in the G memory is removed due to either failed purification or consumption. When there is no link in memory, $F(t) = 0$. The $j$-th consumption request arrives at time $T_{\mathrm{con}}^{(j)}$. The green tick (red crosses) represent when a consumption request is (is not) served. 
    }
    \label{fig.fidelity-vs-time-sketch}
\end{figure}

\renewcommand{\arraystretch}{1.2}
\begin{table}[t]
    \centering
    \caption{Parameters of the 1G1B system. See main text for further details.
    }\label{tab.variables}
    \vspace{-2mm} 
\begin{tabular}{p{0.1\linewidth}p{0.75\linewidth}}
\multicolumn{2}{c}{\textbf{Hardware}}\\[1pt]
\hline
    $n$ & Number of short-lived memories\\
	$p_\mathrm{gen}$ & Probability of successful entanglement generation attempt\\
    $\rho_\mathrm{new}$ & Bipartite entangled state produced after a successful entanglement generation\\
	$\Gamma$ & Rate of decoherence\\[5pt]
\multicolumn{2}{c}{\textbf{Application}}\\[1pt]
\hline
    $p_\mathrm{con}$ & Probability of consumption request\\[5pt]
\multicolumn{2}{c}{\textbf{Purification policy}}\\[1pt]
\hline
    $q$ & Probability of attempting purification immediately after a successful entanglement generation attempt (otherwise the new links are discarded)\\
    $J_k(F)$ & Jump function. Given a buffered link with fidelity $F$, $J_k(F)$ is the fidelity immediately following a successful purification using $k$ newly generated links. Rational function with coefficients $a_k,b_k,c_k,d_k$ -- see (\ref{eq.Jk_def}).\\
    $p_k(F)$ & Probability of successful purification using $k$ newly generated links. Linear function with coefficients $c_k,d_k$ -- see (\ref{eq.pk_def}).
\end{tabular}
\end{table}

\subsection{Buffering performance}\label{subsec.performance}
The first step towards the design of useful entanglement buffers is to determine a suitable way to measure performance.
Here, we define two performance metrics for entanglement buffers -- these quantities were proposed in ref. \cite{Davies2023a}, where they were used to study the 1G1B system.
Then, we provide exact, closed-form expressions for these two performance metrics in the 1G$n$B system.


Our first metric is the \textit{availability}. A user is able to consume entanglement only when there is a link available in memory~G at the time of requesting the entanglement.
Therefore, an important performance measure is the probability that entanglement is available when a consumption request arrives. 

\begin{definition}[Availability]\label{def.A}
The availability $A$ is the probability that there is an entangled link present in memory~G when a consumption request arrives.
This is defined as
\begin{equation}
    A = \lim_{m\rightarrow \infty}\frac{1}{m} \sum_{j = 1}^{m} \mathbbm{1}_{\text{link exists}}(T_{\mathrm{con}}^{(j)}),
\end{equation}
where $T_\mathrm{con}^{(j)}$ is the arrival time of the $j$-th consumption request, and $\mathbbm{1}_{\text{link exists}}(t)$ is and indicator function that takes the values one if there is a link stored in memory G at time $t$, and zero otherwise.
\end{definition}

The availability may be seen as a rate metric: it determines the rate at which entanglement can be consumed.
The second performance metric is the \textit{average consumed fidelity}, which captures the average quality of consumed entanglement.

\begin{definition}[Average consumed fidelity]\label{def.F}
The average consumed fidelity is the average fidelity of the entangled link upon consumption, conditional on a link being present. More specifically,
\begin{equation}
    \overline{F} = \lim_{m\rightarrow \infty} \frac{\sum_{j = 1}^{m} \mathbbm{1}_{\text{link exists}}\left(T_{\mathrm{con}}^{(j)}\right)\cdot F^-(T_{\mathrm{con}}^{(j)})}{\sum_{j = 1}^{m} \mathbbm{1}_{\text{link exists}}(T_{\mathrm{con}}^{(j)})},
    \label{eq.Fbar-def}
\end{equation}
where 
\begin{equation}
    F^-(t) =  \begin{cases}
        e^{-\Gamma}\left(F(t-1)-\frac{1}{4}\right) + \frac{1}{4}, \text{ if } F(t-1)>0, \\ 
        0, \text{ if } F(t-1) = 0.
    \end{cases}
    \label{eq.def-F(t^-)}
\end{equation}
is the fidelity of the link stored in memory G at the end of the previous timestep at time $t-1$ (and therefore consumed at time $t$), and $T_\mathrm{con}^{(j)}$ is the arrival time of the $j$-th consumption request.
\end{definition}
The indicator function in the numerator of (\ref{eq.Fbar-def}) is included for clarity, but is not necessary: if there is no link in memory at time $t$, then $F(t)=0$ by definition. 

We note that the Definitions \ref{def.A} and \ref{def.F} are presented differently to how they were in ref.~\cite{Davies2023a}. This is because the new definitions have a clearer operational meaning, as they are from the viewpoint of the consumer. However, in Appendix~\ref{app.viewpoint} we show that these metrics are equivalent for the 1G$n$B system. 

As our first main result, we derive analytical solutions for the availability and the average consumed fidelity in the 1G$n$B system

\begin{theorem}[Formula for the availability]\label{theorem.A}
The availability of the 1G$n$B system is given by
\begin{equation}\label{eq.A_1GnB}
    A = \frac{\mathbb{E}[T_\mathrm{occ}]}{\mathbb{E}[T_{\mathrm{gen}}]+\mathbb{E}[T_\mathrm{occ}]}\text{ a.s.}
\end{equation}
where $T_{\mathrm{gen}}$ is the time to generate new entangled links and $T_\mathrm{occ}$ is the time from when the G memory becomes occupied until it is emptied due to consumption or to failed purification. The expected values are given by
\begin{equation}\label{eq.Tgen}
    \mathbb{E}[T_{\mathrm{gen}}] = \frac{1}{1- \left(1-p_{\mathrm{gen}} \right)^n}
\end{equation}
and 
\begin{equation}\label{eq.Tocc}
    \mathbb{E}[T_\mathrm{occ}] = 
        \frac{1-\tilde{A}+\tilde{C} (F_\mathrm{new}-\frac{1}{4})}
        {\big[(1-\tilde{A})(1-\tilde{D}) - \tilde{B}\tilde{C} \big] \tilde{P} },
\end{equation}
with
\begin{equation*}
\begin{split}
    \tilde{P} &\coloneqq p_{\mathrm{con}}+ q  \left(1- \left(1-p_{\mathrm{gen}} \right)^n\right) (1-p_{\mathrm{con}}),\\
    \tilde{A} &\coloneqq \frac{q  (1-p_{\mathrm{con}}) \tilde{a}}{e^{\Gamma} -   \left(1- q+ q\left(1-p_{\mathrm{gen}} \right)^n\right) (1-p_{\mathrm{con}})},\\
    \tilde{B} &\coloneqq \frac{ q (1-p_{\mathrm{con}})\tilde{b}}{p_{\mathrm{con}} + q\left(1-\left(1-p_{\mathrm{gen}} \right)^n\right) (1-p_{\mathrm{con}})}, \\
    \tilde{C} &\coloneqq \frac{q  (1-p_{\mathrm{con}})\tilde{c}}{e^{\Gamma} -   \left(1- q+ q\left(1-p_{\mathrm{gen}} \right)^n\right) (1-p_{\mathrm{con}})},\\
    \tilde{D} &\coloneqq \frac{ q (1-p_{\mathrm{con}})\tilde{d}}{p_{\mathrm{con}} + q\left(1-\left(1-p_{\mathrm{gen}} \right)^n\right) (1-p_{\mathrm{con}})},
\end{split}
\end{equation*}
and 
\begin{align*}
    \tilde{a} &\coloneqq \sum_{k=1}^n a_k \cdot \binom{n}{k}(1-p_{\mathrm{gen}})^{n-k}p_{\mathrm{gen}}^k, \\
    \tilde{b} &\coloneqq \sum_{k=1}^n b_k \cdot \binom{n}{k}(1-p_{\mathrm{gen}})^{n-k}p_{\mathrm{gen}}^k, \\
    \tilde{c} &\coloneqq \sum_{k=1}^n c_k \cdot \binom{n}{k}(1-p_{\mathrm{gen}})^{n-k}p_{\mathrm{gen}}^k, \\
    \tilde{d} &\coloneqq \sum_{k=1}^n d_k \cdot \binom{n}{k}(1-p_{\mathrm{gen}})^{n-k}p_{\mathrm{gen}}^k.
\end{align*}
\end{theorem}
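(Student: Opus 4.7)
The plan is to view 1G$n$B as an alternating renewal process, with renewal epochs at the times at which the G memory becomes empty. Between two consecutive epochs the memory sits first in an \emph{empty phase} of duration $T_\mathrm{gen}$ (waiting for at least one successful generation in the $n$ B memories) and then in an \emph{occupied phase} of duration $T_\mathrm{occ}$ (ending upon consumption or failed purification); since the fidelity entering the occupied phase is always $F_\mathrm{new}$, the renewal cycles are i.i.d. The ergodic theorem for regenerative processes then gives almost surely that the long-run fraction of time the G memory is occupied equals $\mathbb{E}[T_\mathrm{occ}]/(\mathbb{E}[T_\mathrm{gen}]+\mathbb{E}[T_\mathrm{occ}])$. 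Since consumption requests form an independent Bernoulli$(p_\mathrm{con})$ stream, the discrete analog of PASTA identifies this time-average with the availability $A$ of Definition~\ref{def.A}, giving (\ref{eq.A_1GnB}).

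The expression (\ref{eq.Tgen}) for $\mathbb{E}[T_\mathrm{gen}]$ is immediate: during the empty phase there is nothing to consume, and each time step independently produces at least one new link with probability $1 - (1-p_\mathrm{gen})^n$, so $T_\mathrm{gen}$ is geometric with that parameter.

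The heart of the proof is (\ref{eq.Tocc}). I would introduce the shifted variable $G(t) \coloneqq F(t) - \tfrac14$, in which decoherence acts as $G \mapsto e^{-\Gamma}G$, the success probability is $p_k(F)= c_k G + d_k$, and the jump satisfies $J_k(F) - \tfrac14 = (a_k G + b_k)/(c_k G + d_k)$. Let $\tau(G) \coloneqq \mathbb{E}[T_\mathrm{occ} \mid G(0)=G]$ and condition on a single occupied-phase time step: with probability $p_\mathrm{con}$ the phase ends; with probability $(1-p_\mathrm{con})\bigl[(1-q)+q(1-p_\mathrm{gen})^n\bigr]$ the state merely decoheres; and for each $k\ge 1$ the binomial weight $\binom{n}{k}p_\mathrm{gen}^k(1-p_\mathrm{gen})^{n-k}$ contributes a purification attempt on the decohered state whose survival probability is linear in $G$ and whose post-jump value is the corresponding rational function.

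The observation that unlocks the closed form is that the ansatz $\tau(G) = \alpha G + \beta$ is self-consistent: in the purification term the survival-probability factor $c_k e^{-\Gamma}G + d_k$ cancels exactly the denominator of the jumped shifted fidelity, leaving an expression affine in $G$. Averaging the purification coefficients over the Binomial distribution on $k \ge 1$ produces the effective parameters $\tilde a, \tilde b, \tilde c, \tilde d$ appearing in the statement, and matching coefficients of $G$ and of the constant term reduces the recursion to a $2 \times 2$ linear system in $(\alpha,\beta)$ whose coefficient matrix is determined by $\tilde A, \tilde B, \tilde C, \tilde D$ and whose right-hand side carries the factor $1/\tilde P$; Cramer's rule then yields $\alpha$ and $\beta$ with common denominator $(1-\tilde A)(1-\tilde D) - \tilde B\tilde C$, and evaluation at $G_0 = F_\mathrm{new} - \tfrac14$ reproduces (\ref{eq.Tocc}) exactly. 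The main obstacle is purely algebraic bookkeeping: executing the Binomial average over $k$ cleanly while keeping linear and constant contributions separate, and checking a posteriori that the determinant is nonzero and that the geometric series implicit in the derivation converge for admissible model parameters.
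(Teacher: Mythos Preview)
Your approach is correct and reaches the same $2\times 2$ linear system as the paper, but the route is genuinely different. The paper does not do a one-step first-passage analysis on $\tau(G)=\mathbb{E}[T_{\mathrm{occ}}\mid G(0)=G]$; instead it decomposes the occupied phase at its \emph{jump times} (the times at which purification is attempted), writes $\mathbb{E}[T_{\mathrm{occ}}]=\mathbb{E}[S_1](1+y)$ with $y=\sum_{i\ge 1}\mathrm{P}(N>i)$, and then tracks the pair of sequences $x_i=\mathbb{E}\bigl[H_i\,\mathbbm{1}_{\{N>i\}}\bigr]$, $y_i=\mathrm{P}(N>i)$, showing they obey the coupled linear recursion $x_i=\tilde A\,x_{i-1}+\tilde B\,y_{i-1}$, $y_i=\tilde C\,x_{i-1}+\tilde D\,y_{i-1}$. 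Summing over $i$ produces the same determinant $(1-\tilde A)(1-\tilde D)-\tilde B\tilde C$ you obtain via Cramer's rule. Your argument is shorter and more transparent for $T_{\mathrm{occ}}$ alone: the cancellation of the success-probability factor $c_k e^{-\Gamma}G+d_k$ against the jump-function denominator is the same mechanism that makes the paper's recursion linear, but you exploit it in one stroke rather than iteratively. The paper's packaging, on the other hand, has the advantage that the quantity $x=\sum_i x_i$ it computes alongside $y$ is exactly what is needed for the average consumed fidelity in Theorem~\ref{theorem.F}, so both theorems fall out of the same system; your method would require a separate (though analogous) ansatz for $\overline F$.

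One small point you should tighten: showing that the affine ansatz is \emph{self-consistent} only exhibits a solution of the one-step functional equation, not that this solution equals the true $\tau$. The gap closes easily: any bounded $\phi$ satisfying the homogeneous version of your recursion obeys $\lVert\phi\rVert_\infty\le(1-p_{\mathrm{con}})\lVert\phi\rVert_\infty$, since the probability the occupied phase survives one step is at most $1-p_{\mathrm{con}}$ uniformly in $G$; hence $\phi\equiv 0$ and the affine solution is unique among bounded functions. State this explicitly and your proof is complete.
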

\begin{proof}
    See Appendix~\ref{app.solutions}.
\end{proof}

From Theorem \ref{theorem.A}, we see that the availability depends on all the parameters of the system (listed in Table~\ref{tab.variables}), including the noise level $\Gamma$. 
The latter may come as a surprise, since one would expect noise to have an impact on the average consumed fidelity but maybe not on the availability, which is only affected by processes that fill or deplete the G memory. These processes are entanglement generation, failed purification, and consumption.
In our model, the probability of failed purification depends via (\ref{eq.pk_def}) on the fidelity of the buffered link, which is in turn affected by the level of noise. As a consequence, noise has an indirect effect on the availability.

\begin{theorem}[Formula for the average consumed fidelity]\label{theorem.F}
The average consumed fidelity of the 1G$n$B system is given by
\begin{equation}\label{eq.F}
    \overline{F} = \frac{\tilde w F_\mathrm{new} + \tilde x}
    {\tilde y F_\mathrm{new}  + \tilde z} \text{ a.s.}
\end{equation}
with
\begin{align*}
    \tilde w &\coloneqq p_\mathrm{con} + q (1-p_\mathrm{con}) \left( p_\mathrm{gen}^* + \frac{1}{4} \tilde c - \tilde d \right),\\
    \tilde x &\coloneqq \frac{1}{4} \left[ e^\Gamma -1  + q(1-p_\mathrm{con}) \left(-\tilde a + 4\tilde b - \frac{1}{4} \tilde c + \tilde d \right) \right],\\
    \tilde y &\coloneqq q (1-p_\mathrm{con}) \tilde c,\\
    \tilde z &\coloneqq e^\Gamma -1 + p_\mathrm{con} + q (1-p_\mathrm{con}) \left(p_\mathrm{gen}^* - \tilde a - \frac{1}{4} \tilde c \right),
\end{align*}
where $p_\mathrm{gen}^* = 1 - (1-p_{\mathrm{gen}})^n$, and $\tilde a$, $\tilde b$, $\tilde c$, and $\tilde d$ are given in Theorem \ref{theorem.A}. 
\end{theorem}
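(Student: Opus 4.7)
The plan is to reduce $\overline F$ to a computation of stationary moments of the fidelity Markov chain $\{F(t)\}$, and then solve a single balance equation. Since Theorem~\ref{theorem.A} shows that the expected cycle length $\mathbb{E}[T_{\mathrm{gen}}] + \mathbb{E}[T_\mathrm{occ}]$ is finite, the chain is positive recurrent and admits a unique stationary distribution $\pi$. Consumption requests arrive as an i.i.d.\ Bernoulli process that is independent of the buffer state, so a discrete-time Bernoulli-arrivals-see-time-averages argument applied to the ratio in (\ref{eq.Fbar-def}), together with the ergodic theorem, yields almost surely
\begin{equation*}
\overline F = \frac{\mathbb{E}_\pi\!\left[F^-(t)\,\mathbbm{1}_{F(t-1)>0}\right]}{\mathbb{P}_\pi\!\left[F(t-1)>0\right]}.
\end{equation*}
Using (\ref{eq.def-F(t^-)}), this equals $e^{-\Gamma}(m/A - 1/4) + 1/4$, where $A$ is the availability from Theorem~\ref{theorem.A} and $m := \mathbb{E}_\pi[F(t)\,\mathbbm{1}_{F(t)>0}]$. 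Everything thus reduces to computing $m$.

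For $m$, I would invoke stationarity: $\mathbb{E}_\pi[F(t+1)\,\mathbbm{1}_{F(t+1)>0}] = m$. Conditioning on $F(t)$ and enumerating the transition branches described in Section~\ref{sec.system} -- consumption; no consumption with $k \in \{0, \dots, n\}$ newly generated links; purification attempted or not; success or failure -- every conditional expectation can be made linear in $F(t)$ by using the identity
\begin{equation*}
p_k(F)\,J_k(F) = \frac{p_k(F)}{4} + a_k\!\left(F - \frac{1}{4}\right) + b_k,
\end{equation*}
which follows directly from (\ref{eq.Jk_def}) and (\ref{eq.pk_def}), together with the decoherence relation $F^-(t+1) - 1/4 = e^{-\Gamma}(F(t) - 1/4)$. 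Taking the $\pi$-expectation and collecting terms produces a linear equation of the form
\begin{equation*}
L\,m = (1-A)(1-p_{\mathrm{con}})\,p_{\mathrm{gen}}^{*}\,F_{\mathrm{new}} + A\,G,
\end{equation*}
where $L$ and $G$ are explicit constants built from $p_{\mathrm{con}}, p_{\mathrm{gen}}, q, \Gamma$, and the averaged purification coefficients $\tilde a, \tilde b, \tilde c, \tilde d$.

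Finally, I would substitute this expression for $m$ into $\overline F = e^{-\Gamma}(m/A - 1/4) + 1/4$, eliminate the remaining $A$ using $(1-A)/A = 1/(p_{\mathrm{gen}}^{*}\,\mathbb{E}[T_\mathrm{occ}])$ from (\ref{eq.A_1GnB}), and substitute the closed form (\ref{eq.Tocc}) for $\mathbb{E}[T_\mathrm{occ}]$. The outcome is a rational function of $F_{\mathrm{new}}$ that I expect to collapse to the ratio of linear functions (\ref{eq.F}). The main obstacle is this final simplification, which is algebraically dense but is pinned down by two identifications with constants from Theorem~\ref{theorem.A}: direct algebra shows that
\begin{equation*}
L\,e^{\Gamma} = e^{\Gamma} - 1 + p_{\mathrm{con}} + q(1-p_{\mathrm{con}})\!\left(p_{\mathrm{gen}}^{*} - \tilde a - \tilde c/4\right) = \tilde z,
\end{equation*}
which matches the constant term of the denominator, and unpacking $\tilde A$ and $\tilde C$ from Theorem~\ref{theorem.A} yields $\tilde C/(1-\tilde A - \tilde C/4) = q(1-p_{\mathrm{con}})\tilde c/(L\,e^{\Gamma}) = \tilde y/\tilde z$, so the denominator collapses to $\tilde y\,F_{\mathrm{new}} + \tilde z$ up to a common factor. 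Analogous bookkeeping on the numerator then delivers $\tilde w$ and $\tilde x$. The conceptual steps are routine; the real work lies in patiently performing these cancellations with the formulas of Theorem~\ref{theorem.A} on hand.
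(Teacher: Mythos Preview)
Your stationary-moment approach is valid and takes a genuinely different route from the paper. The paper never writes a balance equation for $\mathbb{E}_\pi[F\,\mathbbm{1}_{F>0}]$; instead it works entirely inside a single regeneration cycle with the shifted fidelity $H=F-1/4$, passes to the jump-time subchain $\{H_i\}$ sampled at purification/removal events, and introduces the cycle sums $x=\sum_{i\geq1}\mathbb{E}\big[H_i\prod_{j\leq i}\mathbbm{1}_{R_j}\big]$ and $y=\sum_{i\geq1}\mathrm{P}(N>i)$. These obey the coupled recursions $x_i=\tilde A\,x_{i-1}+\tilde B\,y_{i-1}$ and $y_i=\tilde C\,x_{i-1}+\tilde D\,y_{i-1}$, yielding a $2\times2$ linear system whose solution delivers both $\mathbb{E}[T_{\mathrm{occ}}]$ and $\overline H$ at once. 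Your method is more economical once Theorem~\ref{theorem.A} is in hand: the linearity identity $p_k(F)J_k(F)=p_k(F)/4+a_k(F-1/4)+b_k$ closes the first-moment balance in one scalar equation, and your identifications $L\,e^\Gamma=\tilde z$ and $\tilde C/(1-\tilde A-\tilde C/4)=\tilde y/\tilde z$ are exactly what make the denominator collapse. The paper's route is self-contained and exposes $\tilde A,\tilde B,\tilde C,\tilde D$ as the natural transfer coefficients of the jump chain; yours trades that structural insight for brevity. One correction: when $F(t)=0$ the model stipulates that an arriving consumption request is discarded and generation is \emph{still} attempted in that slot, so the empty-state contribution to your balance equation is $(1-A)\,p_{\mathrm{gen}}^{*}\,F_{\mathrm{new}}$, without the extra $(1-p_{\mathrm{con}})$ factor --- with this fix the cancellation against the $\tilde B\tilde C$ term in $(1-\tilde A)(1-\tilde D)-\tilde B\tilde C$ goes through and the expression does reduce to~(\ref{eq.F}).
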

\begin{proof}
    See Appendix~\ref{app.solutions}.
\end{proof}

We note that both $A$ and $\overline F$ have been defined as random variables in Definitions~\ref{def.A} and \ref{def.F}. However, as shown in Theorems~\ref{theorem.A} and \ref{theorem.F}, these quantities are almost surely deterministic functions of the system parameters.
For clarity and convenience, we will adopt a slight abuse of notation and treat $A$ and $\overline F$ as deterministic functions.
This convention will be maintained throughout the remainder of the text.

\section{Buffering system design}\label{sec.results}
In this section, we discuss our main findings after analysing the performance of the 1G$n$B system.
In Subsection~\ref{subsec.monotonic}, we study the impact of a general purification protocol on the system performance. In particular, it is shown that the availability and the average consumed fidelity are monotonic in the parameter $q$ that determines how frequently the system attempts purification.
In the remaining subsections, we investigate how the choice of purification policy impacts the performance of the buffering system, and we provide heuristic rules for the design of a good purification policy.

\subsection{Monotonic performance}\label{subsec.monotonic}
Each time a B memory successfully generates entanglement, there is the opportunity to purify the buffered link.
This is controlled by the parameter $q$, which is the probability that, after some fresh links are successfully generated, they are used to attempt purification (otherwise they are discarded).
If purification is never attempted ($q=0$), the fidelity of the buffered link will never be increased, although the buffered link will never be lost to failed purification.
If purification is always attempted ($q=1$), the availability and average consumed fidelity might be affected as follows:
\begin{itemize}
    \item Purifying more often means risking the loss of buffered entanglement more frequently, since purification can fail. This suggests availability may be decreasing in $q$.
    However, many purification protocols have a probability of success that is increasing in the fidelity of the buffered link, $F$. This means that, when purification is applied more frequently to maintain a high-fidelity link, subsequent purification attempts are more likely to succeed. Consequently, it is not clear that the availability is decreasing in $q$.
    \item The fidelity of the buffered link increases after applying several purification rounds. However, if purification is applied too greedily, we may lose a high-quality link and we would have to restart the system with a lower-quality link. If a consumption request then arrives, it would only be able to consume low-quality entanglement. Hence, it is not clear that the average consumed fidelity is increasing in $q$.
\end{itemize}
In the following, we address the previous discussion and show that, if purification is always attempted ($q=1$), the availability is actually minimised, while the average consumed fidelity is maximised.
More generally, we show that $A$ and $\overline{F}$ are both monotonic in $q$, given some reasonable conditions on the jump functions $J_k$.
The following results (Propositions \ref{prop.dAdq} and \ref{prop.dFdq}) may be used to answer an important question about the 1G$n$B system: \emph{how frequently should we purify the buffered state in order to maximise $A$ (or $\overline{F}$)?} 
That is, \emph{what value of $q$ optimises our performance metrics?}

\begin{proposition}\label{prop.dAdq}
    The availability is a non-increasing function of $q$, i.e.
    \begin{equation}
        \frac{\partial A}{\partial q} \leq 0.
    \end{equation}
\end{proposition}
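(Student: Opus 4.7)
Since $A = \mathbb{E}[T_\mathrm{occ}]/(\mathbb{E}[T_\mathrm{gen}] + \mathbb{E}[T_\mathrm{occ}])$, and by (\ref{eq.Tgen}) the quantity $\mathbb{E}[T_\mathrm{gen}] = 1/(1-(1-p_\mathrm{gen})^n)$ is a positive constant that does not depend on $q$, the map $x \mapsto x/(\mathbb{E}[T_\mathrm{gen}] + x)$ is strictly increasing on $x>0$. It therefore suffices to show that $\partial \mathbb{E}[T_\mathrm{occ}]/\partial q \leq 0$.

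\textbf{Explicit form and differentiation.} Substituting the definitions of $\tilde A, \tilde B, \tilde C, \tilde D$ from Theorem~\ref{theorem.A} into (\ref{eq.Tocc}) and clearing common factors, I would rewrite $\mathbb{E}[T_\mathrm{occ}]$ in the compact form
\begin{equation*}
    \mathbb{E}[T_\mathrm{occ}] = \frac{N_1(q) + q(1-p_\mathrm{con})\bigl[\tilde c (F_\mathrm{new}-\tfrac{1}{4}) - \tilde a\bigr]}{\bigl[N_1(q) - q(1-p_\mathrm{con})\tilde a\bigr]\bigl[N_2(q) - q(1-p_\mathrm{con})\tilde d\bigr] - q^2(1-p_\mathrm{con})^2\, \tilde b\, \tilde c},
\end{equation*}
with $N_1(q) := e^\Gamma - (1-p_\mathrm{con}) + q(1-p_\mathrm{con})p^*$ and $N_2(q) := p_\mathrm{con} + q(1-p_\mathrm{con})p^*$, where $p^* := 1-(1-p_\mathrm{gen})^n$ and $\tilde a, \tilde b, \tilde c, \tilde d$ are all $q$-independent. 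The numerator is then affine and the denominator is quadratic in $q$. Since $\mathbb{E}[T_\mathrm{occ}]>0$ forces the denominator to be strictly positive, the quotient rule implies that the sign of $\partial \mathbb{E}[T_\mathrm{occ}]/\partial q$ matches the sign of $\mu \cdot \mathrm{Den}(q) - \mathrm{Num}(q)\cdot \mathrm{Den}'(q)$, where $\mu$ is the constant coefficient of $q$ in the numerator. This is itself a polynomial in $q$, and I would organize it by powers of $q$ and then invoke the structural inequalities on $\tilde a, \tilde b, \tilde c, \tilde d$ inherited from the constraints on the jump-function coefficients recalled in Appendix~\ref{app.abcd} (in particular, $p_k(F) \in [0,1]$ on valid fidelities and $J_k$ mapping $[1/4,1]$ into itself) to demonstrate non-positivity.

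\textbf{Main obstacle.} The central difficulty is that the sign of $\partial \mathbb{E}[T_\mathrm{occ}]/\partial q$ is not determined by a single term but by an entire polynomial whose coefficients mix the discriminant-like combination $(p^*-\tilde a)(p^*-\tilde d) - \tilde b \tilde c$ with $p_\mathrm{con}$ and $\Gamma$. The corresponding probabilistic picture is that the direct effect of increasing $q$ (more chances to lose the buffered link via failed purification) competes with an indirect effect (raising the typical buffered fidelity, which in turn raises the per-attempt success probability $p_k(F')$), and one must show the former always dominates. A naive pathwise coupling of two processes with $q_1<q_2$ is tempting, but the fidelity trajectories of the coupled copies diverge as soon as one copy purifies and the other does not, so no straightforward pathwise domination seems to close. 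I would therefore rely on the algebraic route, using that the baseline survival $1-p_\mathrm{con}$ at $q=0$ is the maximal per-step survival obtainable, and that every additional purification attempt contributes a non-negative failure probability $1-p_k(F')$; this provides the sign constraint needed to close the polynomial inequality.
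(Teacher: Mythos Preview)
Your reduction and overall strategy match the paper's proof exactly: reduce to $\partial \mathbb{E}[T_\mathrm{occ}]/\partial q \le 0$, write $\mathbb{E}[T_\mathrm{occ}]$ as an affine-over-quadratic function of $q$, apply the quotient rule, and then control the sign of the resulting polynomial using the physical constraints on $a_k,b_k,c_k,d_k$ from Appendix~\ref{app.abcd}. You have also correctly identified the key discriminant-type quantity $(p^*-\tilde a)(p^*-\tilde d)-\tilde b\tilde c$ and the conceptual tension between the direct and indirect effects of $q$.

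The gap is that your proposal stops precisely where the actual work begins. The paper writes $\mathbb{E}[T_\mathrm{occ}]=(\varepsilon+\varepsilon'q)/(\delta+\delta'q+\delta''q^2)$ and then shows the numerator of the derivative,
\[
\varepsilon(\varepsilon'p_\mathrm{con}-\delta')\;-\;2\varepsilon\delta''q\;-\;\varepsilon'\delta''q^2,
\]
is nonpositive by verifying three separate inequalities: (i) $\varepsilon'p_\mathrm{con}-\delta'\le 0$, which reduces to $\tilde d\le p^*$ and $\tilde d+H_\mathrm{new}\tilde c\le p^*$ (both following from $p_k(F)\le 1$); (ii) $\delta''\ge 0$, i.e.\ $(p^*-\tilde a)(p^*-\tilde d)-\tilde b\tilde c\ge 0$, obtained by bounding $\tilde c\le \tfrac{4}{3}(p^*-\tilde d)$ and $\tilde a+\tfrac{4}{3}\tilde b\le p^*$ and factoring; and (iii) $\varepsilon'\ge 0$, i.e.\ $p^*-\tilde a+H_\mathrm{new}\tilde c\ge 0$, via $a_k\le \tfrac{3}{4}c_k+d_k$ and $c_k\le \tfrac{4}{3}(1-d_k)$. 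None of these is automatic, and each requires a specific chain of the bounds in Appendix~\ref{app.abcd}; your sentence ``invoke the structural inequalities \dots\ to demonstrate non-positivity'' is a placeholder for all three. To complete the proof you must actually carry out these estimates.
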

\begin{proof}
    See Appendix~\ref{app.dAdq}.
\end{proof}

As previously explained, the monotonicity of the availability in $q$ is not a trivial result, and it has fundamental implications.
It allows us to derive upper and lower bounds that apply to 1G$n$B systems using \emph{any} purification policy.
\begin{corollary}
The availability is bounded as
\begin{equation}\label{eq.A_bounds}
    \frac{p_\mathrm{gen}^*\cdot(\gamma + p_\mathrm{con})}{
    \xi + \xi' \cdot p_\mathrm{gen}^* + \xi'' \cdot (p_\mathrm{gen}^*)^2}
    \leq A \leq \frac{p_\mathrm{gen}^*}{p_\mathrm{gen}^*+p_\mathrm{con}},
\end{equation}
with $p_\mathrm{gen}^*\coloneqq1-(1-p_\mathrm{gen})^n$, $\gamma \coloneqq e^\Gamma - 1$, $\xi \coloneqq \gamma p_\mathrm{con} + p_\mathrm{con}^2$, $\xi' \coloneqq 1 + 2\gamma + (2-\gamma) p_\mathrm{con} - 2 p_\mathrm{con}^2 $, and $\xi'' \coloneqq 2 (1-p_\mathrm{con})^2$. Moreover, the upper bound is tight, and for any purification policy is achieved when $q=0$.
\end{corollary}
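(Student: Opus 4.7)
The plan is to leverage Proposition~\ref{prop.dAdq}: because $A$ is non-increasing in $q$, the extrema of $A$ over $q\in[0,1]$ are attained at the endpoints $q=0$ and $q=1$. Before specialising to either endpoint, I would rewrite $A$ in a form that isolates the policy-dependent quantity. Using $\mathbb{E}[T_\mathrm{gen}]=1/p_\mathrm{gen}^*$ from (\ref{eq.Tgen}), equation (\ref{eq.A_1GnB}) becomes
\begin{equation*}
    A \;=\; \frac{p_\mathrm{gen}^*\,\mathbb{E}[T_\mathrm{occ}]}{1+p_\mathrm{gen}^*\,\mathbb{E}[T_\mathrm{occ}]},
\end{equation*}
which is strictly increasing in $\mathbb{E}[T_\mathrm{occ}]$. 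Bounds on $A$ therefore reduce to bounds on $\mathbb{E}[T_\mathrm{occ}]$, for which the algebra is lighter.

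For the upper bound, I would simply specialise to $q=0$. Each of $\tilde A,\tilde B,\tilde C,\tilde D$ carries a factor of $q$ in its numerator, so all four vanish at $q=0$, while $\tilde P$ collapses to $p_\mathrm{con}$. Substituting into (\ref{eq.Tocc}) gives $\mathbb{E}[T_\mathrm{occ}]=1/p_\mathrm{con}$, and hence $A(0)=p_\mathrm{gen}^*/(p_\mathrm{gen}^*+p_\mathrm{con})$. Tightness is automatic: $q=0$ is an admissible setting of the purification-frequency parameter, and the bound is attained by \emph{every} policy, since the coefficients $a_k,b_k,c_k,d_k$ are irrelevant once purification is never attempted.

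For the lower bound, Proposition~\ref{prop.dAdq} gives $A\geq A(q=1)$. Setting $q=1$ in the expressions of Theorem~\ref{theorem.A} turns $\tilde A,\tilde B,\tilde C,\tilde D$ and $\tilde P$ into explicit rational functions of $(\tilde a,\tilde b,\tilde c,\tilde d)$ and of the hardware parameters. I would then bound these in the direction that shortens $\mathbb{E}[T_\mathrm{occ}]$, using the admissibility range of the coefficients $(a_k,b_k,c_k,d_k)$ discussed in Appendix~\ref{app.abcd}, namely that the jump function must map $[1/4,1]$ into itself and the success probability into $[0,1]$. After simplification the result reorganises as a rational expression in $p_\mathrm{gen}^*$ with the stated prefactors $\xi,\xi',\xi''$; the quadratic $(p_\mathrm{gen}^*)^2$ term is a trace of the compound event that new links are generated \emph{and} the resulting purification fails.

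The main obstacle is the lower bound. Unlike the upper bound, which collapses to trivial algebra once $q=0$ is plugged in, $\mathbb{E}[T_\mathrm{occ}]$ at $q=1$ is a ratio of two expressions that are bilinear in $(\tilde a,\tilde b,\tilde c,\tilde d)$, so naive coordinate-wise estimates need not yield a clean form. Identifying the extremal worst-case protocol and verifying that the resulting rational function collapses exactly to the claimed expression is where the real work lies; Proposition~\ref{prop.dAdq} does the conceptual heavy lifting, but the simplification to $\xi,\xi',\xi''$ is a calculation that must be carried out carefully.
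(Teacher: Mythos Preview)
Your proposal is correct and follows essentially the same route as the paper: use Proposition~\ref{prop.dAdq} to reduce to the endpoints $q=0$ and $q=1$, evaluate the upper bound directly at $q=0$ (where $\mathbb{E}[T_\mathrm{occ}]=1/p_\mathrm{con}$), and obtain the lower bound at $q=1$ by bounding the purification coefficients via the admissibility constraints of Appendix~\ref{app.abcd}. The only nuance worth flagging is that the paper does not search for an extremal worst-case policy; it simply writes $A|_{q=1}=p_\mathrm{gen}^*\eta/(p_\mathrm{gen}^*\eta+\Delta)$ with $\eta=\varepsilon+\varepsilon'$, $\Delta=\delta+\delta'+\delta''$, and then lower-bounds $\eta$ and upper-bounds $\Delta$ term-by-term using the inequalities $\tilde d\leq p_\mathrm{gen}^*$, $|\tilde a|\leq p_\mathrm{gen}^*$, $\tilde a\tilde d-\tilde b\tilde c\leq p_\mathrm{gen}^*$, etc., which is why the resulting lower bound is not claimed to be tight.
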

\begin{proof}
    See Appendix~\ref{app.dAdq}.
\end{proof}

We refer to $p_\mathrm{gen}^*$ as the \emph{effective generation probability}, since it is the probability that at least one new link is generated in a single (multiplexed) attempt.

The upper bound from (\ref{eq.A_bounds}) only depends on the effective generation probability and the probability of consumption.
This bound is achievable with any purification policy: to maximise the availability, it suffices to never purify ($q=0$). A special case are deterministic policies (those with $p_k(F)=1$, $\forall k$), which achieve this bound for any $q$.
This upper bound coincides with the tight upper bound found in previous work for a 1G1B system \cite{Davies2023a}.
Note that the 1G1B analysis from ref. \cite{Davies2023a} was done in continuous time, where rates were used instead of probabilities. In this framework, the maximum availability was $\lambda/(\lambda+\mu)$, where $\lambda$ was the (non-multiplexed) entanglement generation rate and $\mu$ was the consumption rate.

Unlike the upper bound, we note that the lower bound from (\ref{eq.A_bounds}) has not yet shown to be tight.
We believe that the availability at $q=1$ of a policy that always fails purification ($c_k=d_k=0$, $\forall k$) constitutes a tight lower bound for any other purification policy.
We leave this proof as future work.

Figure \ref{fig.availability_bounds} shows the upper and lower bounds for the availability from (\ref{eq.A_bounds}) versus $p_\mathrm{gen}^*$ for two different noise levels.
As discussed, only the lower bound is affected by noise. In particular, we have observed that the gap between the bounds is reduced when the noise level increases.
Another remarkable feature is that, when $p_\mathrm{gen}^*$ approaches zero, both upper and lower bounds are equal to $p_\mathrm{gen}^*/p_\mathrm{con}$ to first order in $p_{\mathrm{gen}}$. Hence, in the limit of small effective generation probabilities, the availability also satisfies
\begin{equation}
    A \approx \frac{p_\mathrm{gen}^*}{p_\mathrm{con}}.
\end{equation}

\begin{figure}[t!]
    \centering
    \includegraphics[width=0.9\linewidth]{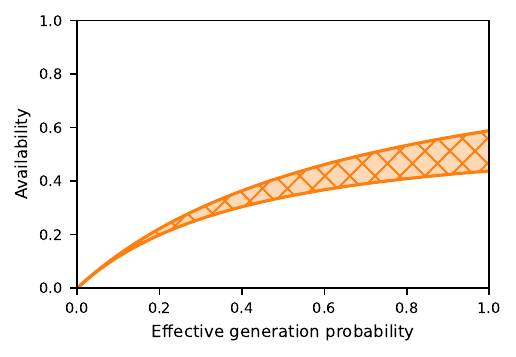}
    \caption{\textbf{The upper bound on the availability is tight and it converges to the lower bound in the limit of small generation probabilities.}
    Upper and lower bounds on the availability from (\ref{eq.A_bounds}), versus the effective generation probability $p_\mathrm{gen}^*=1-(1-p_\mathrm{gen})^n$.
    The availability can only take values within the shaded region.
    In this example we use $\Gamma=1$ and $p_\mathrm{con}=0.7$.
    }
    \label{fig.availability_bounds}
\end{figure}

\begin{proposition}\label{prop.dFdq}
    The average consumed fidelity is a non-decreasing function of $q$, i.e.,
    \begin{equation}
        \frac{\partial \overline F}{\partial q} \geq 0,
    \end{equation}
    if $J_k(F_{\mathrm{new}})\geq F_{\mathrm{new}}$, $\forall k\in\mathbb{N}$.
\end{proposition}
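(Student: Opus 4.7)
The plan is to leverage the closed-form expression in Theorem~\ref{theorem.F}. Crucially, the coefficients $\tilde a, \tilde b, \tilde c, \tilde d$ are independent of $q$, while $\tilde w, \tilde x, \tilde y, \tilde z$ are each affine in $q$. Hence $\overline F(q)=(\tilde w F_\mathrm{new}+\tilde x)/(\tilde y F_\mathrm{new}+\tilde z)$ is a M\"obius transformation in $q$, and writing $P(q)=\tilde w F_\mathrm{new}+\tilde x$, $D(q)=\tilde y F_\mathrm{new}+\tilde z$, its derivative
$$\frac{\partial \overline F}{\partial q} \;=\; \frac{P'(q)\,D(q)-P(q)\,D'(q)}{D(q)^2}$$
has a numerator $P'D-PD'$ that is \emph{constant} in $q$. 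The sign of $\partial_q\overline F$ is therefore the same for every $q$, and it suffices to check non-negativity at the single convenient point $q=0$ (where $D(0)=e^\Gamma-1+p_\mathrm{con}>0$ is clean).

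The next step is to extract the algebraic content of the hypothesis. Writing $f_\mathrm{new}=F_\mathrm{new}-\tfrac14$, the assumption $J_k(F_\mathrm{new})\ge F_\mathrm{new}$ rearranges, using positivity of $p_k(F_\mathrm{new})=c_k f_\mathrm{new}+d_k$, to
$$b_k+(a_k-d_k)f_\mathrm{new}-c_k f_\mathrm{new}^2 \;\ge\; 0.$$
Averaging against the binomial weights $\binom{n}{k}(1-p_\mathrm{gen})^{n-k}p_\mathrm{gen}^k$ (the $k=0$ term vanishes because no purification occurs without new links) promotes this to the same inequality in tilded coefficients,
$$\tilde b+(\tilde a-\tilde d)f_\mathrm{new}-\tilde c f_\mathrm{new}^2 \;\ge\; 0.$$
Two auxiliary bounds I also expect to need are $b_k\ge 0$ and $d_k\le 1$, both of which follow from requiring $J_k(1/4)$ to be a valid Werner-state fidelity and $p_k$ to be a valid probability. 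Together they give $\tilde b\ge 0$ and $\tilde d\le p_\mathrm{gen}^*$.

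To finish, I work with $\overline F-\tfrac14$ instead of $\overline F$ (same derivative, cleaner numerator), substitute the formulas of Theorem~\ref{theorem.F}, and evaluate at $q=0$. After collecting powers of $f_\mathrm{new}$ and consistently using $F_\mathrm{new}=f_\mathrm{new}+\tfrac14$ to kill the stray constants, the numerator $P'(0)D(0)-P(0)D'(0)$ should reduce to
$$(1-p_\mathrm{con})\Big\{(e^\Gamma-1)\bigl[(p_\mathrm{gen}^*-\tilde d)f_\mathrm{new}+\tilde b\bigr]+p_\mathrm{con}\bigl[\tilde b+(\tilde a-\tilde d)f_\mathrm{new}-\tilde c f_\mathrm{new}^2\bigr]\Big\}.$$
The first bracket is non-negative since $\tilde d\le p_\mathrm{gen}^*$ and $\tilde b\ge 0$; the second is non-negative by the averaged hypothesis. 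The main obstacle is purely the bookkeeping required to fold the raw derivative numerator into this clean shape — nothing conceptually deep, but easy to mistake a sign in the midst of the many $1/4$ shifts. Once this reduction is achieved, the M\"obius observation promotes the non-negativity at $q=0$ to $\partial_q\overline F\ge 0$ for all $q\in[0,1]$.
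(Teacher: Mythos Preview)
Your proposal is correct and follows essentially the same approach as the paper: both recognize that $\overline F$ (equivalently $\overline H=\overline F-\tfrac14$) is a M\"obius function of $q$, reduce the monotonicity to the sign of $\alpha\delta-\beta\gamma$ (your ``evaluate $P'D-PD'$ at $q=0$'' is exactly this), and arrive at the identical two-bracket expression, which is then bounded using $\tilde b\ge 0$, $\tilde d\le p_{\mathrm{gen}}^*$, and the averaged hypothesis $\tilde b+(\tilde a-\tilde d)f_{\mathrm{new}}-\tilde c f_{\mathrm{new}}^2\ge 0$. The only cosmetic difference is that the paper factors the second bracket as $\sum_k\binom{n}{k}(1-p_{\mathrm{gen}})^{n-k}p_{\mathrm{gen}}^k\,(\tilde J_k(H_{\mathrm{new}})-H_{\mathrm{new}})\,\tilde p_k(H_{\mathrm{new}})$, which handles the edge case $p_k(F_{\mathrm{new}})=0$ cleanly without needing to divide by it.
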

\begin{proof}
    See Appendix~\ref{app.dFdq}.
\end{proof}

As previously explained, the monotonicity of $\overline F$ in $q$ is not a trivial result. In fact, this behaviour is only certain for purification policies composed of protocols that can increase the fidelity of a newly generated link. That is, when $k$ new links are generated, the protocol applied satisfies $J_k(F_{\mathrm{new}})\geq F_{\mathrm{new}}$.
This is a reasonable condition: otherwise, we would be applying purification protocols that decrease the fidelity of new links.

Proposition \ref{prop.dFdq} also allows us to derive useful upper and lower bounds for $\overline{F}$ that apply to 1G$n$B systems using any purification policy.
\begin{corollary}
The average consumed fidelity is bounded as
\begin{equation}\label{eq.Fbounds}
    \frac{\gamma + 4F_\mathrm{new} p_\mathrm{con}}{4\gamma + 4p_\mathrm{con}}
    \leq \overline F
    \leq \frac{\gamma + 4 F_\mathrm{new} p_\mathrm{con} + 3 (1-p_\mathrm{con}) p_\mathrm{gen}^*}{4\gamma + 4p_\mathrm{con}},
\end{equation}
with $\gamma\coloneqq e^\Gamma -1$. Moreover, the lower bound is tight, and for any purification policy is achieved when $q=0$.
\end{corollary}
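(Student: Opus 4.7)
Both bounds follow by applying Proposition~\ref{prop.dFdq} to restrict attention to the extremes $q=0$ and $q=1$, and then specialising the closed form from Theorem~\ref{theorem.F}. We implicitly assume the hypothesis $J_k(F_\mathrm{new})\geq F_\mathrm{new}$ for all $k$, which is the condition under which Proposition~\ref{prop.dFdq} gives monotonicity of $\overline F$ in $q$.

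\textbf{Lower bound.} By Proposition~\ref{prop.dFdq}, $\overline F$ is minimised over $q$ at $q=0$. Setting $q=0$ in Theorem~\ref{theorem.F} causes every $q$-proportional term to vanish, leaving $\tilde w=p_\mathrm{con}$, $\tilde x=\gamma/4$, $\tilde y=0$ and $\tilde z=\gamma+p_\mathrm{con}$. Substituting into \eqref{eq.F} immediately yields
\begin{equation*}
\overline F\big|_{q=0}=\frac{\gamma+4F_\mathrm{new}p_\mathrm{con}}{4(\gamma+p_\mathrm{con})},
\end{equation*}
which is exactly the claimed lower bound. Tightness is automatic: any policy attains this value as soon as $q=0$, since then purification is never triggered and the claim about achievability is vacuous in the purification coefficients.

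\textbf{Upper bound.} By Proposition~\ref{prop.dFdq}, $\overline F\leq \overline F\big|_{q=1}$, so it suffices to bound $\overline F\big|_{q=1}$ uniformly over all valid policies. The key physical constraints are $J_k(F)\leq 1$, which is equivalent to $a_k\phi+b_k\leq \frac{3}{4}(c_k\phi+d_k)$ at $\phi=F-\frac{1}{4}$, and $p_k(F)\leq 1$. Summing with the binomial weights defining $\tilde a,\tilde b,\tilde c,\tilde d$ gives $\tilde a\phi+\tilde b\leq \frac{3}{4}(\tilde c\phi+\tilde d)$ and $\tilde d\leq p_\mathrm{gen}^*$. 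I would substitute these into the $q=1$ specialisation of \eqref{eq.F}, rewrite $\overline F-\frac{1}{4}$ in terms of $\phi_\mathrm{new}=F_\mathrm{new}-\frac{1}{4}$ (which eliminates several of the $\tilde c$-terms and collapses the numerator into $4p_\mathrm{con}\phi_\mathrm{new}+4(1-p_\mathrm{con})[(p_\mathrm{gen}^*-\tilde d)\phi_\mathrm{new}+\tilde b]$), apply the inequality on $\tilde b$, and drop the manifestly nonnegative $(1-p_\mathrm{con})p_\mathrm{gen}^*$ slack left in the denominator to arrive at the claimed form.

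\textbf{Main obstacle.} The upper bound is not sharp: a direct calculation with the ``perfect'' purification policy ($a_k=0$, $b_k=\frac{3}{4}$, $c_k=0$, $d_k=1$, so every purification succeeds and resets fidelity to one) gives
\begin{equation*}
\overline F\big|_{q=1}^{\mathrm{perfect}}=\frac{\gamma+4F_\mathrm{new}p_\mathrm{con}+4(1-p_\mathrm{con})p_\mathrm{gen}^*}{4\bigl[\gamma+p_\mathrm{con}+(1-p_\mathrm{con})p_\mathrm{gen}^*\bigr]},
\end{equation*}
which is strictly smaller than the claimed bound whenever $F_\mathrm{new}>\frac{1}{4}$. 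Hence the bound cannot be witnessed by any concrete policy; the algebraic simplification must discard certain $(1-p_\mathrm{con})p_\mathrm{gen}^*$ contributions from the denominator while keeping an additive $3(1-p_\mathrm{con})p_\mathrm{gen}^*$ in the numerator. Choosing the relaxations so that they land on precisely the stated expression, and verifying that the discarded slack is genuinely nonnegative for every admissible policy, is the delicate step.
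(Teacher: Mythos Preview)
Your overall strategy matches the paper's: use Proposition~\ref{prop.dFdq} to reduce to $q=0$ and $q=1$, evaluate the first directly, and relax the second via bounds on the purification coefficients. The lower bound is complete and identical to the paper's. For the upper bound, the paper works with the shifted form $\overline H=\overline F-\tfrac14$ at $q=1$,
\[
\overline H\big|_{q=1}=\frac{(1-p_\mathrm{con})(\tilde b-\tilde d H_\mathrm{new})+H_\mathrm{new}\bigl(p_\mathrm{con}+p_\mathrm{gen}^*(1-p_\mathrm{con})\bigr)}{(1-p_\mathrm{con})(\tilde c H_\mathrm{new}-\tilde a)+\gamma+p_\mathrm{con}+(1-p_\mathrm{con})p_\mathrm{gen}^*},
\]
then applies two inequalities: $\tilde b-\tilde d H_\mathrm{new}\le p_\mathrm{gen}^*(\tfrac34-H_\mathrm{new})$ for the numerator (from $b_k\le\tfrac34 d_k$ and $\tilde d\le p_\mathrm{gen}^*$), and $\tilde c H_\mathrm{new}-\tilde a+p_\mathrm{gen}^*\ge 0$ for the denominator. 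These yield the claimed bound with no further manipulation, so your ``Main obstacle'' paragraph is unwarranted---there is no delicate balancing, just two relaxations applied once each.

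The genuine gap in your sketch is the denominator step. You call the quantity $(1-p_\mathrm{con})(p_\mathrm{gen}^*-\tilde a+H_\mathrm{new}\tilde c)$ ``manifestly nonnegative,'' but it is not: $\tilde c$ can be negative, and $\tilde a$ can approach $p_\mathrm{gen}^*$, so the sign is not obvious. The paper proves $1-a_k+H_\mathrm{new}c_k\ge 0$ termwise by chaining $a_k\le\tfrac34 c_k+d_k$ (from $J_k(1)\le 1$ and $b_k\ge 0$) with $c_k\le\tfrac43(1-d_k)$ (from $p_k(1)\le 1$), which after simplification gives $1-a_k+H_\mathrm{new}c_k\ge \tfrac43 H_\mathrm{new}(1-d_k)\ge 0$. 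Without this argument your denominator relaxation is unjustified; once you supply it, the rest of your plan goes through exactly as in the paper.
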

\begin{proof}
    See Appendix~\ref{app.dFdq}.
\end{proof}

We see that that the tight lower bound from (\ref{eq.Fbounds}) does not depend on the number of memories $n$, the probability of successful entanglement generation $p_\mathrm{gen}$, or the purification policy.
This is because this bound corresponds to $q=0$. In such a case, no purification is applied, and the consumed fidelity only depends on the initial fidelity ($F_\mathrm{new}$) and the amount of decoherence experienced until consumption (given by $\Gamma$ and $p_\mathrm{con}$).

The bounds on $\overline F$ can be used to determine if the parameters of the system need an improvement to meet specific quality-of-service requirements. For example, let us consider Figure~\ref{fig.F_bounds}, which shows the bounds for $p_\mathrm{con}=0.7$ and two different values of $\Gamma$.
If noise is strong ($\Gamma=1$ in this example), we observe that values of $p_\mathrm{gen}^*$ below 0.5 yield $\overline F<1/2$, 
which means that, on average, the consumed link will not be entangled~\cite{Davies2023a}.
Hence, if the consumption request rate is $p_\mathrm{con}=0.7$, we need to increase $p_\mathrm{gen}^*$ beyond 0.5 (by increasing the number of B memories, $n$, or the probability of successful entanglement generation, $p_\mathrm{gen}$) or to decrease the noise experienced in memory~G in order to provide a useful average state.
When the noise level is $\Gamma = 0.1$, Figure~\ref{fig.F_bounds} shows that $\overline F > 0.85$. Moreover, for $p_\mathrm{gen}^* > 0.3$, the upper bound is above $F_\mathrm{new}$, which means that a smart choice of purification policy may allow us to buffer entanglement with $\overline F > F_\mathrm{new}$.
Ultimately, this means that, in this regime, an entanglement buffer with faulty memories may be able to keep entanglement at higher fidelities than a perfect memory.

\begin{figure}[t!]
    \centering
    \includegraphics[width=0.9\linewidth]{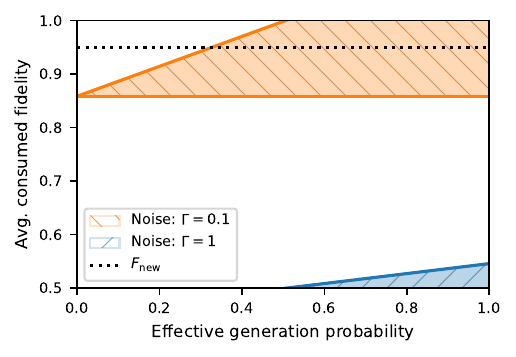}
    \caption{\textbf{The upper bound on the average consumed fidelity marks unachievable values for any purification policy.}
    Upper and lower bounds on the average consumed fidelity $\overline F$ from (\ref{eq.Fbounds}), versus the effective generation probability $p_\mathrm{gen}^*=1-(1-p_\mathrm{gen})^n$.
    $\overline F$ can only take values within the shaded region.
    In this example we use $p_\mathrm{con}=0.7$. 
    }
    \label{fig.F_bounds}
\end{figure}

\section{Choosing a purification policy}\label{sec.choosingpolicy}
In previous studies of entanglement buffering, the choice of purification policy was restricted by the properties of the system. For example, in ref. \cite{Davies2023a} the 1G1B system was studied, where only 2-to-1 purification protocols can be implemented, and the jump function was assumed to be linear in the fidelity of the buffered link. Other works include simplifying assumptions (e.g. in ref. \cite{Elsayed2023}, a buffer is studied that employs the purification protocol proposed in ref. \cite{Ruan2021}).
The 1G$n$B buffering system offers more freedom in the choice of purification protocols.
In a 1G$n$B buffer, each entanglement generation attempt is multiplexed and can generate up to $n$ new links at a time.
When $k \leq n$ new links are produced, any \mbox{$(k+1)$-to-$1$} purification protocol can in principle be implemented.
This provides an extra knob that can be used to tune the performance of the system to the desired values.
In this section, we investigate the impact that specific purification policies have on the system and we provide guidelines on how to choose a suitable purification policy.
Note that an exhaustive optimisation problem would be extremely computationally expensive to solve due to the large space of purification policies -- optimising over $a_k$, $b_k$, $c_k$, $d_k$ is not easy, since it is not certain that every combination of those parameters corresponds to an implementable purification circuit.

\subsection{Simple policies: identity, replacement, and concatenation}
There are two trivial deterministic policies ($p_k=1$, $\forall k$) that we will use as a baseline:
\begin{itemize}
    \item In the \emph{identity policy}, the system does not perform any operation on the buffered link, which yields an output fidelity $J_k(F) = F$, $\forall k>0$. This is equivalent to setting $q=0$. As discussed in Section~\ref{subsec.monotonic}, the identity policy therefore maximises the availability and minimises the average consumed fidelity.
    \item In the \emph{replacement policy}, the system replaces the buffered entangled link by a new link, yielding an output fidelity $J_k(F) = F_\mathrm{new}$, $\forall k>0$. This corresponds to $a_k=0$, $b_k=F_\mathrm{new}-1/4$, $c_k=0$, and $d_k=1$. Since this policy is deterministic, from the discussion in Section~\ref{subsec.monotonic} we find that the replacement policy also provides maximum availability for any value of $q$. Since $\overline F$ is maximized for $q=1$ (Proposition \ref{prop.dFdq}), we will only consider a replacement policy that always chooses to replace the link in memory when a new link is generated. That is, the replacement policy implicitly assumes $q=1$.
\end{itemize}

Another simple strategy is the \emph{DEJMPS policy}.
This policy consists in applying the well-known 2-to-1 DEJMPS purification protocol \cite{Deutsch1996} using the buffered link and a newly generated link as inputs. If more than one link is successfully generated, we use only one of them and discard the rest. We provide the purification coefficients $a_k$, $b_k$, $c_k$, and $d_k$ for this policy in Appendix~\ref{app.subsec.DEJMPS}.
One of the main drawbacks of the DEJMPS policy is that it does not take full advantage of the multiplexed entanglement generation, as it only uses one of the newly generated links and discards the rest.
A technique that could improve the performance of the policy is \emph{concatenation}, which consists in applying DEJMPS to all links (the buffered one and the newly generated ones) consecutively until only one link remains, which will be stored in memory G.
Note that the concatenation of DEJMPS subroutines can be applied using different orders of the links (see Figure~\ref{fig.concatenation-sketch}). The order determines the output fidelity and probability of success \cite{VanMeter2008}, which affects the performance of the buffering system.
In what follows, we consider the \emph{concatenated DEJMPS} policy, where DEJMPS is applied sequentially to all the newly generated links and the buffered link is used in the last application of DEJMPS, as in Figure~\ref{fig.concatenation-sketch}a.
In our analysis, we found that different orderings provided qualitatively similar behaviour of our two performance metrics (see Appendix~\ref{app.subsec.orderings} for further details).

\begin{figure}[t!]
    \centering
    \includegraphics[width=0.9\linewidth]{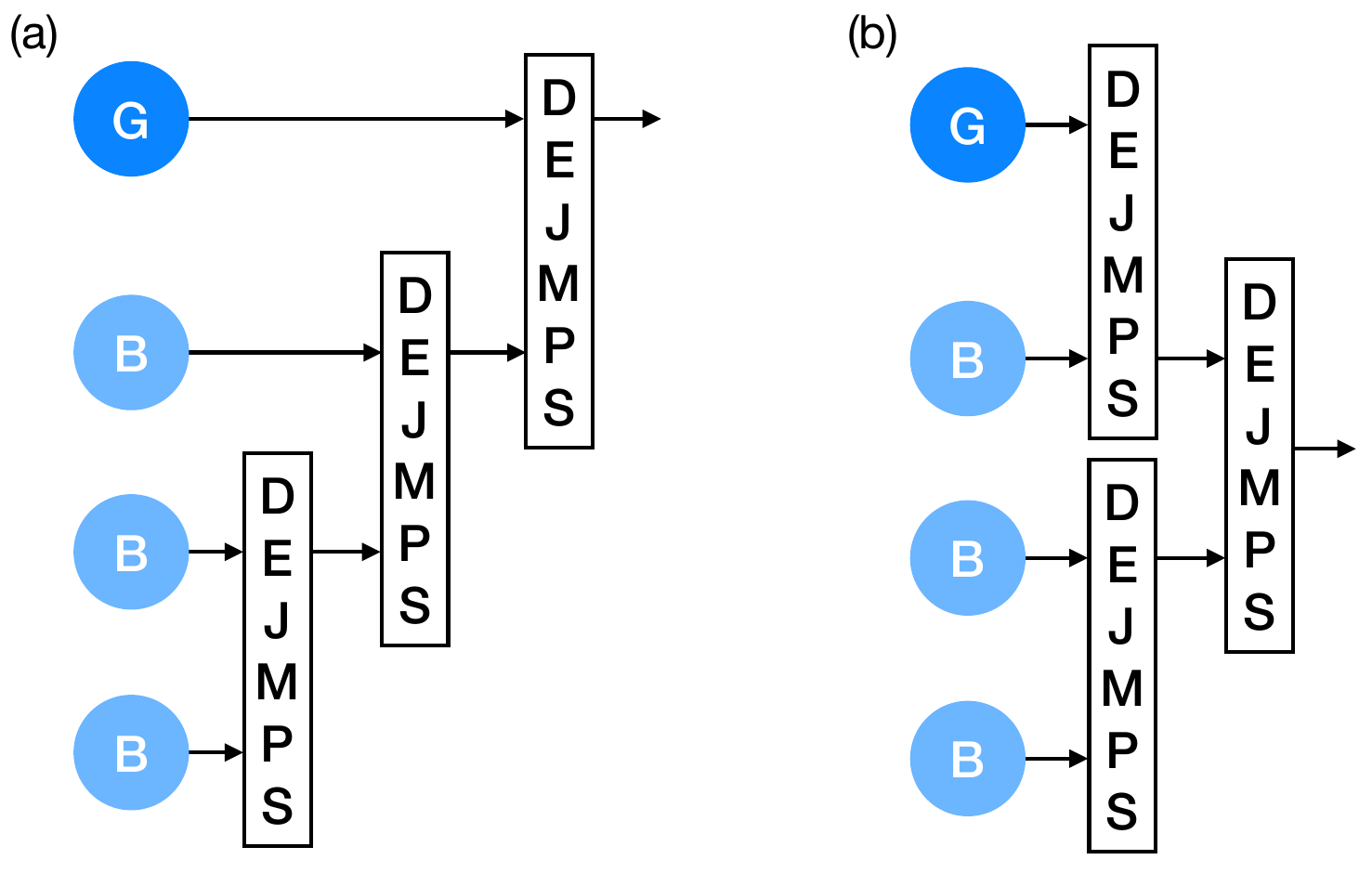}
    \caption{\textbf{The ordering in a concatenated policy matters.}
    Example of two different orderings when the buffered link (G) and three newly generated links (B) are used. We call ordering (a) ``concatenated DEJMPS''. Ordering (b) is often called ``nested''~\cite{Briegel1998}.
    }
    \label{fig.concatenation-sketch}
\end{figure}

Figure \ref{fig.AFplot_concatenation} shows the performance of several policies: identity, replacement, DEJMPS, and concatenated DEJMPS $\times N$. The latter is a policy that applies DEJMPS sequentially up to $N$ times and discards any extra links: if $k\leq N$ links are generated then $k$ concatenations are performed, and if $k>N$ links are generated, $N$ concatenations are performed. We note that concatenated DEJMPS $\times 1$ is just the same as the DEJMPS policy.
DEJMPS and concatenated DEJMPS are plotted for $q\in[0,1]$. The maximum average consumed fidelity is indicated with a dot, and it is achieved when $q=1$.
The first observation from this figure is that  a higher level of concatenation decreases the availability. This is because it requires multiple DEJMPS subroutines to succeed, which decreases the overall probability of successful purification. However, a higher level of concatenation can significantly increase the average consumed fidelity $\overline F$.
For example, the maximum $\overline F$ that DEJMPS can achieve is 0.915, while concatenated DEJMPS $\times 2$ leads to $\overline F = 0.937$ (for $q=1$).
Nevertheless, for the parameter values explored, we also find that increasing the number of concatenations beyond two often reduces both $A$ and $\overline F$.
This behaviour is shown more explicitly in Figure~\ref{fig.excessive_concatenation}, where we plot the maximum $\overline F$ versus the maximum number of concatenations $N$. In this example, the number of B memories is $n=10$, and therefore it is only possible to perform up to 10 concatenated applications of DEJMPS.
We observe that $\overline F$ is maximized for two concatenations. The same was observed for different parameter values -- in some edge cases, $\overline F$ increases with more concatenations, although the increase is marginal (see Appendix~\ref{app.subsec.increasingconcats} for further details).
In conclusion, this result shows that even if many new links are successfully generated in parallel, it can sometimes be beneficial to use only one or two of them for purification while discarding the rest.

\begin{figure}[t!]
    \centering
    \includegraphics[width=0.9\linewidth]{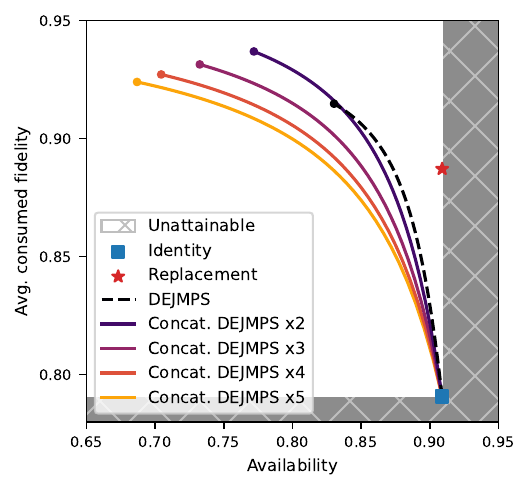}
    \caption{\textbf{Concatenating simple purification policies decreases $A$ but may increase $\overline F$.}
    Performance of 1G$n$B systems with different purification policies, in terms of availability $A$ and average consumed fidelity $\overline F$. The shaded area corresponds to unattainable values of $A$ and $\overline F$ (see (\ref{eq.A_bounds}) and (\ref{eq.Fbounds})).
    Lines and markers show the combinations of $A$ and $\overline F$ achievable by different purification policies: identity (square marker), replacement (star marker),
    DEJMPS (dashed line), and concatenated DEJMPS (solid lines).
    Concatenation can boost $\overline F$ (e.g. the maximum $\overline F$ of twice-concatenated DEJMPS is larger than DEJMPS), but excessive concatenation may eventually lead to a drop in $\overline F$.
    Parameter values used in this example: $n=10$, $p_\mathrm{gen}=0.5$, $\rho_\mathrm{new}$ is a Werner state with $F_\mathrm{new}=0.9$, $p_\mathrm{con}=0.1$, and $\Gamma=0.02$.
    }
    \label{fig.AFplot_concatenation}
\end{figure}

\begin{figure}[t!]
    \centering
    \includegraphics[width=0.9\linewidth]{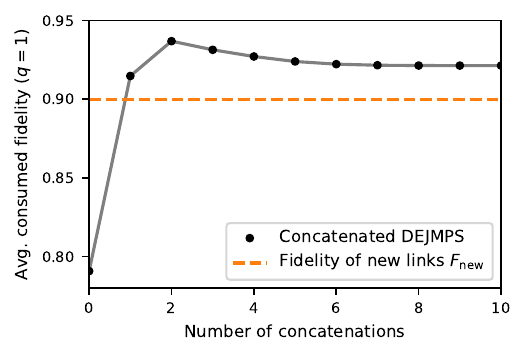}
    \caption{\textbf{Excessive concatenation worsens the performance.}
    Maximum average consumed fidelity $\overline F$ achieved by a purification policy that concatenates DEJMPS a limited number of times.
    Zero concatenations corresponds to an identity policy (no purification is performed). One concatenation corresponds to the DEJMPS policy.
    Excessive concatenation may decrease $\overline F$.
    Parameter values used in this example: $n=10$, $p_\mathrm{gen}=0.5$, $\rho_\mathrm{new}$ is a Werner state with $F_\mathrm{new}=0.9$, $p_\mathrm{con}=0.1$, and $\Gamma=0.02$.
    }
    \label{fig.excessive_concatenation}
\end{figure}

\subsection{Simple policies can outperform complex policies}
In the previous section, we found that implementing a simple 2-to-1 protocol, even when several links are generated in the B memories, can provide a better performance than using all of the newly generated links for purification with concatenated 2-to-1 protocols.
A follow-up question arises: \emph{what if we employ more sophisticated $(k+1)$-to-1 protocols instead of simply concatenating 2-to-1 protocols?} \emph{Can we then improve the performance of the buffer?}
This is the question that we explore now.

Much recent work has focused on the search for optimal purification protocols \cite{Rozpedek2018a,Krastanov2019,Jansen2022}, where optimal protocols are typically defined as those which maximise the output fidelity, or in some cases the success probability.
Here, we evaluate the performance of a 1G$n$B system with some of these protocols, and we find a surprising result: simple protocols like DEJMPS can vastly outperform these more complex protocols in terms of buffering performance.
In particular, we consider the bilocal Clifford protocols that maximise the output fidelity, given in ref.~\cite{Jansen2022}. We refer to this policy as the \emph{optimal bilocal Clifford (optimal-bC) policy}. In Appendix~\ref{app.subsec.optimalbiCliff}, we discuss the details of this policy and provide its purification coefficients $a_k$, $b_k$, $c_k$, and $d_k$.

Figure \ref{fig.simple_vs_optimal} shows the performance of the optimal-bC policy in comparison to DEJMPS and twice-concatenated DEJMPS.
The optimal-bC policy provides a significantly lower availability, $A$, without providing any advantage in average consumed fidelity, $\overline F$.
In other words, for any desired $A$, using DEJMPS or twice-concatenated DEJMPS always provides a larger $\overline F$ than the optimal-bC policy. If we want to increase $A$ as much as possible, the replacement policy is better than any other, as discussed earlier.
We say that the performance of DEJMPS, twice-concatenated DEJMPS and replacement forms the \emph{Pareto frontier}~\cite{Marler2004}, which informally is the set of best achievable values for $A$ and $\overline{F}$ for this collection of protocols.
We tested different parameter combinations and found that the Pareto frontier was often made of DEJMPS, concatenated DEJMPS and replacement.
The reason for these simple policies to outperform the optimal-bC policy is that the optimal bilocal Clifford protocols maximise the output fidelity at the expense of a reduced probability of success. At some point, the sacrifice in the probability of success can outweigh the benefit of a larger output fidelity, thereby reducing the overall performance of the buffer in terms of both $A$ and $\overline F$.

\begin{figure}[t!]
    \centering
    \includegraphics[width=0.9\linewidth]{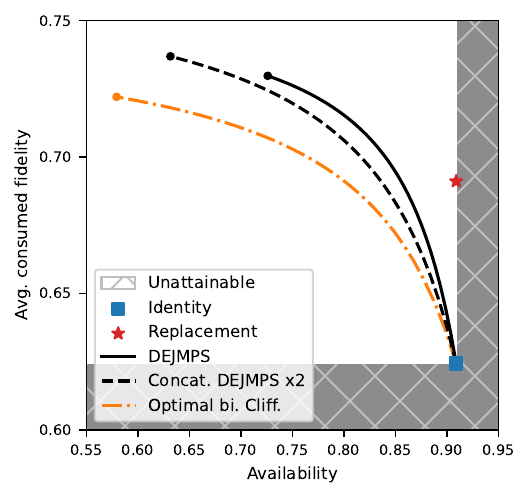}
    \caption{\textbf{Simple policies perform better despite discarding freshly generated entanglement.}
    Performance of 1G$n$B systems with different purification policies, in terms of availability $A$ and average consumed fidelity $\overline F$. The shaded area corresponds to unattainable values of $A$ and $\overline F$ (see (\ref{eq.A_bounds}) and (\ref{eq.Fbounds})).
    Lines and markers show the combinations of $A$ and $\overline F$ achievable by different purification policies: identity (square marker), replacement (star marker),
    DEJMPS (solid line), twice-concatenated DEJMPS (dashed line), and optimal-bC (dotted line).
    Parameter values used in this example: $n=5$, $p_\mathrm{gen}=0.8$, $\rho_\mathrm{new}$ is a Werner state with $F_\mathrm{new}=0.7$, $p_\mathrm{con}=0.1$, and $\Gamma=0.02$.
    }
    \label{fig.simple_vs_optimal}
\end{figure}

Our comparison between simple and optimal purification protocols is by no means an exhaustive study. However, it shows that purification protocols that maximise only the output fidelity (or probability of success) must not be blindly used in more complex systems involving many impacting factors such as decoherence and consumption, such as entanglement buffers.
In fact, we find that discarding some of the newly generated links and applying a 2-to-1 protocol can provide larger $A$ and $\overline F$ than using all of the links in a more sophisticated purification subroutine.
Note that this does not mean that multiplexed entanglement generation is not useful: even if we only employ 2-to-1 protocols, multiplexing boosts the effective entanglement generation rate, which allows for a more frequent purification of the buffered link.

Additionally, we also tested other complex policies that use (suboptimal) \mbox{$k$-to-1} protocols, such as the \emph{513 EC policy}, which uses a \mbox{5-to-1} protocol based on a $[[5,1,3]]$ quantum error correcting code.
In Appendix~\ref{app.subsec.513policy}, we explain this policy in detail and show that it can outperform DEJMPS and \mbox{twice-concatenated DEJMPS} in some parameter regions.

\subsection{Flags can improve performance}
As discussed in the previous sections, concatenating protocols multiple times does not necessarily improve the performance of the buffer (neither in terms of $A$ nor $\overline F$).
The reason is that, when concatenating, a single failure in one of the purification subroutines (in our examples, DEJMPS) leads to failure of the whole concatenated protocol.
This can be easily solved: instead of considering the concatenated protocol as a black box that only succeeds when all subroutines succeed, \emph{what if we condition the execution of each subroutine on the success/failure of previous subroutines?}
Consider for example the concatenated protocol from Figure~\ref{fig.concatenation-sketch}a.
If any of the DEJMPS subroutines fails, the whole protocol fails and the buffered link has to be discarded. However, we can fix this by raising a failure flag whenever any of the first two subroutines fails. If this flag is raised, the third subroutine is not executed and we leave the buffered link untouched.
The flagged version of a concatenated protocol has a larger probability of success, but can also have a lower output fidelity.
This means that it is not clear a priori what is the impact of flags on the buffer performance.
We now analyse a simple case in which we conclude that flags can be either beneficial or detrimental depending on the values of system parameters such as the level of noise $\Gamma$, and not only on the purification policy itself.

Let us consider a policy that operates as follows. For simplicity, we assume that newly generated states $\rho_\mathrm{new}$ are Werner states with fidelity $F_\mathrm{new}$.
When $k$ new links are generated and there is already a link stored in memory~G:
\begin{enumerate}
    \item If $k=1$, we apply the replacement protocol, which has coefficients $a_1=0$, $b_1 = F_\mathrm{new} - 1/4$, $c_1=0$, and $d_1=1$.
    \item If $k\geq2$, we apply the DEJMPS protocol to two of the fresh links and discard the rest. Then, we replace the link in memory with the output from the DEJMPS subroutine, without checking whether it was successful or not. This means that the output fidelity of the protocol is the same as the output fidelity from the DEJMPS subroutine. Since replacement is deterministic, the success probability of this protocol is also the same as the success probability of the DEJMPS subroutine. The purification coefficients for $k\geq2$ are therefore given by $a_k=0$, $b_k = a(\rho_\mathrm{new}) \cdot \left( F_\mathrm{new} - 1/4 \right) + b(\rho_\mathrm{new})$, $c_k=0$, and $d_k = c(\rho_\mathrm{new})  \cdot \left( F_\mathrm{new} - 1/4 \right) + d (\rho_\mathrm{new})$, where $a$, $b$, $c$, and $d$ are the coefficients of the DEJMPS protocol (given in Appendix~\ref{app.subsec.DEJMPS}).
\end{enumerate}
Now, let us consider a flagged variant of the previous policy, with coefficients $a_k'$, $b_k'$, $c_k'$, and $d_k'$. It works as follows:
\begin{enumerate}
    \item When $k=1$, we apply the replacement protocol.
    \item When $k\geq2$ links are generated, the DEJMPS protocol is applied to two of the fresh links, and the rest are discarded. Then, the link in memory is replaced with the output from the DEJMPS subroutine, but only if the subroutine succeeds (otherwise, the buffered link is left untouched).
    This protocol is now fully deterministic, since the buffered link is never removed from memory.
    Consequently, $c_k' = 0$, and $d_k' = 1$.
    The output fidelity of this protocol can be computed as the weighted average of the original fidelity of the link in memory and the output fidelity of the DEJMPS subroutine -- the first term must be weighted by the probability of failure of the subroutine, and the second term by the probability of success.
    Then, the remaining purification coefficients can be computed as $a_k' = 1 - c(\rho_\mathrm{new}) \cdot \left( F_\mathrm{new} - 1/4 \right) - d(\rho_\mathrm{new})$ and $b_k' = a(\rho_\mathrm{new}) \cdot \left( F_\mathrm{new} - 1/4 \right) + b(\rho_\mathrm{new})$, where $a$, $b$, $c$, and $d$ are the coefficients of the DEJMPS protocol (given in Appendix~\ref{app.subsec.DEJMPS}).
\end{enumerate}
By introducing the flags, we have created a protocol with probability of success $p_k'=1 \geq p_k$, where $p_k$ is the probability of success of the original protocol.
However, it can be shown that the output fidelity of the flagged protocol is $J_k'(F) \leq J_k(F)$, where $J_k$ is the jump function of the original protocol. This holds when DEJMPS can improve the fidelity of the newly generated links, i.e. when $J(F_\mathrm{new}) \geq F_\mathrm{new}$, where $J$ is the jump function of DEJMPS. The opposite regime is not interesting, since DEJMPS is decreasing the fidelity of the links and we would be better off not purifying.

As shown in the previous example, internal flags increase the probability of success of purification protocols, which should boost the availability of the buffer.
However, flags may have the side effect of reducing the output fidelity, and therefore it is not clear what is their impact on the average consumed fidelity.
In Figure~\ref{fig.flags}, we show the performance of a 1G$n$B system using the policy described above, versus the level of noise in memory~G.
We show $A$ (orange lines) and $\overline F$ (black lines) for the original policy (solid lines) and the flagged policy (dashed lines).
As expected, the availability is larger for the flagged policy.
The behaviour of $\overline F$ is more interesting.
When the level of noise is low, the flagged policy provides better performance, since it prevents high-quality entanglement from being lost to a failed purification.
However, when noise is strong, flagging becomes detrimental in terms of~$\overline F$: the buffer is likely to store low-quality entanglement due to the strong noise, and flags prevent the buffered link from being discarded earlier due to failed purification and being replaced by a fresh link.

\begin{figure}[t!]
    \centering
    \includegraphics[width=0.9\linewidth]{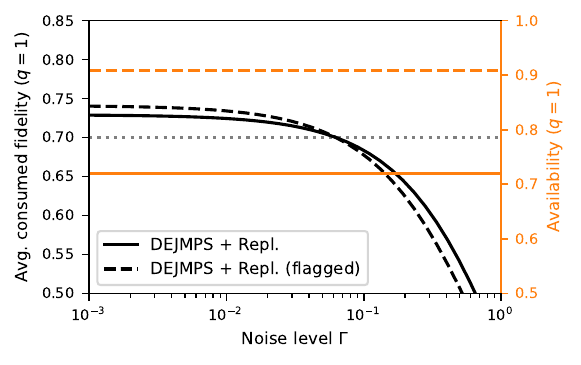}
    \caption{\textbf{Flagged protocols boost the availability but may decrease the average consumed fidelity.}
    Availability $A$ and average consumed fidelity $\overline F$ versus the noise level $\Gamma$, for a 'DEJMPS + Replacement' policy and its flagged version.
    In the first policy, the buffered link is lost when a DEJMPS subroutine fails.
    The second policy incorporates a flag that prevents this from happening -- it succeeds deterministically at the expense of a lower output fidelity.
    The flagged policy yields larger $A$, but may decrease $\overline F$ in some parameter regimes (e.g. when $\Gamma$ is large).
    Parameter values used in this example: $n=2$, $p_\mathrm{gen}=1$, $\rho_\mathrm{new}$ is a Werner state \cite{Werner1989} with $F_\mathrm{new}=0.7$, and $p_\mathrm{con}=0.1$.}
    \label{fig.flags}
\end{figure}

In conclusion, internal flags are a solid tool to improve the availability of entanglement buffers based on concatenated purification protocols.
However, they can decrease the average consumed fidelity in some parameter regimes.
Hence, flagged purification policies should not be assumed to be better than their non-flagged counterparts, and their performance should be carefully evaluated before being adopted.


\section{Outlook}\label{sec.outlook}
In this paper, we have studied the behavior of entanglement buffers with one long-lived memory and $n$ short-lived memories (1G$n$B system).
In particular, we have provided analytical expressions for the two main performance measures: the availability and the average consumed fidelity.
These expressions provide valuable insights, such as the fundamental limits to the performance of 1G$n$B systems discussed earlier.

Since our analytical solutions are not computationally expensive to evaluate, we expect our buffering setup to be easy to incorporate in more complex network architectures, such as quantum repeater chains or even large-scale quantum networks.
Additionally, larger buffering systems with multiple long-lived memories, e.g. an $m$G$n$B setup, can be implemented with multiple 1G$n$B systems in parallel.

Due to the vast freedom in the choice of purification policy, there are multiple ways in which our analysis of purification strategies for entanglement buffers can be extended.
Notably, determining the optimal ordering in which simple protocols should be applied to newly generated links (e.g. concatenated, nested~\cite{Briegel1998}, or banded~\cite{VanMeter2008}) is left as future work.
Additionally, finding policies that optimize availability or average consumed fidelity remains an important open question.


\bibliographystyle{apsrev4-2}
\bibliography{references}

\section*{Code availability}
The code used to perform the analysis and generate all the plots shown in this paper can be found in the following GitHub repository: \href{https://github.com/AlvaroGI/buffering-1GnB}{https://github.com/AlvaroGI/buffering-1GnB}.
This repository also includes a discrete-event simulator of a 1G$n$B system that we used to validate our analytical results.

\section{Author Contributions}
BD and ÁGI conceived and defined the project.
BD and SK proved Theorems \ref{theorem.A} and \ref{theorem.F}.
ÁGI and BD proved Propositions \ref{prop.dAdq} and \ref{prop.dFdq}.
ÁGI carried out the analysis from Sections \ref{sec.results} and \ref{sec.choosingpolicy}, and coded the discrete-event simulation (used to validate analytical results).
ÁGI and BD wrote this manuscript.
SW supervised the project.

\section{Acknowledgements}
We thank C. Cicconetti, P. Kaku, and J. van Dam for discussions and feedback.
ÁGI acknowledges financial support from the Netherlands Organisation for Scientific Research (NWO/OCW), as part of the Frontiers of Nanoscience program.
BD acknowledges financial support from a KNAW Ammodo Award (SW).
SW acknowledges support from an NWO VICI grant.

\clearpage
\appendix
\counterwithin{theorem}{section}
\counterwithin{lemma}{section}
\counterwithin{proposition}{section}

\onecolumngrid

\clearpage
\section{A note on the viewpoint}
\label{app.viewpoint}
In this appendix, we provide three further ways to compute the performance metrics $A$ and $\overline{F}$.
The initial (and most natural) definitions of the performance metrics (see Definitions~\ref{def.A} and \ref{def.F}) consist in averages from the viewpoint of the network user, who consumed the links.
In Lemma~\ref{lem.metrics-single-cycle}, we show that the averaging may only be done over a single cycle of the renewal process.
In Lemma~\ref{lem.metrics-limiting-values}, we show that the performance metrics can also be computed as limiting values when time goes to infinity. Lastly, in Lemma~\ref{lem.time-average}, it is shown that one may compute the metrics by averaging over time, regardless of consumption arrival times.

We denote the arrival time of the $j$-th consumption request as $T_{\mathrm{con}}^{(j)}$. 
From now on, we write 
\begin{equation}
    \mathbbm{1}_{\mathrm{l.e.}}(F) \equiv \mathbbm{1}_{\text{link exists}}(F)  = 
    \begin{cases}
        1 \text{ if } F>0, \\ 
        0 \text{ if } F = 0.
    \end{cases}
\end{equation}
In the following, we let $\mathbb{N}_0$ denote the natural numbers containing zero, and $\mathbb{N}=\mathbb{N}_0\setminus \{0\}$. Recall that $F=\{F(t),t\in \mathbb{N}_0 \}$ is a discrete-time stochastic process. The value $F(t)$ is defined to be the fidelity at the \textit{beginning} of the time step $[t,t+1)$. Then, since consumption removes the link from the G memory, at each consumption time we have $F(T_{\mathrm{con}}^{(j)})=0$. However, the consumed fidelity at this time depends on the value of the fidelity at time $T^{(j)}_{\mathrm{con}}-1$. We therefore introduce some new notation to more easily treat this issue. 

In order to do this, we firstly note that associated with $F$ there is an equivalent continuous-time stochastic process $\{F_{\mathrm{cont}}(s):s\geq 0 \}$ that is obtained from $F$ with the following procedure: given $t\in \mathbb{N}$,
\begin{enumerate}[label=(\roman*)]
\item if $F(t)>0$, then for $s\in [t,t+1)$, $F_{\mathrm{cont}}(s)$ may be deduced by applying decoherence (\ref{eq.fidelity_decay}) to $F(t)$;
\item if $F(t)=0$, then $F_{\mathrm{cont}}(s) = 0$ for $s\in [t,t+1)$. 
\end{enumerate}
Conversely, $F$ may be obtained from $F_{\mathrm{cont}}$ by  taking its values at integer times.

From $F_{\mathrm{cont}}$, for $t\in \mathbb{N}$, we define another discrete time process $F^-$,
\begin{equation}
    F^{-} \coloneq \{ F_{\mathrm{cont}}(t^-): t\in \mathbb{N} \},
\end{equation}
where $t^-$ denotes taking the left-hand limit. In particular, the consumed fidelity $F^-(t)$ takes the value of the fidelity at the \textit{end} of the time step $[t-1,t)$.
The values of $F^-$ may also be deduced directly from $F$ as
\begin{equation}
    F^-(t) =  \begin{cases}
        e^{-\Gamma}\left(F(t-1)-\frac{1}{4}\right) + \frac{1}{4}, \text{ if } F(t-1)>0, \\ 
        0, \text{ if } F(t-1) = 0.
    \end{cases}
    \label{eq.def-F(t^-)}
\end{equation}
We note that the evolution of $\{F^-(t) ,t\in \mathbb{N} \}$ may be deduced directly from $\{F(t) ,t\in \mathbb{N} \}$ via (\ref{eq.def-F(t^-)}), and vice-versa. The value $F^-(t)$ may be interpreted as the state of the system `just before' time $t$, and $F(t)$ the state `just after'. Each completely captures the behaviour of the 1G$n$B system.

We then restate the original definitions \ref{def.A} and \ref{def.F} of availability, $A$, and average consumed fidelity, $\overline F$, below.
\begin{definition}[Performance metrics, viewpoint of network user]
We have
\begin{equation}
    A = \lim_{m\rightarrow \infty}\frac{1}{m} \sum_{j = 1}^{m} \mathbbm{1}_{\mathrm{l.e.}}\left(F^-(T_{\mathrm{con}}^{(j)})\right),
    \label{eq.A-def-user-viewpoint}
\end{equation}
and 
\begin{equation}
    \overline{F} = \lim_{m\rightarrow \infty} \frac{\sum_{j = 1}^{m} F^-(T_{\mathrm{con}}^{(j)})\cdot \mathbbm{1}_{\mathrm{l.e.}}\left( F^-(T_{\mathrm{con}}^{(j)})\right)}{\sum_{j = 1}^{m} \mathbbm{1}_{\mathrm{l.e.}}\left(F^-(T_{\mathrm{con}}^{(j)})\right)}.
    \label{eq.Fbar-def-user-viewpoint}
\end{equation}
\label{def.metrics-viewpoint-user}
\end{definition}

We now present a second way to compute the performance metrics, which is the form that is used to derive the solutions for $A$ and $\overline{F}$ in Theorems \ref{theorem.A} and \ref{theorem.F} (see Appendix~\ref{app.solutions}). To show this result, we use the fact that $F(t)$ is a regenerative process. Informally, every time the link in the G memory is removed from the system, the process `starts again', in the sense that the stochastic properties from that point onwards are the same as when starting from any other time when the G memory is empty. This stems from the fact that entangled link generation and consumption request arrivals are assumed to be Markovian. 

\begin{definition}[Regenerative process, informal] A regenerative process $\{X(t),t \geq 0 \}$ is a stochastic process with the following properties: there exists a random variable $V_1>0$ such that
\begin{enumerate}[label=(\roman*)]
    \item $\{X(t+V_1),t\geq 0\}$ is independent of $\{X(t),t \leq V_1\}$ and $V_1$;
    \item $\{X(t+V_1),t\geq 0\}$ is stochastically equivalent to $\{ X(t),t\geq 0\}$ (i.e. these two processes have the same joint distributions).
\end{enumerate}
\end{definition}
For a formal definition of a regenerative process, see e.g. \cite{Sigman1993}. If the process is regenerative, it may also be shown that there is a sequence of regeneration cycles $V_0=0$, $\{V_k\}$ such that the sequence regenerates at each cycle, i.e. $\{X(t),t\geq 0\}$ and $\{X(t+V_k),t\geq 0\}$ are stochastically equivalent. 

We now show that our process $F$ is regenerative. Let us assume the system starts when a new link is freshly generated and moved to the G memory, such that $F(0) = F_{\mathrm{new}}$. The system then evolves as follows: the link in the G memory may undergo some purification rounds, between which it is subject to decoherence, and then is eventually removed from the G memory after time $T_{\mathrm{occ}}^{(1)}$ due to either purification failure or consumption. The time $T_{\mathrm{occ}}^{(1)}$ is the time during which the G memory is occupied. In particular,
\begin{equation}
    T_{\mathrm{occ}}^{(1)} \coloneqq \min \{t:F(t) = 0\}. 
\end{equation}
After the link is removed, the system will then attempt entanglement generation until a successful generation. Let the time from which the G memory is emptied until a new link is produced be $T_{\mathrm{gen}}^{(1)}$. By the assumption that entanglement generation attempts are independent and Bernoulli, $T_{\mathrm{gen}}^{(1)} \sim \mathrm{Geo}(1-(1-p_{\mathrm{gen}})^n)$. When a fresh link is generated at time $t=T_{\mathrm{occ}}^{(1)}+T_{\mathrm{gen}}^{(1)}$, we have $F(T_{\mathrm{occ}}^{(1)}+T_{\mathrm{gen}}^{(1)}) = F_{\mathrm{new}}$ and, from this time on, the process behaves equivalently to how it did from time $t=0$. Letting $V_1 = T_{\mathrm{occ}}^{(1)} + T_{\mathrm{gen}}^{(1)}$, we see that $F(t)$ is regenerative. All regeneration cycles $\{V_k\}$ may each be split into two phases: we have $V_k = T_{\mathrm{occ}}^{(k)} + T_{\mathrm{gen}}^{(k)}$, where $T_{\mathrm{occ}}^{(k)}$ is the time during which the memory is occupied, and $T_{\mathrm{gen}}^{(k)}$ is the time during which the memory is empty and entanglement generation is being attempted. We note that since $F^-$ is in one-to-one correspondence with $F$ via (\ref{eq.def-F(t^-)}), then $F^-$ is also regenerative with the same cycle lengths. 

For the following results, we note two important properties of the process $\{V_k\}$. Firstly, the mean cycle length $\mathbb{E}[V_1] = \mathbb{E}[T_{\mathrm{occ}}^{(1)}] + \mathbb{E}[T_{\mathrm{gen}}^{(1)}]$ is finite: this may be seen by the fact that $T_{\mathrm{gen}}^{(1)}$ is geometrically distributed (and therefore $\mathbb{E}[T_{\mathrm{gen}}^{(1)}]<\infty$) and that $T_{\mathrm{occ}}^{(1)}$ is bounded above by the time until the next consumption request, which is geometrically distributed, and so $\mathbb{E}[T_{\mathrm{occ}}^{(1)}]\leq \mathbb{E}[T_{\mathrm{con}}^{(1)}] < \infty $. The second important property is that the $\{V_k\}$ are \textit{aperiodic}, which means that $V_1$ takes values in a set of integers that have greatest common denominator equal to one. Again, this may be seen by the fact that consumption and entanglement generation are assumed to be geometric. If $p_{\mathrm{gen}}<1$, the value of $V_1$ has a non-zero probability of taking any value in $\mathbb{N}\setminus\{1\}$ and therefore satisfies this property. The same holds if $p_{\mathrm{gen}}=1$, and there is a non-zero probability of either no purification or successful purification. 
The cases where the $\{V_k\}$ are periodic may be accounted for separately:
\begin{enumerate}[label=(\Alph*)]
    \item If $p_{\mathrm{gen}} = 1$ and $p_{\mathrm{con}} = 1$, a link will deterministically be generated when in the empty state, and deterministically consumed in the following time step. The fidelity $F(t)$ then deterministically alternates between $0$ and $F_{\mathrm{new}}$, and the cycle length is always two. We therefore have 
    \begin{equation}
     A=\frac{1}{2},\;\;\;\;   \overline{F}= e^{-\Gamma}\left(F_{\mathrm{new}}-\frac{1}{4}\right) + \frac{1}{4}.
     \label{eq.metrics-values-edge-cases}
    \end{equation}
    \item If $p_{\mathrm{gen}} = 1$, $q=1$ and $c_k = d_k = 0$, then we have deterministic link generation, and the system always decides to purify. However, purification always fails. The fidelity then again deterministically alternates between $0$ and $F_{\mathrm{new}}$, and the cycle length is two. We note that even if purification is always attempted and always fails, then if a consumption request arrives, this will take priority over purification and the link will be consumed with fidelity $e^{-\Gamma}\left(F_{\mathrm{new}}-\frac{1}{4}\right) + \frac{1}{4}$. Then, $\overline F$ will also take this value. Moreover, by applying the PASTA property in discrete time \cite{Makowski1989}, we have $A=1/2$. Our metrics then take the values (\ref{eq.metrics-values-edge-cases}), as in case (A).
\end{enumerate}
We note that our formulae, as given in Theorems \ref{theorem.A} and \ref{theorem.F}, still hold for the above cases. The solutions for edge case (A) are obtained by inputting $p_{\mathrm{gen}} = 1$ and $p_{\mathrm{con}} = 1$. Edge case (B) can be dealt with in the same way: take $p_{\mathrm{gen}} = 1$, $q=1$ and the limit $c_k,d_k\rightarrow 0$. Note that the jump function (\ref{eq.Jk_def}) must still be well-defined, and so necessarily we must also take $a_k,b_k\rightarrow 0$. We then obtain (\ref{eq.metrics-values-edge-cases}). Although the proof in the general case may not be immediately applied in these cases, our formula still holds.

\begin{lemma}[Performance metrics, single cycle] Suppose that the 1G$n$B system parameters are not in edge cases (A) or (B). The performance metrics in Definition \ref{def.metrics-viewpoint-user} may be written in terms of the properties of a single cycle:
\begin{equation}
    A = \frac{\mathbb{E}[T_{\mathrm{occ}}^{(1)}]}{\mathbb{E}[T_{\mathrm{occ}}^{(1)}] + \mathbb{E}[T_{\mathrm{gen}}^{(1)}]} \text{    a.s.} \label{eq.A-single-cycle}
\end{equation}
and  
\begin{equation}
    \overline{F} = \mathbb{E}[F^-(T_{\mathrm{occ}}^{(1)})|C_1]\text{    a.s.}
\end{equation}
where $C_1$ is the event where the first link is removed due to consumption (and not failed purification), or equivalently $C_1 \equiv \{T_{\mathrm{occ}}^{(1)} = T_{\mathrm{con}}^{(1)}$\}.
\label{lem.metrics-single-cycle}
\end{lemma}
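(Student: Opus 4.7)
The plan is to exploit the fact, established just before the lemma, that $F^-$ is a regenerative process whose cycles $V_k = T_{\mathrm{occ}}^{(k)} + T_{\mathrm{gen}}^{(k)}$ are i.i.d., aperiodic (outside the excluded edge cases (A) and (B)), and have finite mean. The strategy for both metrics is to regroup the partial sums in Definition~\ref{def.metrics-viewpoint-user} into blocks indexed by regeneration cycles, plus a boundary term coming from the incomplete final cycle, and then apply the strong law of large numbers (SLLN) to the resulting i.i.d.\ cycle rewards. The boundary term has expectation bounded in terms of $\mathbb{E}[V_1] < \infty$ and hence vanishes once divided by the number of cycles; I will treat this as a standard renewal-theoretic point and not dwell on it.

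\textbf{Availability.} First I would group the consumption indices $j$ in (\ref{eq.A-def-user-viewpoint}) by the regeneration cycle they fall in. The key observation is that within a single cycle at most one consumption request can find the memory occupied, because the first such request immediately empties it. Consequently the contribution of cycle $k$ to the numerator of (\ref{eq.A-def-user-viewpoint}) is the indicator $\mathbbm{1}_{C_k}$, and the contribution to the denominator is the total number $N^{(k)}$ of Bernoulli consumption requests arriving during $V_k$. Applying SLLN to the i.i.d.\ sequences $\{\mathbbm{1}_{C_k}\}$ and $\{N^{(k)}\}$ then yields $A = \mathbb{P}(C_1)/\mathbb{E}[N^{(1)}]$. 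A direct Wald computation gives $\mathbb{E}[N^{(1)}] = p_{\mathrm{con}} \mathbb{E}[V_1]$, and the same Wald identity applied to the stopping time $T_{\mathrm{occ}}^{(1)}$, together with the observation that $\sum_{t \leq T_{\mathrm{occ}}^{(1)}} \xi_t = \mathbbm{1}_{C_1}$ (where $\xi_t$ are the Bernoulli consumption indicators), gives $\mathbb{P}(C_1) = p_{\mathrm{con}} \mathbb{E}[T_{\mathrm{occ}}^{(1)}]$. The $p_{\mathrm{con}}$ factors cancel to yield (\ref{eq.A-single-cycle}).

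\textbf{Average consumed fidelity and main obstacle.} For $\overline{F}$ the same regrouping shows that the cycle-$k$ contribution to the numerator of (\ref{eq.Fbar-def-user-viewpoint}) is $F^-(T_{\mathrm{occ}}^{(k)})\mathbbm{1}_{C_k}$ and to the denominator is $\mathbbm{1}_{C_k}$, since only the single successful consumption (if any) contributes and it happens precisely at the instant the memory is emptied. The SLLN applied to both i.i.d.\ sums then gives $\overline{F} = \mathbb{E}[F^-(T_{\mathrm{occ}}^{(1)})\mathbbm{1}_{C_1}]/\mathbb{P}(C_1)$, and since $F^-(T_{\mathrm{occ}}^{(1)})>0$ on $C_1$ the indicator may be absorbed into the conditional expectation, producing the claimed formula. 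The main technical obstacle I anticipate is justifying the Wald-type identity $\mathbb{P}(C_1) = p_{\mathrm{con}} \mathbb{E}[T_{\mathrm{occ}}^{(1)}]$: one needs $T_{\mathrm{occ}}^{(1)}$ to be a stopping time with respect to a filtration in which the $\xi_t$ are independent Bernoulli increments, together with $\mathbb{E}[T_{\mathrm{occ}}^{(1)}]<\infty$. The latter is immediate once $p_{\mathrm{con}}>0$, since $T_{\mathrm{occ}}^{(1)}$ is then stochastically dominated by a $\mathrm{Geo}(p_{\mathrm{con}})$ random variable, as any consumption request terminates the occupied phase deterministically.
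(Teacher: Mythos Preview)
Your argument is correct, and for the availability it follows a genuinely different route from the paper. The paper first invokes the discrete-time PASTA property to replace the consumer-viewpoint average in Definition~\ref{def.metrics-viewpoint-user} by the time average $\lim_{t\to\infty}\frac{1}{t}\sum_{s=1}^t\mathbbm{1}_{\mathrm{l.e.}}(F^-(s))$, and then applies a standard regenerative cycle formula to identify that time average with $\mathbb{E}[T_{\mathrm{occ}}^{(1)}]/\mathbb{E}[V_1]$. You instead stay entirely in the consumer viewpoint: grouping by cycle gives $A=\mathbb{P}(C_1)/\mathbb{E}[N^{(1)}]$ via SLLN, and the two Wald identities $\mathbb{P}(C_1)=p_{\mathrm{con}}\mathbb{E}[T_{\mathrm{occ}}^{(1)}]$ and $\mathbb{E}[N^{(1)}]=p_{\mathrm{con}}\mathbb{E}[V_1]$ recover the same ratio after the factor $p_{\mathrm{con}}$ cancels. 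Your route is more elementary in that it avoids PASTA altogether; the paper's route, by going through the time average, dovetails with the other equivalences (Lemmas~\ref{lem.time-average} and~\ref{lem.metrics-limiting-values}) that it establishes anyway. The Wald step you flag as the main obstacle is fine: with $\mathcal{F}_t$ the $\sigma$-algebra generated by all random inputs up to time $t$, the consumption indicators $\xi_t$ are adapted and independent of $\mathcal{F}_{t-1}$, $T_{\mathrm{occ}}^{(1)}$ is an $\{\mathcal{F}_t\}$-stopping time with $\mathbb{E}[T_{\mathrm{occ}}^{(1)}]<\infty$, and the pathwise identity $\sum_{t\le T_{\mathrm{occ}}^{(1)}}\xi_t=\mathbbm{1}_{C_1}$ holds because any arrival before $T_{\mathrm{occ}}^{(1)}$ would have emptied the memory earlier.

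For $\overline{F}$ the two arguments are essentially the same: the paper also reduces to an average of i.i.d.\ per-cycle quantities (it phrases this as extracting the subsequence of consumption times that find a link, which is your sum of $F^-(T_{\mathrm{occ}}^{(k)})\mathbbm{1}_{C_k}$ with the zero terms dropped) and applies SLLN. The paper is slightly more explicit than you are about why the number of contributing cycles diverges a.s., but you correctly identify this as a routine renewal point.
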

\begin{proof}
Let $F^-_{\infty}$ be a random variable with distribution given by
\begin{equation}
   \mathrm{P}\left(F^-_{\infty}\in B\right) =  \lim_{t\rightarrow \infty } \frac{1}{t} \sum_{s=1}^t \mathbb{1}_{B} \left(F^-(s)\right).
   \label{eq.def_F^-_infty}
\end{equation}
Then, as $F^-$ is a regenerative process with finite mean and aperiodic cycle length, by e.g. part $(a)$ of Theorem 1 from \cite{Vlasiou2011}, the above quantity exists and may be computed in terms of the properties of a single cycle as
\begin{equation}
    \mathrm{P}\left(F^-_{\infty}\in B\right) = \frac{1}{\mathrm{E}[V_1]} \mathbb{E}\left[\sum_{s=1}^{V_1} \mathbb{1}_{B} \left(F^-(s)\right) \right].
\end{equation}
Letting $B$ be the event where a link is present in the G memory, we then see that 
\begin{align}
    \mathrm{P}\left(F^-_{\infty}>0 \right) &= \frac{1}{\mathrm{E}[V_1]} \mathbb{E}\left[\sum_{s=1}^{V_1} \mathbb{1}_{\mathrm{l.e.}} \left(F^-(s)\right) \right] \\ &= \frac{1}{\mathbb{E}[T_{\mathrm{occ}}^{(1)}] + \mathbb{E}[T_{\mathrm{gen}}^{(1)}]}\cdot \mathbb{E}\left[ T_{\mathrm{occ}}^{(1)}\right].
    \label{eq.limP(F_infty)-one-cycle}
\end{align}
We now show that the above expression is equal to $A$. Since the interarrival times of consumption requests are i.i.d. and follow a geometric distribution, we make use of the PASTA property in discrete time \cite{Makowski1989} to see that the availability from the point of view of the consumer in Definition \ref{def.A} is equal to the time average as given above, i.e.
\begin{equation}
    A = \lim_{t\rightarrow \infty } \frac{1}{t} \sum_{s=1}^t \mathbb{1}_{\mathrm{l.e.}} \left(F^-(s)\right) = \mathrm{P}(F^-_{\infty} > 0), \text{    a.s.}
    \label{eqn.A-with-PASTA}
\end{equation}
Then, (\ref{eq.A-single-cycle}) is shown by combining (\ref{eqn.A-with-PASTA}) with (\ref{eq.limP(F_infty)-one-cycle}).

We now show the identity for $\overline{F}$. For this, we also use the regerative property. We define $W_0=0$ and $W_k$ to be the time at which the $k$-th cycle ends,
\begin{equation}
    W_k \coloneqq \sum_{j=1}^k V_j.
\end{equation}
Then, the sequence of times at which the link is removed from the G memory is 
\begin{equation}
    \{W_{k-1} + T_{\mathrm{occ}}^{(k)}\}_{k\geq 1}.
    \label{eq.W-sequence}
\end{equation}
We then define the subsequence
\begin{equation}
    \{W_{i_k-1} + T_{\mathrm{occ}}^{(i_k)}\}_{k\geq 1}
    \label{eq.W-subsequence-consumption}
\end{equation}
to be the times at which link removal is due to consumption (and not purification failure). We recall that in our model, when a consumption request arrives, it immediately removes the link from the G memory. Then, (\ref{eq.W-subsequence-consumption}) are precisely the times at which consumption requests arrive to find a link in the G memory. In particular,
\begin{equation}
    \{W_{i_k-1} + T_{\mathrm{occ}}^{(i_k)}\}_{k\geq 1} = \left\{T_{\mathrm{con}}^{(k)}:F^-\!\left(T_{\mathrm{con}}^{(k)} \right)>0\right\}_{k\geq 1},
    \label{eq.equality-seq-subseq}
\end{equation}
recalling that $\{T_{\mathrm{con}}^{(k)}\}$ is the sequence of arrival times for consumption requests.
Recalling Definition (\ref{def.metrics-viewpoint-user}) of $\overline{F}$, we then see that 
\begin{align}
     \overline{F} &= \lim_{m\rightarrow \infty} \frac{\sum_{k = 1}^{m} F^-(T_{\mathrm{con}}^{(k)})\cdot \mathbbm{1}_{\mathrm{l.e.}}\left( F^-(T_{\mathrm{con}}^{(k)})\right)}{\sum_{k = 1}^{m} \mathbbm{1}_{\mathrm{l.e.}}\left( F^-(T_{\mathrm{con}}^{(k)}) \right)} \nonumber
     \\ &= \lim_{m\rightarrow \infty} \frac{\sum_{k = 1}^{M(m)} F^-\left( W_{i_k-1} + T_{\mathrm{occ}}^{(i_k)}\right) }{\sum_{k = 1}^{M(m)} 1}, \label{eq.change-variable-sum-M(m)}
\end{align}
where we have used the identity (\ref{eq.equality-seq-subseq}), and defined $M(m)\leq m$ as
\begin{equation*}
    M(m) = \left| \left\{ T_{\mathrm{con}}^{(k)} : F^-\left(T_{\mathrm{con}}^{(k)} \right)>0, \; k \leq m\right\} \right|.
\end{equation*}
Then, $M(m)$ is the number of consumption requests up to time $T_{\mathrm{con}}^{(m)}$ that arrive when a link is stored in memory. We now show that $ \lim_{m\rightarrow \infty} M(m) =  \infty $ a.s. so that we can apply SLLN to the above expression.
To see this, recall that $\{V_k\}_{k\geq 1}$ are the i.i.d. interarrival times of a renewal process $N(t) = \sup \{ k: W_k\leq t\}$. Since $|\mathbb{E}[V_1]|< \infty$, we have that $\lim_{t\rightarrow \infty}N(t) = \infty $ a.s. (see 10.1.2 of \cite{Grimmett2020}). Within each of these cycles, the link is removed from memory exactly once. The probability that this is due to consumption is bounded below by $p_{\mathrm{con}}>0$, because for each cycle it is possible to consume directly after link generation, which occurs with probability $p_{\mathrm{con}}$. Recalling the sequence of times when the link is removed due to consumption as given in (\ref{eq.equality-seq-subseq}), the number of these events may therefore be bounded below by a subsequence
\begin{equation}
    \{W_{j_k-1} + T_{\mathrm{occ}}^{(j_k)}\}_{k\geq 1} \subseteq \{W_{i_k-1} + T_{\mathrm{occ}}^{(i_k)}\}_{k\geq 1}
    \label{eq.consumed-links-subseq-geometric}
\end{equation}
such that the $j_k - j_{k-1}$ is geometrically distributed with parameter $\eta \geq p_{\mathrm{con}}$. We therefore see that 
\begin{equation}
     \lim_{k\rightarrow \infty} |\{W_{j_k-1} + T_{\mathrm{occ}}^{(j_k)}\}_{k\geq 1}| = \infty \text{ a.s.}
\end{equation}
and therefore by (\ref{eq.consumed-links-subseq-geometric}), the total number of times when the link is consumed diverges to infinity almost surely.
From (\ref{eq.change-variable-sum-M(m)}), we then have
\begin{align*}
     \overline{F} &\stackrel{\text{a.s.}}{=} \lim_{M\rightarrow \infty} \frac{1}{M} \sum_{k = 1}^{M} F^-\left( W_{i_k-1} + T_{\mathrm{occ}}^{(i_k)}\right)  \\ &\stackrel{\text{a.s.}}{=} \mathbb{E}\left[ F^-\!(T_{\mathrm{occ}}^{(1)})| C_1 \right].
\end{align*}
where we have used the fact that the sequence $\{F^-(W_{i_k - 1} + T_{\mathrm{occ}}^{(i_k)})\}_{k\geq 1}$ is i.i.d. since the process is regenerative, and the strong law of large numbers.
\end{proof}
In the final lemma of this section, we see that the above metrics are equal to the time averages over the whole process. This follows from a version of the well-known PASTA property (Poisson Arrivals See Time Averages) in queuing theory~\cite{Makowski1989}, which we can employ because the arrival of consumption requests in each time step is assumed to be a Bernoulli process.
\begin{lemma}[Performance metrics, time average]
Suppose that the 1G$n$B system parameters are not in edge cases (A) or (B). The performance metrics in Definition \ref{def.metrics-viewpoint-user} may be computed using an average over time, i.e.
\begin{align}
    A &= \lim_{t\rightarrow \infty}\frac{1}{t} \sum_{s = 1}^{t} \mathbbm{1}_{\mathrm{l.e.}}\left(F^-(s)\right), \label{eq.A-time-average}
\end{align}
and
\begin{align}
    \overline{F} &= \lim_{t\rightarrow \infty} \frac{\sum_{s = 1}^t F^-(s)\cdot \mathbbm{1}_{\mathrm{l.e.}}\left(F^-(s)\right)}{\sum_{s = 1}^t \mathbbm{1}_{\mathrm{l.e.}}\left(F^-(s)\right)} \label{eq.Fbar-time-average}
\end{align}
\label{lem.time-average}
\end{lemma}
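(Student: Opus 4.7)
My overall strategy is to apply the same two ingredients used in the proof of Lemma \ref{lem.metrics-single-cycle}: the renewal-reward theorem for the regenerative process $F^-$ (e.g. Theorem 1(a) of \cite{Vlasiou2011}), which gives almost-sure convergence of time averages, and the discrete-time PASTA (Bernoulli Arrivals See Time Averages) result of \cite{Makowski1989}, which equates a time-stationary average with the consumption-arrival-conditioned average. In fact, the first identity (\ref{eq.A-time-average}) has already been proven en route to Lemma \ref{lem.metrics-single-cycle}: equation (\ref{eqn.A-with-PASTA}) is exactly this statement, obtained by applying PASTA with the indicator $\mathbbm{1}_{\{x>0\}}$.

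To establish (\ref{eq.Fbar-time-average}), I would apply these two tools to the pair of bounded measurable reward functions $g_1(x)=x\mathbbm{1}_{\{x>0\}}$ and $g_2(x)=\mathbbm{1}_{\{x>0\}}$. First, Theorem 1(a) of \cite{Vlasiou2011} applied to $F^-$ with reward $g_i$ yields
\begin{equation}
\lim_{t\to\infty}\frac{1}{t}\sum_{s=1}^t g_i\bigl(F^-(s)\bigr) \stackrel{\text{a.s.}}{=} \frac{1}{\mathbb{E}[V_1]}\mathbb{E}\!\left[\sum_{s=1}^{V_1} g_i\bigl(F^-(s)\bigr)\right],
\end{equation}
for $i=1,2$, with the $i=2$ limit strictly positive (equal to $\mathbb{E}[T_{\mathrm{occ}}^{(1)}]/\mathbb{E}[V_1]>0$, as shown in the proof of Lemma \ref{lem.metrics-single-cycle}). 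Second, discrete-time PASTA applied separately with $g_1$ and $g_2$ equates each of these time averages with its corresponding consumption-arrival average. Taking the ratio and comparing against Definition (\ref{eq.Fbar-def-user-viewpoint}) produces (\ref{eq.Fbar-time-average}).

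The only subtle point, and thus the main obstacle, is a careful verification of the lack-of-anticipation hypothesis required by discrete-time PASTA. Although consumption actions at times $T_{\mathrm{con}}^{(j)}$ do modify $F^-$ by removing links, the hypothesis demands only that \emph{future} consumption-arrival increments be independent of the \emph{past} state evolution; this is immediate from the modelling assumption that consumption requests form an exogenous i.i.d.\ Bernoulli($p_{\mathrm{con}}$) process. The edge cases (A) and (B) identified just before Lemma \ref{lem.metrics-single-cycle} would be handled separately by direct inspection, as was done there, since in those cases the time averages collapse to the same deterministic values as the user-viewpoint averages.
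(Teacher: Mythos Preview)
Your proposal is correct and takes essentially the same approach as the paper: both proofs combine the discrete-time PASTA result of \cite{Makowski1989} with the renewal-reward/regenerative structure to pass between consumption-arrival averages and time averages, first for the indicator (yielding $A$) and then for the fidelity reward (yielding $\overline{F}$ via the ratio $\overline{F}_{\mathrm{tot}}/A$). The only cosmetic difference is that the paper invokes the prior convergence result from Lemma~\ref{lem.metrics-single-cycle} before applying PASTA to $\overline{F}_{\mathrm{tot}}$, whereas you invoke Theorem~1(a) of \cite{Vlasiou2011} directly for each reward $g_i$; these are the same underlying argument.
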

\begin{proof}
    The identity for $A$ is a direct application of the PASTA property in discrete time \cite{Makowski1989}, which we also saw in the proof of Lemma \ref{lem.first-cycle}.
    
    For the second equality, from (\ref{eq.Fbar-def-user-viewpoint}) we firstly rewrite $\overline F$ as
\begin{align}
    \overline{F} &= \lim_{m\rightarrow \infty} \frac{ \frac{1}{m}\sum_{j = 1}^{m} F^-\!\left(T_{\mathrm{con}}^{(j)}\right)}{\frac{1}{m} \sum_{j = 1}^{m} \mathbbm{1}_{\mathrm{l.e.}}\!\left(F^-\left(T_{\mathrm{con}}^{(j)}\right)\right)} 
    = \frac{\overline{F}_{\mathrm{tot}}}{A},
    \label{eq.F=Ftot/A}
\end{align}
where 
\begin{align*}
    \overline{F}_{\mathrm{tot}} \coloneqq \lim_{m\rightarrow \infty} \frac{1}{m}\sum_{j = 1}^{m} F^-\!\left(T_{\mathrm{con}}^{(j)}\right)
\end{align*}
is the average fidelity seen by users, without conditioning on the fidelity being nonzero. In (\ref{eq.F=Ftot/A}), we have removed the indicator function from the sum in the numerator by recalling that $F^-(T_{\mathrm{con}}^{(j)})=0$ if the $j$-th consumption request does not find a link in memory. Then, since $\overline{F}_{\mathrm{tot}} = \overline{F}\cdot A$ and by Lemma \ref{lem.first-cycle} both $\overline{F}$ and $A$ converge, the PASTA property can be applied and we have that
\begin{equation}
    \overline{F}_{\mathrm{tot}} = \lim_{t\rightarrow \infty} \frac{1}{t}\sum_{s=1}^{t} F^-(s), \text{ a.s.}
    \label{eq.F_tot-with-PASTA}
\end{equation}
Then,
\begin{align}
    \overline{F} &= \frac{\overline{F}_{\mathrm{tot}}}{A}= \frac{ \lim_{t\rightarrow \infty } \frac{1}{t}\sum_{s = 1}^{t} F^-(s)}{ \lim_{t\rightarrow \infty } \frac{1}{t} \sum_{s = 1}^{t} \mathbbm{1}_{\mathrm{l.e.}}\!\big(F^-(s)\big)} \\ &= \lim_{t\rightarrow \infty} \frac{ \sum_{s = 1}^{t} F^-(s)}{  \sum_{s = 1}^{t} \mathbbm{1}_{\mathrm{l.e.}}\!\big(F^-(s)\big)} \\ &= \lim_{t\rightarrow \infty} \frac{ \sum_{s = 1}^{t} F^-(s) \mathbbm{1}_{\mathrm{l.e.}}\!\big(F^-(s)\big)}{  \sum_{s = 1}^{t} \mathbbm{1}_{\mathrm{l.e.}}\!\big(F^-(s)\big)} \text{ a.s.}
\end{align}
\end{proof}

In the following, we show that our performance metrics may be computed as limiting values of properties of $F(t)$. Note that this was the definition used in ref~\cite{Davies2023a}. 
\begin{lemma}[Performance metrics, limiting values]
Suppose that the 1G$n$B system parameters are not in edge cases (A) or (B). Then, our performance metrics may be computed as 
\begin{align}
    A &= \lim_{t\rightarrow \infty} \mathrm{P}\!\left(F^-(t)>0\right)\text{    a.s.} \label{eq.A-limit-distribution} \\ \overline{F} &= \lim_{t\rightarrow\infty } \mathbb{E}\left[ F^-(t)|F^-(t)>0\right]\text{    a.s.}
    \label{eq.Fbar-limit-distribution}
\end{align}
\label{lem.metrics-limiting-values}
\end{lemma}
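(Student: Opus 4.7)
The plan is to exploit the fact that $F^-$ is a regenerative process with aperiodic cycle lengths $\{V_k\}$ of finite mean, a structure already established in the discussion preceding Lemma~\ref{lem.metrics-single-cycle}. For such processes, there is a classical renewal theorem (the same Theorem 1 of \cite{Vlasiou2011} already used in the proof of Lemma~\ref{lem.metrics-single-cycle}, part (b) rather than part (a)) which states that, for any measurable set $B$, the \emph{pointwise} limit
\begin{equation*}
\lim_{t\rightarrow\infty} \mathrm{P}\!\left(F^-(t)\in B\right)
\;=\; \frac{1}{\mathbb{E}[V_1]}\,\mathbb{E}\!\left[\sum_{s=1}^{V_1}\mathbbm{1}_{B}\!\left(F^-(s)\right)\right]
\;=\; \mathrm{P}\!\left(F^-_{\infty}\in B\right)
\end{equation*}
exists and coincides with the Cesàro limit (\ref{eq.def_F^-_infty}) used to define $F^-_{\infty}$. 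This is exactly the additional ingredient we need beyond what was used for Lemma~\ref{lem.time-average}: aperiodicity upgrades time averages to true limits.

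First, I would prove (\ref{eq.A-limit-distribution}). Taking $B=(0,1]$ in the renewal theorem gives $\lim_{t\to\infty}\mathrm{P}(F^-(t)>0) = \mathrm{P}(F^-_\infty>0)$. On the other hand, the argument already carried out in Lemma~\ref{lem.metrics-single-cycle} (see equations (\ref{eq.limP(F_infty)-one-cycle}) and (\ref{eqn.A-with-PASTA})) shows that $A = \mathrm{P}(F^-_\infty>0)$ almost surely. Combining these two identities gives the claim. Next, for (\ref{eq.Fbar-limit-distribution}), I would apply the renewal theorem with $B=\{0\}^c$, intersected with arbitrary Borel sets, to conclude that the conditional distribution $\mathrm{P}(F^-(t)\in\,\cdot\,\mid F^-(t)>0)$ converges to the conditional law of $F^-_\infty$ given $F^-_\infty>0$. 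Since $F^-(t)$ is bounded (it lies in $[1/4,1]$ whenever positive, by (\ref{eq.fidelity_decay}) and the fact that $F_{\mathrm{new}}\leq 1$), we get convergence of conditional expectations,
\begin{equation*}
\lim_{t\to\infty}\mathbb{E}\!\left[F^-(t)\mid F^-(t)>0\right]
= \mathbb{E}\!\left[F^-_\infty\mid F^-_\infty>0\right].
\end{equation*}
It then remains to identify the right-hand side with $\overline F$. This follows from the decomposition $\overline F = \overline F_{\mathrm{tot}}/A$ used in Lemma~\ref{lem.time-average} (see (\ref{eq.F=Ftot/A}) and (\ref{eq.F_tot-with-PASTA})): $\overline F_{\mathrm{tot}} = \mathbb{E}[F^-_\infty]$ and $A = \mathrm{P}(F^-_\infty>0)$, whose ratio is exactly the conditional expectation above.

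The main obstacle I anticipate is citing the correct form of the key renewal theorem: the standard statement gives Cesàro convergence, but for pointwise convergence one needs aperiodicity of the cycle distribution, which is precisely why the edge cases (A) and (B) are excluded in the hypothesis. I would therefore open the proof with a brief paragraph verifying that under the stated assumption the cycle length $V_1$ is aperiodic and integrable (already done in the setup preceding Lemma~\ref{lem.metrics-single-cycle}), so that the pointwise version of the renewal theorem applies. Everything else is bookkeeping between the time-average representation in Lemma~\ref{lem.time-average} and the limiting-distribution representation produced by the renewal theorem.
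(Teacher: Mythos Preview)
Your approach is essentially the paper's: both use the regenerative structure of $F^-$ with aperiodic, finite-mean cycles to invoke part~(b) of Theorem~1 in \cite{Vlasiou2011}, obtain pointwise convergence of $\mathrm{P}(F^-(t)\in B)$, and then read off $A$ and $\overline F$ from the limiting distribution $F^-_\infty$.

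The one step you assert without justification is $\overline{F}_{\mathrm{tot}} = \mathbb{E}[F^-_\infty]$. Lemma~\ref{lem.time-average} only gives $\overline{F}_{\mathrm{tot}}$ as the almost-sure time average $\lim_{t\to\infty}\frac{1}{t}\sum_{s=1}^t F^-(s)$ via PASTA; identifying this with $\mathbb{E}[F^-_\infty]$ still requires an argument. The paper fills this gap with an explicit renewal--reward sandwich (bounding $\frac{1}{t}\sum_{s=1}^t F^-(s)$ between partial sums of cycle rewards $\tilde R_k$ and applying SLLN) to obtain $\overline{F}_{\mathrm{tot}} = \mathbb{E}[\tilde R_1]/\mathbb{E}[V_1]$, and then uses Vlasiou part~(b) again to see that this equals $\lim_{t\to\infty}\mathbb{E}[F^-(t)]$. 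Your route via convergence of conditional laws plus boundedness is equally valid for passing to $\mathbb{E}[F^-_\infty\mid F^-_\infty>0]$, but you still need one of these ingredients to close the loop back to $\overline F$.
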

\begin{proof}
Since $F^-(t)$ is a regenerative process with finite mean 
and an aperiodic cycle length, it follows that the limiting distribution is well-defined in the following sense. 
As in the proof of Lemma \ref{lem.first-cycle}, we let $F^-_{\infty}$ be a random variable with distribution given by (\ref{eq.def_F^-_infty}). Then, 
by e.g. parts (a) and (b) of Theorem 1 of \cite{Vlasiou2011}, we have
\begin{equation}
    \lim_{t\rightarrow \infty } \mathrm{P}\left(F^-(t) \in B \right) = \mathrm{P}(F^-_{\infty}\in B).
\end{equation}
We therefore see that 
\begin{equation}
    \lim_{t\rightarrow \infty } \mathrm{P}\left(F^-(t) > 0\right) = \mathrm{P}(F^-_{\infty}>0) = A,
\end{equation}
where we have used the identity for $A$ which we saw in (\ref{eqn.A-with-PASTA}) in the proof of Lemma \ref{lem.first-cycle}. This shows (\ref{eq.A-limit-distribution}).

To show the identity for $\overline{F}$, we make use of the renewal-reward theorem (see e.g. 10.5.1 of \cite{Grimmett2020}). From the previous discussion, associated with the regenerative process $\{F(t) , t\in \mathbb{N} \}$ with cycle times $\{W_k\}$, there is a renewal process $N(t) = \sup \{ k: W_k\leq t\}$. We then define the reward $\tilde{R}_k$ as the sum of fidelity over the $k$-th cycle,
\begin{equation}
    \tilde{R}_k = \sum_{t= W_{k-1}+1}^{W_k} F^-(t).
\end{equation}
Then, the cumulative reward up to time $t$ is given by 
\begin{align}
    \tilde{C}(t) &= \sum_{s=1}^t F^-(s) \label{eq.C(t)_sum_F^-} \\  &=  \sum_{k=1}^{N(t)} \tilde{R}_k + E(t),
\end{align}
where we have defined
\begin{equation}
    E(t) = \sum_{s=W_{N(t)}+1}^{t} F^-(s)
\end{equation}
to be the remainder of the reward that is not contained in a full cycle. Then, we see that
\begin{align}
    \frac{\tilde{C}(t)}{t} &\leq  \frac{1}{t}\sum_{k=1}^{N(t)+1}\tilde{R}_k \nonumber
    \\ &= \frac{\sum_{k=1}^{N(t)+1}\tilde{R}_k}{N(t)+1}\cdot \frac{N(t)+1}{t}. \label{eq.Ctilde/T-split}
\end{align}
We will now  the strong law of large numbers (SLLN) for both terms in the above product.  In particular, the convergence of $(N(t)+1)/t$ may be seen by noticing that
\begin{equation}
   \frac{\sum_{k=1}^{N(t)} V_k}{N(t)}\cdot \frac{N(t)}{N(t)+1} < \frac{t}{N(t)+1} \leq \frac{\sum_{k=1}^{N(t)+1} V_k}{N(t)+1}
\end{equation}
and using SLLN shows that the upper and lower bound converge to $\mathbb{E}\left[ V_1 \right]$. From (\ref{eq.Ctilde/T-split}), we therefore see that
\begin{align}
    \lim_{t\rightarrow \infty} \frac{\tilde{C}(t)}{t}  &\leq  \frac{\mathbb{E}\left[ \tilde{R}_1 \right]}{\mathbb{E}\left[ V_1 \right]} \\ &= \frac{\mathbb{E}\left[ \sum_{t=1}^{V_1} F^-(t) \right]}{\mathbb{E}\left[ V_1 \right]} \text{  a.s.}
    \label{eq.C(t)/t-UB}
\end{align}
Similarly,
\begin{align}
    \lim_{t\rightarrow \infty} \frac{\tilde{C}(t)}{t} &\geq \lim_{t\rightarrow \infty} \frac{\sum_{k=1}^{N(t)}\tilde{R}_k}{N(t)+1}\cdot \frac{N(t)}{t} \\ &= \frac{\mathbb{E}\left[ \tilde{R}_1 \right]}{\mathbb{E}\left[ V_1 \right]}\text{     a.s.}
    \label{eq.C(t)/t-LB}
\end{align}
Combining (\ref{eq.F_tot-with-PASTA}), (\ref{eq.C(t)_sum_F^-}), (\ref{eq.C(t)/t-UB}) and (\ref{eq.C(t)/t-LB}), we therefore see that 
\begin{equation}
   \overline{F}_{\mathrm{tot}} = \lim_{t\rightarrow \infty} \frac{1}{t} \sum_{s=1}^t F^-(s)   =  \lim_{t\rightarrow \infty} \frac{\tilde{C}(t)}{t} = \frac{\mathbb{E}\left[ \sum_{t=1}^{V_1} F^-(t) \right]}{\mathbb{E}\left[ V_1 \right]}\text{    a.s.}
\end{equation}
Moreover, using part (b) of Theorem 1 from \cite{Vlasiou2011} , we see that 
\begin{equation}
    \lim_{t\rightarrow \infty } \mathbb{E}\left[ F^-(t) \right] = \frac{\mathbb{E}\left[ \sum_{t=1}^{V_1} F^-(t) \right]}{\mathbb{E}\left[ V_1 \right]},
\end{equation}
and therefore $\overline{F}_{\mathrm{tot}} = \lim_{t\rightarrow \infty } \mathbb{E}\left[ F^-(t) \right]$.
Then, we have 
\begin{align}
    \lim_{t\rightarrow \infty} \mathbb{E}\left[ F^-(t)|F^-(t)>0\right] &=  \lim_{t\rightarrow \infty} \frac{\mathbb{E}\left[ F^-(t)\mathbb{1}_{\mathrm{l.e.}} \left( F^-(t)\right)\right]}{\mathrm{P}(F^-(t)>0)} \\ &= \lim_{t\rightarrow \infty} \frac{\mathbb{E}\left[ F^-(t)\right]}{\mathrm{P}(F^-(t)>0)} \\ &=  \frac{\overline{F}_{\mathrm{tot}}}{A} = \overline{F}.
\end{align}
\end{proof}


\clearpage
\section{Derivation of formulae for performance metrics}
\label{app.solutions}
In this appendix, we prove Theorems \ref{theorem.A} and \ref{theorem.F}, which contain the formulae for the availability and the average consumed fidelity of the 1G$n$B system. 

For these derivations, we work with the following change of variable.
\begin{definition}[Shifted fidelity]
    The \textit{shifted fidelity} $H$ of the 1G$n$B system is given by 
    \begin{equation}
        H \coloneqq F - \frac{1}{4}, 
    \end{equation}
    where $F$ is the fidelity of the link in the G memory.
\end{definition}
This will simplify our calculations because under decoherence, the shifted fidelity changes due to a multiplicative exponential factor. In particular, given an initial value $h$ of the shifted fidelity, after $t$ time steps this reduces to
\begin{equation}
    h \rightarrow e^{-\Gamma t} h.
\end{equation}
We see that the shifted fidelity does not inherit linear terms under decoherence, in contrast to the fidelity, which decays according to (\ref{eq.fidelity_decay}).  This will simplify our derivations.

After successful $(k+1)$-to-$1$ purification, the value $h$ of the shifted fidelity undergoes a jump given by
\begin{equation}
    \tilde{J}_k(h) \coloneqq J_k\! \left(h+\frac{1}{4} \right) - \frac{1}{4} = \frac{a_k h + b_k}{c_k h + d_k} 
    \label{eq.jump-for-G}
\end{equation}
where we have used (\ref{eq.Jk_def}). Similarly, the probability of successful purification is
\begin{equation}
    \tilde{p}_k(h) \coloneqq p_k\! \left(h+\frac{1}{4} \right) = c_k h + d_k.
    \label{eq.prob-for-G}
\end{equation}
Therefore, $\tilde{J}_k$ and $\tilde{p}_k$ are the jump function and success probability of the corresponding purification events for the shifted fidelity.

Finally, we notice that the range for the fidelity $F \in [0,1]$ translates to $H\in \left[- \frac{1}{4}, \frac{3}{4} \right]$. In particular, we have $H<0$ if and only if there is no link in the G memory.

We have fully characterised the dynamics of the shifted fidelity in 1G$n$B (decoherence, purification, and link removal). Our two key performance metrics may then be rewritten in terms of $H$. Recall that with the assumption $F_{\mathrm{new}}>1/4$, and the depolarising decoherence model (\ref{eq.fidelity_decay}), a link exists at time $t$ if and only if $F(t)>1/4$, or equivalently $H(t)>0$.
Let us again denote the indicator function when acting on the shifted fidelity as 
\begin{align*}
\mathbb{1}_{\text{link exists}}(H) \equiv   \mathbb{1}_{\mathrm{l.e.}}(H) =    \begin{cases}
        1 \text{ if } H\geq 0, \\ 
        0 \text{ if } H<0.
    \end{cases}  
\end{align*}
Recalling Definition \ref{def.A}, the availability may then be written as 
\begin{equation}
 A = \lim_{m\rightarrow \infty}\frac{1}{m} \sum_{j = 1}^{m} \mathbbm{1}_{\mathrm{l.e.}}\left(H(T_{\mathrm{con}}^{(j)})\right).
\end{equation}
Recalling Definition \ref{def.F}, the average consumed fidelity may be rewritten as
\begin{align}
    \overline{F} &= \lim_{m\rightarrow \infty} \frac{\sum_{j = 1}^{m} F(T_{\mathrm{con}}^{(j)})\cdot  \mathbbm{1}_{\mathrm{l.e.}}\left(F(T_{\mathrm{con}}^{(j)})\right)}{\sum_{j = 1}^{m} \mathbbm{1}_{\mathrm{l.e.}}\left(F(T_{\mathrm{con}}^{(j)})\right) } \nonumber \\ &=\lim_{m\rightarrow \infty} \frac{\sum_{j = 1}^{m} \left(\frac{1}{4} + H(T_{\mathrm{con}}^{(j)})\right) \cdot \mathbbm{1}_{\mathrm{l.e.}}\left(H(T_{\mathrm{con}}^{(j)})\right)  }{\sum_{j = 1}^{m} \mathbbm{1}_{\mathrm{l.e.}}\left(H(T_{\mathrm{con}}^{(j)})\right)} \nonumber \\ &= \lim_{m\rightarrow \infty } \left[ \frac{1}{4} + \frac{\sum_{j = 1}^{m}  H(T_{\mathrm{con}}^{(j)}) \cdot \mathbbm{1}_{\mathrm{l.e.}}\left(H(T_{\mathrm{con}}^{(j)})\right) }{\sum_{j = 1}^{m} \mathbbm{1}_{\mathrm{l.e.}}\left(H(T_{\mathrm{con}}^{(j)})\right) }\right] \nonumber \\ &= \frac{1}{4} + \overline{H}. \label{eq.Fbar-Hbar-relation}
\end{align}
We have now written $\overline{F}$ in terms of $\overline{H}$, where
\begin{equation}
    \overline{H} \coloneqq \lim_{m \rightarrow \infty } \left[\frac{\sum_{j = 1}^{m}  H(T_{\mathrm{con}}^{(j)}) \cdot \mathbbm{1}_{\mathrm{l.e.}}\left(H(T_{\mathrm{con}}^{(j)})\right) }{\sum_{j = 1}^{m} \mathbbm{1}_{\mathrm{l.e.}}\left(H(T_{\mathrm{con}}^{(j)})\right) }\right]
\end{equation}
is the average consumed \textit{shifted} fidelity. Finding a formula for $\overline{F}$ then reduces to finding a formula for $\overline{H}$.

From now on, we will assume that the system starts with shifted fidelity $H(0) = H_{\mathrm{new}}$, where
\begin{equation}
    H_{\mathrm{new}} \coloneq F_{\mathrm{new}} - \frac{1}{4}
\end{equation}
is the state of the G memory immediately after transferring a freshly generated link into memory. Note that $H_{\mathrm{new}}$ is a constant, as newly generated links are assumed to be identical. The subsequent dynamics of the system will then be as follows: the link may undergo decoherence followed by purification a number of times, until the link is removed. The removal is due to either consumption or purification failure. After the link is removed, entanglement generation will be attempted until success, at which point a link is transferred to the G memory with shifted fidelity $H_{\mathrm{new}}$. See Figure~\ref{fig.one-cycle} for an illustration of this.

\begin{definition}
We define $T_0 = 0$, $\{T_i\}_{i=1}^{\infty}$ to be the times at which $H$ (equivalently, $F$) experiences a change that is due to purification, consumption or entanglement generation (or alternatively, any change that is not due to decoherence). Let $S_i \coloneqq T_i - T_{i-1}$ denote the times between each jump. 
\label{def.jump-times}
\end{definition}
We also refer to the $\{T_i\}$ as the \textit{jump times}. See Figure~\ref{fig.one-cycle} for a depiction. 

\begin{figure}[t]
    \centering
    \includegraphics[width=0.7\linewidth]{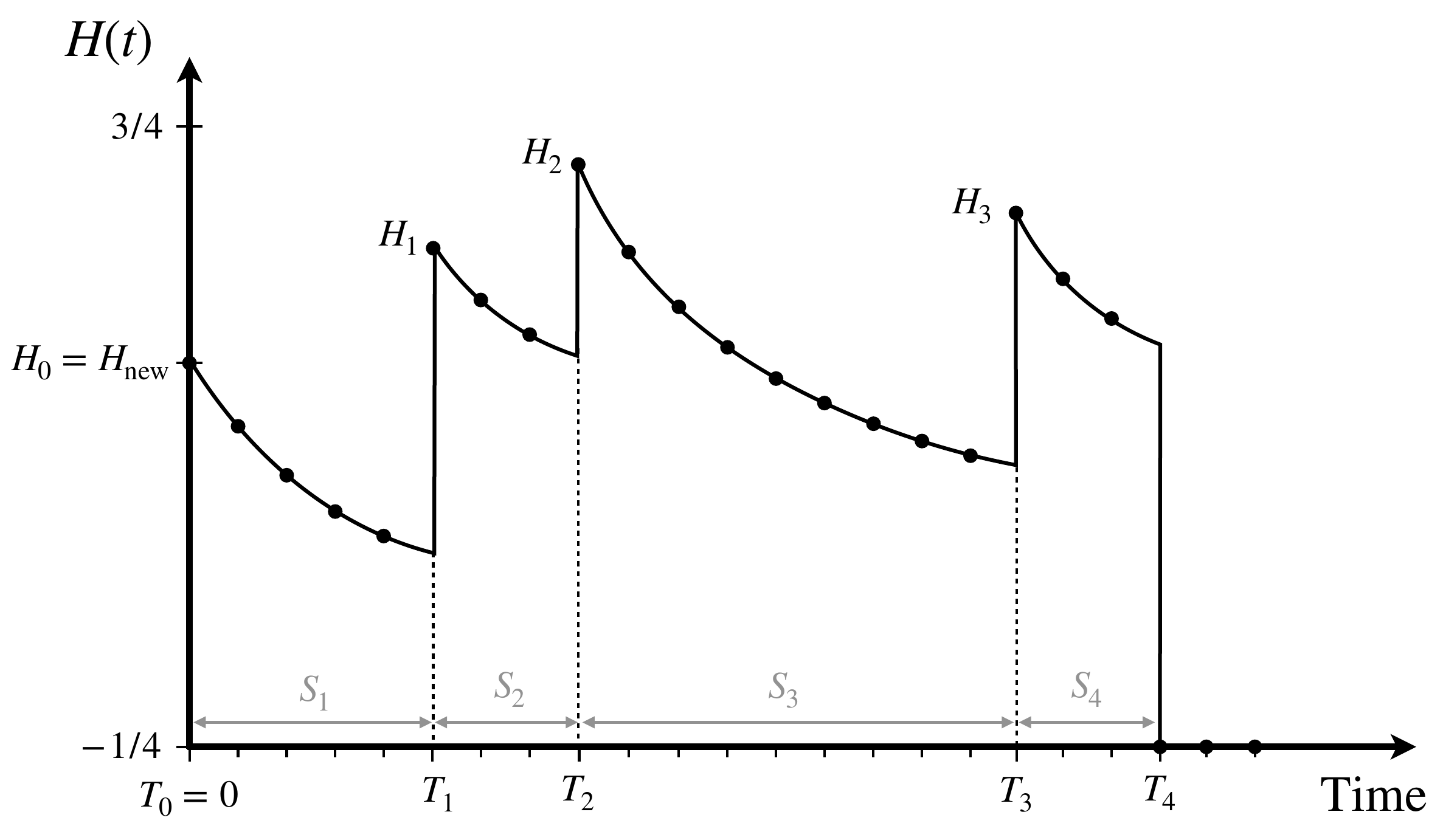}
    \caption{\textbf{Example dynamics of shifted fidelity in the first cycle of 1G$n$B.}
    We assume that $H(0) = H_{\mathrm{new}}$, or equivalently that a freshly generated link is transferred to memory at time $t=0$. The $\{T_i\}_{i\geq 0}$ are defined to be the times at which there are changes in the (shifted) fidelity that are not due to decoherence. We let $T_{\mathrm{occ}}$ be the first time at which the link is removed from the G memory. In the example, $T_{\mathrm{occ}} = T_4$.}
    \label{fig.one-cycle}
\end{figure}
Now, recall that both the time until entanglement generation and consumption are assumed to be geometrically distributed. Then, the distribution of $S_i$ is then given by
\begin{equation}
    S_i =
    \begin{cases}
        \min \left\{\tau^{(i)}_{\mathrm{pur}},\tau^{(i)}_{\mathrm{con}} \right\} &\text{ if } H(T_{i-1}) \geq 0 \\
         T_{\mathrm{gen}}^{(i)} &\text{ if } H(T_{i-1}) < 0,
    \end{cases}
    \label{eq.S_i-distribution}
\end{equation}
where $T_{\mathrm{gen}}^{(i)}$, $\tau_{\mathrm{pur}}^{(i)}$, and $\tau_{\mathrm{con}}^{(i)}$ are independent random variables with the following distributions
\begin{align}
    T^{(i)}_{\mathrm{gen}} &\sim \mathrm{Geo}\left(1-(1-p_{\mathrm{gen}})^n \right) \nonumber
    \\
    \tau_{\mathrm{pur}}^{(i)} &\sim \mathrm{Geo} \left(q(1-(1-p_{\mathrm{gen}})^n)\right) \nonumber
    \\ 
    \tau_{\mathrm{con}}^{(i)} &\sim \mathrm{Geo}(p_{\mathrm{con}}). \label{eq.dist-tau}
\end{align}
Here, starting at jump time $T_{i-1}$, $T^{(i)}_{\mathrm{gen}}$ is the time until a new link is generated and transferred to memory, $\tau_{\mathrm{pur}}^{(i)}$ is the time until there is a successful generation and the system decides to attempt purification, and $\tau_{\mathrm{con}}^{(i)}$ is the time until there is a consumption request. 
\begin{definition}
    For $i\geq 0$, we define $H_i \coloneqq H(T_i)$ to be the shifted fidelity at the jump times of the process. See Figure~\ref{fig.one-cycle} for an illustration.
\end{definition}
Since we assume that the system starts with a freshly generated link in memory, we have $H_0 = H_{\mathrm{new}}$. We note that $\{H_i\}_{i\geq 0}$ is a Markov chain.
\begin{definition}
    Let $T_{\mathrm{occ}}$ be the first time at which the link in the G memory is removed from the system. In particular, $T_{\mathrm{occ}} = T_N$, where
    \begin{equation}
        N = \min\big\{i:H_{i} < 0 \big\}.
        \label{eq.def-N}
    \end{equation}
    \label{def.T_occ}
\end{definition}
Note that $N$ is finite a.s. since it is upper bounded by the time until the first consumption request arrives, which follows a geometric distribution.

In Appendix~\ref{app.viewpoint}, we saw that $F(t)$ is a regenerative process, meaning that it can be broken down into i.i.d. cycles $V_k = T_{\mathrm{occ}}^{(k)}+ T_{\mathrm{gen}}^{(k)}$, where $T_{\mathrm{occ}}^{(k)}$ are the times during which the G memory is occupied and $T_{\mathrm{gen}}^{(k)}$ are the times during which the G memory is empty. We note that in Definition \ref{def.T_occ}, $T_{\mathrm{occ}} = T_{\mathrm{occ}}^{(1)}$. From now on, we also refer to $T_{\mathrm{gen}}\equiv T_{\mathrm{gen}}^{(1)}$.

It follows straightforwardly that $H(t) = F(t) - 1/4$ is a regenerative process with the same cycles as $F(t)$. We saw in Lemma \ref{lem.metrics-single-cycle} that the performance metrics may be rewritten in terms of the statistical properties of one cycle. This result also holds for $\overline{H}$, which we restate below.
Recalling the notation introduced in Appendix \ref{app.viewpoint} for $F^-$, we will also use  the equivalent notation for $H^-$, i.e.
\begin{equation}
    H^-(t) = F^-(t) - \frac{1}{4}.
\end{equation}
\begin{lemma}[Performance metrics for $H$, single cycle]
The availability is given by
\begin{equation}
    A = \frac{\mathbb{E}[T_{\mathrm{occ}}]}{\mathbb{E}[T_{\mathrm{occ}}]+\mathbb{E}[T_{\mathrm{gen}}]} \text{ a.s.}
    \label{eq.A-first-cycle}
\end{equation}
and the average consumed (shifted) fidelity is given by
\begin{align}
    \overline{H} = \mathbb{E}\! \left[e^{-\Gamma S_N }H_{N-1}|\tau_{\mathrm{con}}^{(N)}\leq \tau_{\mathrm{pur}}^{(N)}\right]\text{ a.s.}
    \label{eq.Gbar-first-cycle}
\end{align}
where $C_1\equiv \{\tau_{\mathrm{con}}^{(N)}\leq \tau_{\mathrm{pur}}^{(N)}\}$ is the event that the link is consumed at time $T_{\mathrm{occ}}$.
\label{lem.first-cycle}
\end{lemma}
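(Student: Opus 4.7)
The plan is to obtain both identities by translating the corresponding statements of Lemma~\ref{lem.metrics-single-cycle} (which are already phrased in terms of a single cycle of $F$) into the language of the shifted fidelity $H$ and the jump-time framework introduced in Definition~\ref{def.jump-times}. Since $F$ and $H=F-1/4$ are in bijection and share the same regeneration cycles, every statement about cycle lengths $T_{\mathrm{occ}}$ and $T_{\mathrm{gen}}$ transfers directly.

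First, the identity for $A$. The availability is a property only of whether the G memory is occupied, which is insensitive to the $-1/4$ shift: $\mathbbm{1}_{\mathrm{l.e.}}\!\left(H^-(t)\right)=\mathbbm{1}_{\mathrm{l.e.}}\!\left(F^-(t)\right)$. Since the cycle decomposition $V_1=T_{\mathrm{occ}}+T_{\mathrm{gen}}$ is defined in terms of the empty/occupied state of the G memory, it is identical for $F$ and $H$. Thus (\ref{eq.A-first-cycle}) is just (\ref{eq.A-single-cycle}) restated.

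Second, the identity for $\overline{H}$. Starting from (\ref{eq.Fbar-Hbar-relation}), one has $\overline{F}=\tfrac{1}{4}+\overline{H}$ with the same empty-link indicator, and Lemma~\ref{lem.metrics-single-cycle} gives $\overline{F}=\mathbb{E}[F^-(T_{\mathrm{occ}})\mid C_1]$. Subtracting $1/4$ from both sides and using $H^-(t)=F^-(t)-1/4$ yields
\begin{equation*}
    \overline{H}=\mathbb{E}\!\left[H^-(T_{\mathrm{occ}})\,\middle|\,C_1\right]\text{ a.s.}
\end{equation*}
It then remains to (i) rewrite $H^-(T_{\mathrm{occ}})$ in terms of $H_{N-1}$ and $S_N$, and (ii) rewrite the conditioning event $C_1$ in terms of $\tau_{\mathrm{con}}^{(N)}$ and $\tau_{\mathrm{pur}}^{(N)}$.

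For step~(i), by Definition~\ref{def.T_occ} we have $T_{\mathrm{occ}}=T_N$, and by Definition~\ref{def.jump-times} nothing non-decoherence happens strictly between $T_{N-1}$ and $T_N$. Hence for $0\leq s<S_N$, $H(T_{N-1}+s)=e^{-\Gamma s}H_{N-1}$, and applying the formula (\ref{eq.def-F(t^-)}) for one additional step of decoherence gives $H^-(T_N)=e^{-\Gamma S_N}H_{N-1}$ (using that $H_{N-1}\geq 0$, so a link is present throughout the interval). For step~(ii), the event $C_1$ that the link is removed by consumption (not failed purification) corresponds exactly, at jump index $N$, to the next event being a consumption request rather than a purification attempt; by (\ref{eq.S_i-distribution})--(\ref{eq.dist-tau}) this is $\{\tau_{\mathrm{con}}^{(N)}\leq\tau_{\mathrm{pur}}^{(N)}\}$. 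Substituting both expressions yields~(\ref{eq.Gbar-first-cycle}).

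The main subtlety is step~(i): one must check that the ``$t-1$'' appearing in the definition of $F^-$ is consistent with the shifted-fidelity bookkeeping at the boundary $T_N$, i.e.\ that $H$ still evolves deterministically by decoherence over the full interval $[T_{N-1},T_N)$ even on the event that purification is \emph{attempted} and fails at time $T_N$ (since the failure acts only at the jump time itself). Given the convention in (\ref{eq.def-F(t^-)}), this reduces to noting that the attempted-purification or consumption event at $T_N$ is registered at $T_N$ and not earlier, so $F^-(T_N)$ sees only the decoherence accumulated between $T_{N-1}$ and $T_N$.
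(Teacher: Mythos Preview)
Your proposal is correct and follows essentially the same approach as the paper: both obtain the availability identity directly from Lemma~\ref{lem.metrics-single-cycle}, and for $\overline{H}$ both start from $\overline{F}=\mathbb{E}[F^-(T_{\mathrm{occ}})\mid C_1]$, shift by $1/4$, identify $H^-(T_{\mathrm{occ}})$ with $e^{-\Gamma S_N}H_{N-1}$ via the jump-time decomposition, and recognise $C_1$ as $\{\tau_{\mathrm{con}}^{(N)}\leq\tau_{\mathrm{pur}}^{(N)}\}$. Your write-up is somewhat more explicit about the decoherence bookkeeping at the boundary $T_N$, but the argument is the same.
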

\begin{proof}
The identity (\ref{eq.A-first-cycle}) follows directly from Lemma \ref{lem.metrics-single-cycle}. In the same Lemma, we saw that 
\begin{equation}
    \overline{F} = \mathbb{E}[F(T_{\mathrm{occ}}^{(1)-})|C_1],
\end{equation}
where $C_1$ is the event that the first link is removed due to consumption, and we recall the notation
$$F^-(t) = e^{-\Gamma}\left(F(t-1)-\frac{1}{4}\right) + \frac{1}{4},$$
which is necessary to capture the fidelity when \textit{consumed} at time $t$, since the discrete-time stochastic process is defined such that $H(T_{\mathrm{occ}}^{(1)})=0$.
The value of $H^-(T_{\mathrm{occ}}^{(1)})$ is given by $e^{-\Gamma S_N} H_{N-1}$, where $H_{N-1}$ is the value of the shifted fidelity at the previous jump time (see Definition \ref{def.jump-times}) and $S_N$ is the time the link spends decohering in memory from that point until the link is removed from memory (see Definition \ref{def.T_occ}). For the conditioning, we recall from (\ref{eq.S_i-distribution}) that $C_1 \equiv \{\tau_{\mathrm{con}}^{(N)}\leq \tau_{\mathrm{pur}}^{(N)}\}$. 
\end{proof}
By properties of geometric random variables, we already know that 
\begin{equation*}
    \mathbb{E}[T^{(1)}_{\mathrm{gen}}] = \frac{1}{1 - (1-p_{\mathrm{gen}})^n}.
\end{equation*}
To solve for our two performance metrics, it is then sufficient to find formulae for $\mathbb{E}[T_{\mathrm{occ}}]$ and $\mathbb{E}\!\left[e^{-\Gamma S_N }H_{N-1}|\tau_{\mathrm{con}}^{(N)}\leq \tau_{\mathrm{pur}}^{(N)}\right]$.
This is what we  accomplish with the following results.
\begin{definition}
    For $i\leq N$, let $U_i$ denote the event that purification is attempted at the $i$th jump time, and $R_i \subseteq U_i$ denote the event that purification is attempted \textit{and} succeeds at the $i$th jump time. 
\end{definition}
\begin{lemma}
    Let $x$ and $y$ be given by
    \begin{equation}
        x \coloneqq \sum_{i=1}^{\infty} \mathbb{E}\left[H_i \prod_{j=1}^{i}\mathbbm{1}_{R_{j}}\right], \;\;\;\;\;  y \coloneqq \sum_{i=1}^{\infty}\mathrm{P}\! \left(N>i\right). 
        \label{eq.def-x-y}
    \end{equation}
Then,     
    \begin{equation}
        \mathbb{E}[T_{\mathrm{occ}}] = \frac{1+y}{p_{\mathrm{con}}+q \left(1- \left(1-p_{\mathrm{gen}} \right)^n\right)(1-p_{\mathrm{con}})}
        \label{eq.E(T_N)-rewrite}
    \end{equation}
    and
    \begin{equation}
        \mathbb{E}[e^{-\Gamma S_N }H_{N-1}|\tau^{(N)}_{\mathrm{con}}\leq \tau^{(N)}_{\mathrm{pur}}]=\frac{(H_{\mathrm{new}}+x)(p_{\mathrm{con}}+q\left(1- \left(1-p_{\mathrm{gen}} \right)^n\right)(1-p_{\mathrm{con}}))}{(1+y)\left( e^{\Gamma}-1+p_{\mathrm{con}} + q\left(1- \left(1-p_{\mathrm{gen}} \right)^n\right)(1-p_{\mathrm{con}})\right)}.
        \label{eq.acf-in-xy}
    \end{equation}
    \label{lem.metrics-in-x-y}
\end{lemma}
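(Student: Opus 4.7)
The plan is to reduce both quantities in the lemma to expectations over a single regenerative cycle by working with the embedded jump chain $\{H_i\}_{i\geq 0}$ and the inter-jump times $\{S_i\}$. The central structural observation is that, for $i\geq 1$,
\begin{equation*}
    \{N>i\} \;=\; \bigcap_{j=1}^{i} R_j,
\end{equation*}
since the link survives the $i$-th jump if and only if every one of the first $i$ jumps is a successful purification (consumption and failed purification both remove the link). Together with $H_0 = H_{\mathrm{new}}$ and $\{N>0\}=\Omega$, this rewrites the definitions in~(\ref{eq.def-x-y}) as
\begin{equation*}
    x = \sum_{i=1}^{\infty}\mathbb{E}\!\left[H_i\mathbbm{1}_{\{N>i\}}\right], \quad y = \sum_{i=1}^{\infty}\mathrm{P}(N>i), \quad \mathbb{E}[N] = 1+y.
\end{equation*}
Throughout, I let $\mathcal{F}_{i-1}$ denote the $\sigma$-algebra generated by the first $i-1$ jumps, so that $\{N\geq i\}\in \mathcal{F}_{i-1}$ and $H_{i-1}\mathbbm{1}_{\{N\geq i\}}$ is $\mathcal{F}_{i-1}$-measurable, whereas the pair $(\tau_{\mathrm{pur}}^{(i)},\tau_{\mathrm{con}}^{(i)})$ is independent of $\mathcal{F}_{i-1}$.

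For the first identity, I would write $T_{\mathrm{occ}} = T_N = \sum_{i\geq 1} S_i\mathbbm{1}_{\{N\geq i\}}$ and apply Tonelli. On $\{N\geq i\}$ the variable $S_i=\min(\tau_{\mathrm{pur}}^{(i)},\tau_{\mathrm{con}}^{(i)})$ has a $\mathrm{Geo}(\tilde{P})$ distribution with $\tilde{P}\coloneqq p_{\mathrm{con}}+q(1-(1-p_{\mathrm{gen}})^n)(1-p_{\mathrm{con}})$, obtained from the independent geometric laws~(\ref{eq.dist-tau}); hence each summand equals $\mathrm{P}(N\geq i)/\tilde{P}$, and summing gives $\mathbb{E}[T_{\mathrm{occ}}] = \mathbb{E}[N]/\tilde{P} = (1+y)/\tilde{P}$, which is~(\ref{eq.E(T_N)-rewrite}).

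For the second identity, I would write the conditional expectation as a ratio $\mathbb{E}[\,\cdot\,\mathbbm{1}_{C_1}]/\mathrm{P}(C_1)$ and expand the numerator as $\sum_{i\geq 1}\mathbb{E}[e^{-\Gamma S_i}H_{i-1}\mathbbm{1}_{\{N=i\}\cap A_i}]$, where $A_i\coloneqq\{\tau_{\mathrm{con}}^{(i)}\leq \tau_{\mathrm{pur}}^{(i)}\}$ so that $C_1=A_N$. The key simplification is the set identity $\{N=i\}\cap A_i = \{N\geq i\}\cap A_i$: if $A_i$ occurs, the link is consumed at jump $i$, forcing $N=i$. Combined with $S_i\mathbbm{1}_{A_i}=\tau_{\mathrm{con}}^{(i)}\mathbbm{1}_{A_i}$ and the conditional independence above, each summand factors as
\begin{equation*}
    \mathbb{E}\!\left[e^{-\Gamma \tau_{\mathrm{con}}^{(i)}}\mathbbm{1}_{A_i}\right]\cdot \mathbb{E}\!\left[H_{i-1}\mathbbm{1}_{\{N\geq i\}}\right].
\end{equation*}
A direct geometric-series calculation yields $\mathbb{E}[e^{-\Gamma \tau_{\mathrm{con}}^{(i)}}\mathbbm{1}_{A_i}] = p_{\mathrm{con}}/(e^{\Gamma}-1+\tilde{P})$, independent of $i$, and the shifted-index sum collapses to $\sum_{i\geq 0}\mathbb{E}[H_i\mathbbm{1}_{\{N>i\}}] = H_{\mathrm{new}}+x$. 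The same decomposition without the exponential factor gives $\mathrm{P}(C_1) = (p_{\mathrm{con}}/\tilde{P})(1+y)$, and taking the ratio produces exactly~(\ref{eq.acf-in-xy}).

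The only step that demands any care is the set identity $\{N=i\}\cap A_i = \{N\geq i\}\cap A_i$ and the accompanying conditional-independence justification that makes each summand factorise cleanly; together they funnel the whole analysis into quantities already packaged in $x$, $y$, and $H_{\mathrm{new}}$. Everything else reduces to Tonelli, telescoping indicator sums, and the evaluation of two geometric series.
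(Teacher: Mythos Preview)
Your proposal is correct and follows essentially the same route as the paper: the paper uses the event $U_i^{\mathrm{c}}=\{\tau_{\mathrm{con}}^{(i)}\leq\tau_{\mathrm{pur}}^{(i)}\}$ (your $A_i$), the identity $U_i^{\mathrm{c}}\cap\{N=i\}=U_i^{\mathrm{c}}\cap\big(\cap_{j=1}^{i-1}R_j\big)$ (your $\{N=i\}\cap A_i=\{N\geq i\}\cap A_i$), the same decomposition $T_N=\sum_{i\geq1}S_i\mathbbm{1}_{\{N\geq i\}}$, and the same factorisation via the fresh geometric pair $(\tau_{\mathrm{pur}}^{(i)},\tau_{\mathrm{con}}^{(i)})$ to pull out the constants $\mathbb{E}[e^{-\Gamma S_1}\mathbbm{1}_{U_1^{\mathrm{c}}}]=p_{\mathrm{con}}/(e^{\Gamma}-1+\tilde{P})$ and $\mathbb{E}[S_1]=1/\tilde{P}$. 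Your use of the filtration $\mathcal{F}_{i-1}$ is a cleaner packaging of what the paper does by explicit conditioning, but the argument is the same.
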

\begin{proof}
Denoting $U_N^{\mathrm{c}} = \left\{\tau^{(N)}_{\mathrm{con}}\leq \tau^{(N)}_{\mathrm{pur}} \right\}$ and using properties of the conditional expectation,
we may write 
    \begin{align}
        \mathbb{E}[e^{-\Gamma S_N}H_{N-1}|\tau^{(N)}_{\mathrm{con}}\leq \tau^{(N)}_{\mathrm{pur}}] = \frac{\mathbb{E}[e^{-\Gamma S_N }H_{N-1} \mathbbm{1}_{U_N^{\mathrm{c}}}]}{\mathrm{P}(U_N^{\mathrm{c}})}.
        \label{eq.EG_N_fraction}
    \end{align}
The denominator $\mathrm{P}(U_N^{\mathrm{c}})$ may be rewritten as 
    \begin{align}
        \mathrm{P}(U_N^{\mathrm{c}}) &= \mathbb{E}[\mathbbm{1}_{U_N^{\mathrm{c}}}] \nonumber \\ &\stackrel{\text{i}}{=} \mathbb{E}\!\left[\mathbbm{1}_{U_1^{\mathrm{c}}}+ \sum_{i=2}^{\infty} \mathbbm{1}_{U_i^{\mathrm{c}}}\prod_{j=1}^{i-1} \mathbbm{1}_{R_j}\right] \nonumber \\ &\stackrel{\text{ii}}{=} \mathbb{E}\left[ \mathbbm{1}_{U_1^{\mathrm{c}}} \right] + \sum_{i=2}^{\infty} \mathbb{E}\left[\mathbbm{1}_{U_i^{\mathrm{c}}} |R_1, ..., R_{i-1} \right] \mathbb{E}\left[ \prod_{j=1}^{i-1}\mathbbm{1}_{R_j} \right] \nonumber \\ &\stackrel{\text{iii}}{=} \mathbb{E}\left[ \mathbbm{1}_{U_1^{\mathrm{c}}} \right] \left(1+ \sum_{i=1}^{\infty}\mathbb{E}\left[ \prod_{j=1}^{i}\mathbbm{1}_{R_j} \right] \right) \nonumber \\ &\stackrel{\text{iv}}{=} \mathrm{P}\!\left( U_1^{\mathrm{c}}\right)\left(1+ \sum_{i=1}^{\infty}\mathrm{P}\left(N>i \right)\right)
         \nonumber \\ &= \mathrm{P}\!\left( U_1^{\mathrm{c}}\right)\left(1+ y\right). \label{eq.P(U_N^c)}
    \end{align}
In the above, we have used the following steps:
\begin{enumerate}[label=\roman*.]
    \item One may partition the event $U_N^{\mathrm{c}}$ by conditioning on the value of $N$ as
    $$U_N^{\mathrm{c}} = \bigcup_{i=1}^\infty \left( U_i^{\mathrm{c}} \cap \left\{N=i \right\} \right). $$
    Now, notice that we have $U_i^{\mathrm{c}} \cap \{ N=i \}$ exactly when successful purification occurs $i-1$ times, and the link is consumed. Therefore,
    \begin{equation}
        U_i^{\mathrm{c}} \cap \{ N=i \} = U_i^{\mathrm{c}} \cap \left( \cap_{j = 1}^{i-1} R_j \right).
        \label{eq.event-U_i^c-N=i}
    \end{equation}
    Since the $U_i^{\mathrm{c}} \cap \left\{N=i \right\}$ are mutually exclusive, it follows from the above that 
    \begin{align}
        \mathbbm{1}_{U_N^{\mathrm{c}}} &= \sum_{i=1}^{\infty} \mathbbm{1}_{U_i^{\mathrm{c}} \cap \left\{N=i \right\}} \nonumber \\ &= \sum_{i=1}^{\infty} \mathbbm{1}_{ U_i^{\mathrm{c}} \cap \left( \cap_{j = 1}^{i-1} R_j \right)} \nonumber  \\  &= \mathbbm{1}_{U_1^{\mathrm{c}}} + \sum_{i=2}^{\infty} \mathbbm{1}_{U_i^{\mathrm{c}}} \prod_{j=1}^{i-1} \mathbbm{1}_{R_j}.
        \label{eq.indicator-rewrite}
    \end{align}
    \item We use linearity of taking the expectation and take the expectation inside the sum, which is possible by the monotone convergence theorem (see 5.6.12 of \cite{Grimmett2020}). Then, we express the joint probability in terms of conditional probabilities. 
    \item We have used the fact that
    \begin{align*}
        \mathbb{E} \left[\mathbbm{1}_{U_i^{\mathrm{c}}} |R_1, ..., R_{i-1} \right] &= \mathrm{P}(\tau^{(i)}_{\mathrm{con}}\leq \tau^{(i)}_{\mathrm{pur}} ) = \mathrm{P}(\tau^{(1)}_{\mathrm{con}}\leq \tau^{(1)}_{\mathrm{pur}} ) = \mathbb{E} \left[\mathbbm{1}_{U_1^{\mathrm{c}}} \right].
    \end{align*}
    \item Notice that $N>i$ if and only if the first $i$ jump times are due to successful purification. Therefore, 
    \begin{equation*}
        \{N>i\}  \equiv \cap_{j = 1}^i R_j.
    \end{equation*}
 We therefore have 
    \begin{equation*}
        \mathbb{E}\left[ \prod_{j=1}^{i} \mathbbm{1}_{R_j} \right] = \mathbb{E}\left[ \mathbbm{1}_{\{N>i\}} \right] = P (N>i).
    \end{equation*}
\end{enumerate}
Secondly, we rewrite the numerator of (\ref{eq.EG_N_fraction}) in a similar way: 
 \begin{align}
        \mathbb{E}[e^{-\Gamma S_N} H_{N-1} \mathbbm{1}_{U_N^{\mathrm{c}}}] &\stackrel{\text{i}}{=} \mathbb{E}\left[e^{-\Gamma S_N} H_{N-1} \cdot \left( \mathbbm{1}_{U_1^{\mathrm{c}}} + \sum_{i=2}^{\infty} \mathbbm{1}_{U_i^{\mathrm{c}}} \prod_{j=1}^{i-1} \mathbbm{1}_{R_j}\right)\right] 
        \nonumber \\ &\stackrel{\text{ii}}{=} \mathbb{E}\left[H_{\mathrm{new}}e^{-\Gamma {S_1}} \mathbbm{1}_{U_1^{\mathrm{c}}} +\sum_{i=1}^{\infty}  e^{-\Gamma S_{i+1}} H_i \mathbbm{1}_{U_{i+1}^{\mathrm{c}}} \prod_{j=1}^i\mathbbm{1}_{R_{j}} \right] \nonumber \\
        &\stackrel{\text{iii}}{=}H_{\mathrm{new}} \mathbb{E}\left[e^{-\Gamma {S_1}} \mathbbm{1}_{U_1^{\mathrm{c}}} \right] + \sum_{i=1}^{\infty} \mathbb{E}\left[e^{-\Gamma S_{i+1}} \mathbbm{1}_{U_{i+1}^{\mathrm{c}}} |R_1, ..., R_{i} \right] \mathbb{E}\left[H_i \prod_{j=1}^i \mathbbm{1}_{R_{i}}\right] \nonumber \\ &\stackrel{\text{iv}}{=} \mathbb{E}\left[e^{-\Gamma {S_1}} \mathbbm{1}_{U_1^{\mathrm{c}}} \right] \left(H_{\mathrm{new}}  + \sum_{i=1}^{\infty} \mathbb{E}\left[H_i \prod_{j=1}^i\mathbbm{1}_{R_{j}}\right] \right) \nonumber \\ &= \mathbb{E}\left[e^{-\Gamma {S_1}} \mathbbm{1}_{U_1^{\mathrm{c}}} \right] \left(H_{\mathrm{new}}  + x \right). \label{eq.E[e^(-GammaS_N)G_(N-1)ind]}
    \end{align}
    In the above, we have used the following steps:
\begin{enumerate}[label=\roman*.]
    \item We have again made use of (\ref{eq.indicator-rewrite}), and $H_0 \coloneqq H_{\mathrm{new}}$.
    \item Again making use of (\ref{eq.event-U_i^c-N=i}), we have noticed that the indicator function selects the value of $N$ as
    \begin{align*}
        e^{-\Gamma S_N} H_{N-1} \cdot  \mathbbm{1}_{U_i^{\mathrm{c}}} \prod_{j=1}^{i-1} \mathbbm{1}_{R_j} &= e^{-\Gamma S_N} H_{N-1} \cdot  \mathbbm{1}_{U_i^{\mathrm{c}} \cap \{N = i\}}  \\ &= e^{-\Gamma S_i} H_{i-1} \cdot  \mathbbm{1}_{U_i^{\mathrm{c}}} \prod_{j=1}^{i-1} \mathbbm{1}_{R_j}.
    \end{align*}
    \item We have used linearity of the expectation, and take the expectation inside the sum, which is possible by the monotone convergence theorem (see 5.6.12 of \cite{Grimmett2020}). Then, we express the joint probability in terms of conditional probabilities. 
    \item We have again used the fact that, conditioned on $R_1, ..., R_{i-1}$, $e^{-\Gamma S_i} \mathbbm{1}_{U_i^{\mathrm{c}}}$ are identically distributed, for all $i\geq 1$.
    \end{enumerate}
We now directly evaluate the multiplying factor in the above expressions. 
Using the partition $U_1^{\mathrm{c}} = \bigcup_{i=1}^{\infty} \{U_1^{\mathrm{c}},S_1 = i\} $,
\begin{align}
    \mathbb{E}\left[e^{-\Gamma {S_1}} \mathbbm{1}_{U_1^{\mathrm{c}}} \right] &= \mathbb{E}\left[e^{-\Gamma {S_1}} \mathbbm{1}_{U_1^{\mathrm{c}}} \big| U_1 \right] \mathrm{P}(U_1) + \sum_{i = 1}^{\infty} \mathbb{E} \left[ e^{-\Gamma S_1} \mathbbm{1}_{U_1^{\mathrm{c}}} \big| U_1^{\mathrm{c}}, S_1 = i  \right] \mathrm{P}(U_1^{\mathrm{c}}, S_1 = i) \nonumber \\  &= 0 + \sum_{i = 1}^{\infty} e^{-i\Gamma} \mathrm{P}(U_1^{\mathrm{c}}, S_1 = i). \label{eq.E[e^-GammaS_1]}
\end{align}
Recalling that $ U_1^{\mathrm{c}} = \{\tau_{\mathrm{con}}^{(1)} \leq \tau_{\mathrm{pur}}^{(1)}\}$, we now evaluate
\begin{align}
    \mathrm{P}(U_1^{\mathrm{c}}, S_1 = i) &= \mathrm{P}(i = \tau_{\mathrm{con}}^{(1)}, i \leq \tau_{\mathrm{pur}}^{(1)}) \nonumber \\ &= \mathrm{P}(i = \tau_{\mathrm{con}}^{(1)}) \cdot \mathrm{P}(i \leq \tau_{\mathrm{pur}}^{(1)}) \nonumber \\  &= 
  (1-p_{\mathrm{con}})^{i-1} p_{\mathrm{con}} \cdot
  \left(1 - q(1-(1-p_{\mathrm{gen}})^n)\right)^{i-1}, \label{eq.P(U_1^c,S_1=i)} 
\end{align}
where we have used the fact that $\tau_{\mathrm{con}}^{(1)}$ and $\tau_{\mathrm{pur}}^{(1)}$ are independent, and have distributions as given in (\ref{eq.dist-tau}). Therefore, combining (\ref{eq.E[e^-GammaS_1]}) and (\ref{eq.P(U_1^c,S_1=i)}), it follows that 
\begin{align}
        \mathbb{E}\left[e^{-\Gamma {S_1}} \mathbbm{1}_{U_1^{\mathrm{c}}} \right]  &= \sum_{i=1}^{\infty} e^{-\Gamma i} (1-p_{\mathrm{con}})^{i-1}p_{\mathrm{con}} \cdot \left(1- q\left(1- \left(1-p_{\mathrm{gen}} \right)^n\right)\right)^{i-1} \nonumber \\ &= p_{\mathrm{con}} e^{-\Gamma} \sum_{i=0}^{\infty} e^{-\Gamma i} \left(1- q\left(1- \left(1-p_{\mathrm{gen}} \right)^n\right)\right)^{i}(1-p_{\mathrm{con}})^{i} \nonumber \\ &= \frac{p_{\mathrm{con}} e^{-\Gamma}}{1-e^{-\Gamma}\left(1- q\left(1- \left(1-p_{\mathrm{gen}} \right)^n\right)\right)(1-p_{\mathrm{con}})}, \label{eq.E[e^(-Gamma S_1)ind]}
\end{align}
where to obtain the first equality we have relabelled the summing index, and to obtain the second equality we have used the formula for a geometric series. By setting $\Gamma = 0$ in the above, we also obtain 
    \begin{align}
    \mathbb{E}\left[\mathbbm{1}_{U_1^{\mathrm{c}}}\right] = \mathrm{P}(U_1^{\mathrm{c}}) &= \frac{p_{\mathrm{con}} }{1-\left(1- q\left(1- \left(1-p_{\mathrm{gen}} \right)^n\right)\right)(1-p_{\mathrm{con}})} \nonumber \\ &=\frac{p_{\mathrm{con}} }{p_{\mathrm{con}} + q \left(1- \left(1-p_{\mathrm{gen}} \right)^n\right)(1-p_{\mathrm{con}})}. \label{eq.P(U_1c)}
    \end{align}
Then, combining (\ref{eq.P(U_N^c)}), (\ref{eq.E[e^(-GammaS_N)G_(N-1)ind]}) (\ref{eq.E[e^(-Gamma S_1)ind]}), (\ref{eq.P(U_1c)})  allows us to rewrite (\ref{eq.EG_N_fraction}) as
\begin{equation*}
        \mathbb{E}[e^{-\Gamma S_N }H_{N-1}|\tau^{(N)}_{\mathrm{con}}\leq \tau^{(N)}_{\mathrm{pur}}] =  \frac{(H_{\mathrm{new}}+x)(p_{\mathrm{con}}+q\left(1- \left(1-p_{\mathrm{gen}} \right)^n\right)(1-p_{\mathrm{con}}))}{(1+y)\left( e^{\Gamma}-\left(1- q\left(1- \left(1-p_{\mathrm{gen}} \right)^n\right)\right)(1-p_{\mathrm{con}})\right)}.
\end{equation*}
This shows (\ref{eq.acf-in-xy}). We now show (\ref{eq.E(T_N)-rewrite}) using a similar method: firstly, we again condition on the value of $N$. Recalling Definitions \ref{def.jump-times} and \ref{def.T_occ}, we may rewrite $T_N$ as 
\begin{align*}
    T_N &= \sum_{i=1}^{\infty} S_i \cdot \mathbbm{1}_{ \{N \geq i \} } \nonumber = S_1 + \sum_{i=2}^{\infty} S_i \prod_{j=1}^{i-1} \mathbbm{1}_{R_j},
\end{align*}
where we have again used $\{N\geq i \} \equiv \cap_{j=1}^{i-1} R_j$ to obtain the second equality. Taking expectations, it follows that 
\begin{align}
\mathbb{E}[T_N] &= \mathbb{E}\left[S_1 + \sum_{i=2}^{\infty} S_i
 \prod_{j=1}^{i-1} \mathbbm{1}_{R_j}\right] \nonumber \\
 &= \mathbb{E}\left[ S_1 \right] + \sum_{i=2}^{\infty}\mathbb{E}\left[S_i |R_1, ..., R_{i-1} \right] \mathbb{E}\left[ \prod_{j=1}^{i-1} \mathbbm{1}_{R_j} \right] \nonumber \\
 &= \mathbb{E}\left[ S_1 \right]\left( 1+ \sum_{i=1}^{\infty}\mathbb{E}\left[ \prod_{j=1}^{i} \mathbbm{1}_{R_j} \right] \right) \nonumber \\  &= \mathbb{E}\left[ S_1 \right]\left( 1+ y \right), \label{eq.E[T_N]-in-y}
\end{align}
where we have used the same reasoning as was used to obtain (\ref{eq.P(U_N^c)}) and (\ref{eq.E[e^(-GammaS_N)G_(N-1)ind]}). It now only remains to compute $\mathbb{E}[S_1]$. Recalling that $S_1 = \min \{ \tau_{\mathrm{con}}^{(1)} ,\tau_{\mathrm{pur}}^{(1)} \}$, we see that
\begin{align*}
    \mathrm{P}(S_1 >i) &= \mathrm{P}( \tau_{\mathrm{con}}^{(1)}>i,\tau_{\mathrm{pur}}^{(1)}>i) \\ &= \mathrm{P}( \tau_{\mathrm{con}}^{(1)}>i)(\tau_{\mathrm{pur}}^{(1)}>i) \\ &= (1- p_{\mathrm{con}})^i \cdot (1- q(1- (1 - p_{\mathrm{gen}})^n ))^i,
\end{align*}
where we have used the fact that $\tau_{\mathrm{con}}^{(1)} $ and  $ \tau_{\mathrm{pur}}^{(1)}$ are independent random variables, and their distributions which are given in (\ref{eq.dist-tau}). Then, we may rewrite the expectation as 
\begin{align*}
    \mathbb{E}[S_1] = \sum_{i=0}^{\infty} \mathrm{P}(S_1>i) &= \sum_{i=0}^{\infty} (1- p_{\mathrm{con}})^i \cdot (1- q(1- (1 - p_{\mathrm{gen}})^n ))^i \\ &= \frac{1}{1 - (1-p_{\mathrm{con}})(1- q(1- (1 - p_{\mathrm{gen}})^n ))} \\ &= \frac{1}{p_{\mathrm{con}} + q(1-p_{\mathrm{con}}) (1-(1-p_{\mathrm{gen}})^n)},
\end{align*}
where we have used the formula for a geometric series to evaluate the sum. Rearranging terms and combining the above with (\ref{eq.E[T_N]-in-y}), we may then write this as 
\begin{equation*}
    \mathbb{E}[T_N] = \frac{1+y}{p_{\mathrm{con}} + q(1-p_{\mathrm{con}}) (1-(1-p_{\mathrm{gen}})^n)},
\end{equation*}
which shows (\ref{eq.E(T_N)-rewrite}).
\end{proof}

\begin{lemma}
\label{lem.formulae-x-y}
    Let $x$ and $y$ be defined as in (\ref{eq.def-x-y}). Then, 
    \begin{equation}
       x = - H_{\mathrm{new}} + \frac{ \tilde{B} -\tilde{D}H_{\mathrm{new}} +H_{\mathrm{new}}}{(1-\tilde{A})(1-\tilde{D}) - \tilde{B}\tilde{C}}, \;\;\;\;\; y = -1 + \frac{1-\tilde{A} + \tilde{C}H_{\mathrm{new}}}{(1-\tilde{A})(1-\tilde{D}) - \tilde{B}\tilde{C}},
       \label{eq.x-y-in-ABCD}
    \end{equation}
    where $\tilde{A}$, $\tilde{B}$, $\tilde{C}$, $\tilde{D}$ are defined in Theorem \ref{theorem.A} in the main text.
\end{lemma}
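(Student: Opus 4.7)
The plan is to derive the formulas for $x$ and $y$ by setting up a two-dimensional linear recurrence for the sequences $X_i := \mathbb{E}\!\left[H_i\prod_{j=1}^i\mathbbm{1}_{R_j}\right]$ and $Y_i := \mathbb{E}\!\left[\prod_{j=1}^i\mathbbm{1}_{R_j}\right] = \mathrm{P}(N>i)$, with initial conditions $X_0 = H_{\mathrm{new}}$ and $Y_0 = 1$. Observe that $x = \sum_{i\geq 1} X_i$ and $y = \sum_{i\geq 1} Y_i$, so once a linear recurrence is in hand, summation yields a $2\times 2$ linear system in $(x,y)$ whose solution gives (\ref{eq.x-y-in-ABCD}).

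The crux is to compute the one-step conditional expectations $\mathbb{E}[H_i\mathbbm{1}_{R_i}\mid H_{i-1}=h]$ and $\mathbb{E}[\mathbbm{1}_{R_i}\mid H_{i-1}=h]$ for $h\geq 0$. First I would condition on $S_i=s$ and on the number $k$ of links generated at time $T_i$. The probability that the step at time $T_i$ is an attempted purification with exactly $k$ new links equals $(1-p_{\mathrm{con}})qP_k$ with $P_k=\binom{n}{k}p_{\mathrm{gen}}^k(1-p_{\mathrm{gen}})^{n-k}$, while the probability that nothing interesting happens in a given time step is $\beta := (1-p_{\mathrm{con}})(1-q(1-(1-p_{\mathrm{gen}})^n))$, so $S_i$ is geometric with parameter $1-\beta$. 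Given $S_i=s$, decoherence reduces the shifted fidelity to $e^{-\Gamma s}h$, and the key algebraic miracle is that
\begin{equation*}
\tilde{p}_k(h')\tilde{J}_k(h') = (c_k h'+d_k)\cdot\frac{a_k h'+b_k}{c_k h'+d_k} = a_k h'+b_k,
\end{equation*}
which linearises what would otherwise be a rational function. Summing over $k$ produces the aggregated coefficients $\tilde a,\tilde b,\tilde c,\tilde d$ of Theorem \ref{theorem.A}, and summing the geometric series $\sum_s \beta^{s-1}e^{-\Gamma s} = e^{-\Gamma}/(1-\beta e^{-\Gamma})$ and $\sum_s\beta^{s-1}=1/(1-\beta)$ yields exactly the prefactors appearing in the definitions of $\tilde A,\tilde B,\tilde C,\tilde D$. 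This should give
\begin{equation*}
\mathbb{E}[H_i\mathbbm{1}_{R_i}\mid H_{i-1}=h] = \tilde{A}h+\tilde{B}, \qquad \mathbb{E}[\mathbbm{1}_{R_i}\mid H_{i-1}=h] = \tilde{C}h+\tilde{D}.
\end{equation*}

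With those one-step identities, the Markov property of $\{H_i\}$ and the tower property give, for $i\geq 1$,
\begin{equation*}
X_i = \tilde{A}X_{i-1}+\tilde{B}Y_{i-1}, \qquad Y_i = \tilde{C}X_{i-1}+\tilde{D}Y_{i-1},
\end{equation*}
since on the event $\bigcap_{j<i}R_j$ we have $H_{i-1}\geq 0$ and the conditional expectations above apply. Summing both recursions over $i\geq 1$ and using $X_0=H_{\mathrm{new}}$, $Y_0=1$ gives the linear system
\begin{equation*}
(1-\tilde{A})x - \tilde{B}y = \tilde{A}H_{\mathrm{new}}+\tilde{B}, \qquad -\tilde{C}x + (1-\tilde{D})y = \tilde{C}H_{\mathrm{new}}+\tilde{D}.
\end{equation*}
Solving by Cramer's rule with determinant $\Delta = (1-\tilde{A})(1-\tilde{D})-\tilde{B}\tilde{C}$ and simplifying yields (\ref{eq.x-y-in-ABCD}).

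The main obstacle I anticipate is bookkeeping in the one-step calculation: carefully separating the step probability $\beta$ from the conditional probabilities of each type of event at a jump time, and keeping track of which of the two geometric sums produces which of the two denominators $e^{\Gamma}-(1-p_{\mathrm{con}})(1-q+q(1-p_{\mathrm{gen}})^n)$ and $p_{\mathrm{con}}+q(1-p_{\mathrm{con}})(1-(1-p_{\mathrm{gen}})^n)$. A secondary technical point is justifying interchange of sum and expectation when summing the recurrences; this follows from nonnegativity of $Y_i$ and boundedness of $H_i\in[-1/4,3/4]$ via monotone/dominated convergence, together with the fact that $y<\infty$ because $N$ has a finite expectation (being bounded above by the first consumption time).
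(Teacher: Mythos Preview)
Your proposal is correct and follows essentially the same approach as the paper: define the term-wise sequences $X_i,Y_i$, establish the one-step affine identities $\mathbb{E}[H_i\mathbbm{1}_{R_i}\mid H_{i-1}=h]=\tilde{A}h+\tilde{B}$ and $\mathbb{E}[\mathbbm{1}_{R_i}\mid H_{i-1}=h]=\tilde{C}h+\tilde{D}$ via the cancellation $\tilde{p}_k\tilde{J}_k=a_kh'+b_k$ and the two geometric sums, derive the linear recursion, and sum to a $2\times2$ system whose solution is (\ref{eq.x-y-in-ABCD}). The paper carries out exactly these steps with the same bookkeeping and the same justification (monotone convergence) for interchanging sum and expectation.
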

\begin{proof}
    We firstly define the quantities 
\begin{equation}
   x_i \coloneqq  \mathbb{E}\left[H_i \prod_{j=1}^{i}\mathbbm{1}_{R_j} \right], \;\;\;\;\;
   y_i \coloneqq \mathrm{P}\left(N>i \right),
\end{equation}
which means that, recalling (\ref{eq.def-x-y}), $x$ and $y$ may be rewritten as 
\begin{equation}
    x = \sum_{i=1}^{\infty} x_i, \;\;\;\;\; y = \sum_{i=1}^{\infty} y_i.
\end{equation}
We now show that there is a recursive relationship between the $\{x_i\}$ and the $\{y_i\}$. We firstly rewrite $x_i$ by conditioning on the value of $H_{i-1}$. In particular, recalling that $\cap_{j=1}^{i} R_j = \{N>i\}$, we have 
\begin{align*}
    x_i &= \mathbb{E}\left[H_i \prod_{j=1}^{i}\mathbbm{1}_{R_j} \right] = \mathbb{E}\left[H_i \mathbbm{1}_{ \cap_{j=1}^i R_j} \right] = \mathbb{E}\left[H_i \mathbbm{1}_{ \{N>i\} } \right].
\end{align*}
Then, one may partition by conditioning on the value of $H_{i-1}$ in the following way:
\begin{equation*}
    \{N>i-1\} = \bigcup_h \{H_{i-1} = h, N > i-1 \}.
\end{equation*}
We may then rewrite $x_i$ as 
\begin{align}
    x_i  &= \mathbb{E}\left[H_i \mathbbm{1}_{\{N>i\} }\big| N\leq i-1 \right] \mathrm{P}(N\leq i-1) + \sum_{h} \mathbb{E}\left[H_i \mathbbm{1}_{\{N>i\}} \big| H_{i-1} = h, N > i-1\right] \cdot \mathrm{P} ( H_{i-1} = h, N > i-1 ) \nonumber\\ &= 0 + \sum_{h} \mathbb{E}\left[H_i \mathbbm{1}_{\{N>i\}} \big| H_{i-1} = h, N > i-1\right] \cdot \mathrm{P} ( H_{i-1} = h, N > i-1 ).
    \label{eq.x_i-expand}
\end{align}
We now focus on evaluating $\mathbb{E}\left[H_i \mathbbm{1}_{\{N>i\}} \big| H_{i-1} = h, N > i-1\right]$. We do this for $h>0$, as this is the only relevant range in the above formula. We firstly notice that this expression may be rewritten as
\begin{align}
    \mathbb{E}\left[H_i \mathbbm{1}_{\{N>i\}} \big| H_{i-1} = h, N > i-1\right] &= \mathbb{E}\left[H_i \prod_{j=1}^{i}\mathbbm{1}_{R_j} \big| H_{i-1} = h, \cap_{j=1}^{i-1} R_j \right] \nonumber \\ &= \mathbb{E}\left[H_i \mathbbm{1}_{R_i} \big| H_{i-1} = h, \cap_{j=1}^{i-1} R_j \right]  \nonumber \\  &= \mathbb{E}\left[H_i \mathbbm{1}_{R_i} \big| H_{i-1} = h  \right],
    \label{eq.x_i-simplify-conditional}
\end{align}
where to obtain the final equality we have used the Markovian property of the system: given the information that $H_{i-1} = h > 0$, this is sufficient to understand the future behaviour $\{H_k\}_{k \geq i}$. This follows from the fact that $\{H_i\}$ is a Markov chain.

Recall that $R_i$ is the event where the $i$th jump time is due to a purification round succeeding. Given that in the above expression we are conditioning on the value $H_{i-1}$, the random variables on which $H_i$ depends are therefore the time $S_i$ until the next round of purification, and the number of links $L_i$ that are used for this purification (recalling that this number determines which purification protocol is used). We must therefore take the expectation over these two random variables.
\begin{definition}
    For $i<N$ (the $i$-th successful purification round), let $L_i$ be the number of links that were produced in the bad memories just before time $T_i$.  
\end{definition}
We then expand the expectation (\ref{eq.x_i-simplify-conditional}) to condition on the values taken by $S_i$ and $L_i$:
\begin{align}
    \mathbb{E}\left[H_i \mathbbm{1}_{R_i} \big| H_{i-1} = h  \right] &= \sum_{t,k} \mathbb{E}\left[H_i \mathbbm{1}_{R_i} \big| H_{i-1} = h , S_i = t, L_i = k, R_i \right] \cdot \mathrm{P}( S_i = t, L_i = k, R_i | H_{i-1} = h) \nonumber \\  &= \sum_{t,k} \tilde{J}_k\! \left(e^{-\Gamma t} h \right) \cdot \mathrm{P}( S_i = t, L_i = k, R_i | H_{i-1} = h),
    \label{eq.x_i-J_k}
\end{align}
where, recalling (\ref{eq.jump-for-G}), $\tilde{J}_k$ is the jump function corresponding to the $(k+1)$-to-$1$ purification protocol from our purification policy. To evaluate (\ref{eq.x_i-J_k}), it now remains to compute the probability distribution in the weighted sum. We again condition, to find
\begin{align}
   \mathrm{P}( S_i = t, L_i = k, R_i | H_{i-1} = h) &= \mathrm{P}( R_i|U_i, S_i = t, L_i = k, H_{i-1} = h) \cdot \mathrm{P}(U_i, S_i = t, L_i = k| H_{i-1} = h) \nonumber \\ &= \tilde{p}_k ( e^{-\Gamma t} h) \cdot  \mathrm{P}(U_i, S_i = t, L_i = k| H_{i-1} = h),
   \label{eq.x_i-p_k}
\end{align}
where $\tilde{p}_k$ determines the probability of successful purification when employing the $(k+1)$-to-$ 1$ protocol, recalling its definition in (\ref{eq.prob-for-G}).
Now, recalling the distribution of $S_i$ from (\ref{eq.S_i-distribution}), 
\begin{align}
    \mathrm{P}\left( U_i,\: S_i = t,\: L_i = k| H_{i-1} = h\right) &= \mathrm{P}(\tau_{\mathrm{con}}^{(i)}>t,\: \tau_{\mathrm{pur}}^{(i)}=t, \: L_i = k) \nonumber \\ &= \mathrm{P}(\tau_{\mathrm{con}}^{(i)}>t) \cdot \mathrm{P}(\tau_{\mathrm{pur}}^{(i)}=t,\: L_i = k) \nonumber \\ &= (1-p_{\mathrm{con}})^t \cdot (1 - q(1-(1-p_{\mathrm{gen}})^n))^{t-1} \cdot q {n \choose k} p_{\mathrm{gen}}^k (1-p_{\mathrm{gen}})^{n-k},
    \label{eq.x_i-taus}
\end{align}
where we have used the fact that $\tau_{\mathrm{pur}}^{(i)}$ and $L_i$ are independent of $\tau_{\mathrm{con}}^{(i)}$. 

Combining (\ref{eq.x_i-J_k}), (\ref{eq.x_i-p_k}) and (\ref{eq.x_i-taus}) yields 
\begin{align}
     \mathbb{E}\left[H_i \mathbbm{1}_{R_i} \big| H_{i-1} = h  \right] &= \sum_{t,k} \tilde{J}_k\! \left( e^{-\Gamma t} h \right) \tilde{p}_k \! \left( e^{-\Gamma t} h \right) \cdot (1-p_{\mathrm{con}})^t \cdot (1 - q(1-(1-p_{\mathrm{gen}})^n))^{t-1} \cdot q {n \choose k} p_{\mathrm{gen}}^k (1-p_{\mathrm{gen}})^{n-k} \nonumber \\ &= \sum_{t,k} \left(a_k e^{-\Gamma t} h + b_k \right) \cdot (1-p_{\mathrm{con}})^t \cdot (1 - q(1-(1-p_{\mathrm{gen}})^n))^{t-1} \cdot q {n \choose k} p_{\mathrm{gen}}^k (1-p_{\mathrm{gen}})^{n-k} \nonumber \\ &= \sum_{t} \left(\tilde{a} e^{-\Gamma t} h + \tilde{b} \right) \cdot (1-p_{\mathrm{con}})^t \cdot (1 - q(1-(1-p_{\mathrm{gen}})^n))^{t-1} \cdot q, \label{eq.x_i-atilde-btilde}
\end{align}
where in the first step, we have made use of the expressions (\ref{eq.jump-for-G}) and (\ref{eq.prob-for-G}) that define the purification jump function and success probability for the shifted fidelity. In the second step, we have defined 
\begin{align*}
        \tilde{a} &= \sum_{k=1}^n a_k \cdot \binom{n}{k}(1-p_{\mathrm{gen}})^{n-k}p_{\mathrm{gen}}^k, \;\;\;\;\; \tilde{b} =  \sum_{k=1}^n b_k \cdot \binom{n}{k}(1-p_{\mathrm{gen}})^{n-k}p_{\mathrm{gen}}^k.
\end{align*}
Now, using the fact that (\ref{eq.x_i-atilde-btilde}) is a geometric series (starting from $t=1$), we obtain
\begin{align}
         \mathbb{E}\left[H_i  \mathbbm{1}_{R_i} | H_{i-1}=h \right] = \tilde{A} h + \tilde {B},
         \label{eq.x_i-EGi-Atilde-Btilde}
\end{align}
where 
\begin{equation}
    \tilde{A} = \frac{q  (1-p_{\mathrm{con}}) \tilde{a}}{e^{\Gamma} -   \left(1- q+ q\left(1-p_{\mathrm{gen}} \right)^n\right) (1-p_{\mathrm{con}})}, \;\;\;\;\; \tilde{B} = \frac{ q (1-p_{\mathrm{con}})\tilde{b}}{p_{\mathrm{con}} + q\left(1-\left(1-p_{\mathrm{gen}} \right)^n\right) (1-p_{\mathrm{con}})}.
    \label{eq.def-Atilde-Btilde}
\end{equation}
Combining (\ref{eq.x_i-expand}) and (\ref{eq.x_i-EGi-Atilde-Btilde}), we may then write 
\begin{align*}
    x_i &= \sum_{h}(\tilde{A} h + \tilde {B}) \cdot \mathrm{P} ( H_{i-1} = h,\: N > i-1 ) \\ &= \tilde{A}\cdot \mathbb{E}[H_{i-1} \mathbbm{1}_{N>i-1}] + \tilde{B}\cdot \mathrm{P}(N>i-1) \\ &= \tilde{A}x_{i-1} + \tilde{B} y_{i-1},
\end{align*}
which is our first recursion relation for $\{x_i\}$ and $\{y_i\}$. We now write down an analogous recursion relation for $y_i$. We use the same method as for the $x_i$. In particular,  using
\begin{align*}
    y_i = \mathrm{P}(N>i) = \mathbb{E}\left[ \mathbbm{1}_{N>i} \right],
\end{align*}
we again expand the expectation while conditioning on the value of $H_{i-1}$ in a step analogous to (\ref{eq.x_i-expand}):
\begin{align}
    y_i = \sum_{h>0} \mathbb{E}\left[ \mathbbm{1}_{\{N>i\}} | H_{i-1}=h,\: N>i-1\right] \cdot \mathrm{P}(H_{i-1}=h,\: N>i-1).
    \label{eq.y_i-expand}
\end{align}
In a step analogous to (\ref{eq.x_i-simplify-conditional}), we rewrite the conditional expectation as
\begin{align}
    \mathbb{E}\left[ \mathbbm{1}_{\{N>i\}} | H_{i-1}=h,\: N>i-1\right] = \mathbb{E}\left[ \mathbbm{1}_{R_i} | H_{i-1}=h\right]. 
    \label{eq.y_i-simplify-conditional}
\end{align}
We then expand the above with the distributions of $S_i$ and $L_i$ to obtain 
\begin{align*}
  \mathbb{E}\left[ \mathbbm{1}_{R_i} | H_{i-1}=h\right]  &= \sum_{t,k} 1 \cdot \mathrm{P}(S_i=t,L_i = k, R_i|H_{i-1}=h) \\ &= \sum_{t,k} 1 \cdot \tilde{p}_k(e^{- \Gamma t}h) \mathrm{P}(U_i,S_i=t,L_i = k|H_{i-1}=h) \\ &= \sum_{t,k} 1 \cdot \tilde{p}_k(e^{- \Gamma t}h) \cdot  (1-p_{\mathrm{con}})^t \cdot (1 - q(1-(1-p_{\mathrm{gen}})^n))^{t-1} \cdot q {n \choose k} p_{\mathrm{gen}}^k (1-p_{\mathrm{gen}})^{n-k},
\end{align*}
where in the second step we have used (\ref{eq.x_i-p_k}) and in the last step we have again used the conditional distribution in (\ref{eq.x_i-taus}).
Using again the definition for $\tilde{p}_k$ as given in (\ref{eq.prob-for-G}), one may simplify the above to obtain 
\begin{equation}
    \mathbb{E}\left[ \mathbbm{1}_{R_i} | H_{i-1}=h\right] = \sum_{t>0} (\tilde{c} e^{-\Gamma t} h + \tilde{d} ) \cdot  (1-p_{\mathrm{con}})^t \cdot (1 - q(1-(1-p_{\mathrm{gen}})^n))^{t-1} \cdot q ,
    \label{eq.y_i-ctilde-dtilde}
\end{equation}
where we have defined 
\begin{align*}
        \tilde{c} &= \sum_{k=1}^n c_k \cdot \binom{n}{k}(1-p_{\mathrm{gen}})^{n-k}p_{\mathrm{gen}}^k, \;\;\;\;\; \tilde{d} =  \sum_{k=1}^n d_k \cdot \binom{n}{k}(1-p_{\mathrm{gen}})^{n-k}p_{\mathrm{gen}}^k.
\end{align*}
One may again use a geometric series to evaluate (\ref{eq.y_i-ctilde-dtilde}), to obtain 
\begin{equation}
        \mathbb{E}\left[ \mathbbm{1}_{R_i} | H_{i-1}=h\right] = \tilde{C} h + \tilde{D},
        \label{eq.y_i-Ctilde-Dtilde}
\end{equation}
where 
\begin{equation}
    \tilde{C} = \frac{q  (1-p_{\mathrm{con}}) \tilde{c}}{e^{\Gamma} -   \left(1- q+ q\left(1-p_{\mathrm{gen}} \right)^n\right) (1-p_{\mathrm{con}})}, \;\;\;\;\; \tilde{D} = \frac{ q (1-p_{\mathrm{con}})\tilde{d}}{p_{\mathrm{con}} + q\left(1-\left(1-p_{\mathrm{gen}} \right)^n\right) (1-p_{\mathrm{con}})}.
    \label{eq.def-Ctilde-Dtilde}
\end{equation}
Combining (\ref{eq.y_i-expand}), (\ref{eq.y_i-simplify-conditional}) and (\ref{eq.y_i-Ctilde-Dtilde}) then yields
\begin{align*}
    y_i &= \sum_{h>0} (\tilde{C}h +\tilde{D})\cdot \mathrm{P}(H_{i-1}=h,\:N>i-1) \\ &= \tilde{C} \cdot \mathbb{E}\left[H_{i-1} \mathbb{1}_{\{N>i-1\}}\right]+ \tilde{D} \cdot P(N>i-1) \\ &= \tilde{C} x_{i-1} + \tilde{D}y_{i-1},
\end{align*}
which completes our second recursion relation for the $\{x_i\}$ and $\{y_i\}$. We now combine these to find expressions for $x$ and $y$. Given the initial values
\begin{align*}
    x_0 &= \mathbb{E}\left[H_0 \mathbb{1}_{N>0}\right] = \mathbb{E}\left[H_{\mathrm{new}} \cdot 1\right] = H_{\mathrm{new}}
\end{align*}
and 
\begin{equation*}
    y_0 = \mathrm{P}( N>0) = 1,
\end{equation*}
it follows that 
\begin{align*}
    x &= \sum_{i=1}^{\infty} ( \tilde{A}x_{i-1} + \tilde{B}y_{i-1} ) = \tilde{A}(x+H_{\mathrm{new}}) + \tilde{B}(y+1)\\
    y &= \sum_{i=1}^{\infty} ( \tilde{C}x_{i-1} + \tilde{D}y_{i-1} ) = \tilde{C}(x+H_{\mathrm{new}}) + \tilde{D}(y+1).
\end{align*}
We therefore have a linear system of equations for $x$ and $y$, which may be written as
\begin{equation}
    \left(
    \begin{matrix}
    x \\ y
    \end{matrix}
    \right) = \left(\begin{matrix}
    \tilde{A} & \tilde{B} \\ \tilde{C} & \tilde{D}
    \end{matrix} \right) \left(
    \begin{matrix}
    x \\ y
    \end{matrix}
    \right) + \left(
    \begin{matrix}
    \tilde{A}H_{\mathrm{new}} + \tilde{B} \\ \tilde{C}H_{\mathrm{new}} + \tilde{D}
    \end{matrix}
    \right),
\end{equation}
which has solution
\begin{equation}
    \left(
    \begin{matrix}
    x \\ y
    \end{matrix}
    \right) = \frac{1}{(1-\tilde{A})(1-\tilde{D}) - \tilde{B}\tilde{C}}\left(\begin{matrix}
    1-\tilde{D} & \tilde{B} \\ \tilde{C} & 1-\tilde{A}
    \end{matrix} \right) 
    \left(
    \begin{matrix}
    \tilde{A}H_{\mathrm{new}} + \tilde{B} \\ \tilde{C}H_{\mathrm{new}} + \tilde{D}
    \end{matrix}
    \right),
\end{equation}
providing us with the formulae for $x$ and $y$. These may be simplified in the following way:
\begin{align*}
    x &= \frac{ (1-\tilde{D})(\tilde{A}H_{\mathrm{new}} + \tilde{B}) + \tilde{B}(\tilde{C}H_{\mathrm{new}} + \tilde{D})}{(1-\tilde{A})(1-\tilde{D}) - \tilde{B}\tilde{C}} = - H_{\mathrm{new}} + \frac{ \tilde{B} -\tilde{D}H_{\mathrm{new}} +H_{\mathrm{new}}}{(1-\tilde{A})(1-\tilde{D}) - \tilde{B}\tilde{C}} \\
    y &= \frac{\tilde{C}(\tilde{A}H_{\mathrm{new}} + \tilde{B})+(1-\tilde{A})(\tilde{C}H_{\mathrm{new}} + \tilde{D})}{(1-\tilde{A})(1-\tilde{D}) - \tilde{B}\tilde{C}} = -1 + \frac{1-\tilde{A} + \tilde{C}H_{\mathrm{new}}}{(1-\tilde{A})(1-\tilde{D}) - \tilde{B}\tilde{C}},
\end{align*}
which are in the final form for $x$ and $y$, as given in (\ref{eq.x-y-in-ABCD}).
\end{proof}

\begin{proof}[Proof of Theorems \ref{theorem.A} and \ref{theorem.F}]
We combine Lemmas \ref{lem.first-cycle}, \ref{lem.metrics-in-x-y} and \ref{lem.formulae-x-y}. From Lemma \ref{lem.first-cycle}, we recall that our performance metrics may be written in terms of properties of the first cycle. From Lemma \ref{lem.metrics-in-x-y}, we recall that these may be written in terms of $x$ and $y$. Finally, in Lemma \ref{lem.formulae-x-y} we have found formulae for $x$ and $y$. In order to write down the availability, we firstly combine (\ref{eq.E(T_N)-rewrite}) and (\ref{eq.x-y-in-ABCD}), to find
\begin{align*}
    \mathbb{E}[T_{\mathrm{occ}}] = \mathbb{E}[T_N] &=  \frac{1+y}{p_{\mathrm{con}} + q(1-p_{\mathrm{con}}) (1-(1-p_{\mathrm{gen}})^n)} \\ &= \frac{1-\tilde{A} + \tilde{C}H_{\mathrm{new}}}{\left( (1- \tilde{A})(1- \tilde{D}) - \tilde{B}\tilde{C} \right) \tilde{P}} \\ &= \frac{1-\tilde{A} + \tilde{C}(F_{\mathrm{new}} - \frac{1}{4})}{\left( (1- \tilde{A})(1- \tilde{D}) - \tilde{B}\tilde{C} \right) \tilde{P}}
\end{align*}
where $\tilde{P} \coloneqq p_{\mathrm{con}} + q(1-p_{\mathrm{con}}) (1-(1-p_{\mathrm{gen}})^n)$. This suffices to show Theorem \ref{theorem.A}.

In order to write down the average consumed fidelity, we combine (\ref{eq.acf-in-xy}) and (\ref{eq.x-y-in-ABCD}), to obtain
\begin{align*}
     \overline{H} &\stackrel{\text{a.s.}}{=}\frac{\big[\tilde{B} -\tilde{D}H_{\mathrm{new}} +H_{\mathrm{new}}\big] \cdot \big[p_{\mathrm{con}}+q\left(1- \left(1-p_{\mathrm{gen}} \right)^n\right)(1-p_{\mathrm{con}})\big]}{\big[1 -\tilde{A} + \tilde{C} H_{\mathrm{new}}\big]\cdot \big[ e^{\Gamma}-\left(1- q\left(1- \left(1-p_{\mathrm{gen}} \right)^n\right)\right)(1-p_{\mathrm{con}})\big]} \\ &= \frac{q(1-p_{\mathrm{con}})(\tilde{b} - \tilde{d}H_{\mathrm{new}}) + H_{\mathrm{new}} \left(p_{\mathrm{con}}+q\left(1- \left(1-p_{\mathrm{gen}} \right)^n\right)(1-p_{\mathrm{con}})\right)}{q(1-p_{\mathrm{con}})(\tilde{c}H_{\mathrm{new}} -\tilde{a}) + e^{\Gamma}-\left(1- q\left(1- \left(1-p_{\mathrm{gen}} \right)^n\right)\right)(1-p_{\mathrm{con}})} 
\end{align*}
where we have used the formulae (\ref{eq.def-Atilde-Btilde}) and (\ref{eq.def-Ctilde-Dtilde}) for $\tilde{A}$, $\tilde{B}$, $\tilde{C}$, and $\tilde{D}$. The above may be rewritten as
\begin{align}\label{eq.app.G.oldform}
    \overline{H} &\stackrel{\text{a.s.}}{=} \frac{q(1-p_{\mathrm{con}})(\tilde{b} - \tilde{d}H_{\mathrm{new}}) + H_{\mathrm{new}} \left(p_{\mathrm{con}}+q p_{\mathrm{gen}}^*(1-p_{\mathrm{con}})\right)}{q(1-p_{\mathrm{con}})(\tilde{c}H_{\mathrm{new}} -\tilde{a}) + e^{\Gamma}-\left(1- q p_{\mathrm{gen}}^*\right)(1-p_{\mathrm{con}})}\\ &= \frac{ \left[p_{\mathrm{con}} + q (1-p_{\mathrm{con}})\left(p_{\mathrm{gen}}^* - \tilde{d} \right) \right] \cdot H_{\mathrm{new}} + q (1-p_{\mathrm{con}})\tilde{b}}{ \left[q (1-p_{\mathrm{con}}) \tilde{c} \right] \cdot H_{\mathrm{new}} + e^{\Gamma} - 1 + p_{\mathrm{con}} + q (1-p_{\mathrm{con}})\left(p_{\mathrm{gen}}^* - \tilde{a}\right) },
\end{align}
where we have let $p_{\mathrm{gen}}^* = 1 - (1-p_{\mathrm{gen}})^n$ be the effective probability of link generation.
We now convert the above to $\overline{F}$. Recalling that $H_{\mathrm{new}} = F_{\mathrm{new}}-1/4$, it follows that
\begin{align*}
    \overline{F} &= \overline{H} + \frac{1}{4} \\ &\stackrel{\text{a.s.}}{=}\frac{ \left[p_{\mathrm{con}} + q (1-p_{\mathrm{con}})\left(p_{\mathrm{gen}}^* - \tilde{d} \right) \right] \cdot (F_{\mathrm{new}}-\frac{1}{4}) + q (1-p_{\mathrm{con}})\tilde{b}}{ \left[q (1-p_{\mathrm{con}}) \tilde{c} \right] \cdot (F_{\mathrm{new}}-\frac{1}{4}) + e^{\Gamma} - 1 + p_{\mathrm{con}} + q (1-p_{\mathrm{con}})\left(p_{\mathrm{gen}}^* - \tilde{a}\right) } + \frac{1}{4} \\ &= \frac{\left[ p_{\mathrm{con}} + q (1-p_{\mathrm{con}})\left(p_{\mathrm{gen}}^* +\frac{\tilde{c}}{4} - \tilde{d}  \right) 
 \right]\cdot F_{\mathrm{new}} + \frac{1}{4}\left[ e^{\Gamma}-1 + q(1-p_{\mathrm{con}}) \left( -\tilde{a} + 4\tilde{b} - \frac{\tilde{c}}{4} +\tilde{d} \right)\right]  }{\left[q (1-p_{\mathrm{con}})\tilde{c}\right]\cdot F_{\mathrm{new}} + e^{\Gamma} - 1 + p_{\mathrm{con}} + q (1-p_{\mathrm{con}})\left(p_{\mathrm{gen}}^* - \tilde{a} - \frac{\tilde{c}}{4}\right)},
\end{align*}
which is our formula for the average consumed fidelity in terms of the system parameters, as is given in Theorem~\ref{theorem.F}.
\end{proof}

\clearpage
\section{Purification coefficients $a_k$, $b_k$, $c_k$, and $d_k$}\label{app.abcd}
Here, we discuss the values that the coefficients $a_k$, $b_k$, $c_k$, and $d_k$ of a purification protocol $k$ are allowed to take.
Note that these coefficients are in general functions of the newly generated state, $\rho_\mathrm{new}$, although here we do not write this dependence explicitly for brevity.
Then, in Subsection~\ref{app.subsec.DEJMPS}, we provide explicit expressions for the coefficients of the DEJMPS policy discussed in the main text.

The probability of success of the protocol is given by
\begin{equation}
    p_k(F) = c_k \left(F-\frac{1}{4}\right) + d_k,
\end{equation}
where the fidelity of the buffered state $F$ can take values between $1/4$ (fully depolarized state) and $1$ (perfect Bell pair).
Since $p_k$ is a probability, we must enforce $0 \leq p_k \leq 1$. At $F=1/4$, this yields
\begin{equation}\label{eq.bounds.d}
    0 \leq d_k \leq 1.
\end{equation}
At $F=1$, it yields
\begin{equation}\label{eq.bounds.c}
    -\frac{4}{3}d_k \leq c_k \leq \frac{4}{3}(1-d_k).
\end{equation}
Combining (\ref{eq.bounds.d}) and (\ref{eq.bounds.c}) yields
\begin{equation}\label{eq.bounds.c.abs}
    -\frac{4}{3} \leq c_k \leq \frac{4}{3}.
\end{equation}

The jump function (output fidelity) of the protocol is given by
\begin{equation}
    J_k(F) = \frac{1}{4} + \frac{a_k \left(F-\frac{1}{4}\right)+b_k}{c_k \left(F-\frac{1}{4}\right)+d_k}.
\end{equation}
This output fidelity must also be between $1/4$ (fully depolarized state) and $1$ (perfect Bell pair). This condition can be written as $0 \leq a_k \left(F-\frac{1}{4}\right)+b_k \leq \frac{3}{4}p_k$.
At $F=1/4$, this yields
\begin{equation}\label{eq.bounds.b}
    0 \leq b_k \leq \frac{3}{4} d_k.
\end{equation}
Combining (\ref{eq.bounds.b}) and (\ref{eq.bounds.d}) yields
\begin{equation}\label{eq.bounds.b.abs}
    0 \leq b_k \leq \frac{3}{4}.
\end{equation}
Similarly, the condition on the jump function at $F=1$ can be written as
\begin{equation}\label{eq.bounds.a}
    -\frac{4}{3}b_k \leq a_k \leq -\frac{4}{3}b_k + \frac{3}{4}c_k + d_k.
\end{equation}
Combining (\ref{eq.bounds.a}) with (\ref{eq.bounds.d}), (\ref{eq.bounds.c}), and (\ref{eq.bounds.b}), we find
\begin{equation}\label{eq.bounds.a.abs}
    -1 \leq a_k \leq 1.
\end{equation}

\subsection{DEJMPS and concatenated DEJMPS policies}\label{app.subsec.DEJMPS}
As explained in the main text, the DEJMPS policy applies the well-known 2-to-1 DEJMPS purification protocol~\cite{Deutsch1996} to the buffered link and one of the newly generated links (and discarding the rest). This policy is given by the following purification coefficients:
\begin{equation}\label{eq.coefs-DEJMPS}
\begin{split}
    a_k &= \frac{1}{6} \left(
    5\rho_{00} + \rho_{11} + \rho_{22} - 3\rho_{33}
    \right),\\
    b_k &= \frac{1}{24} \left(
    3\rho_{00} -3 \rho_{11} -3 \rho_{22} +5 \rho_{33}
    \right),\\
    c_k &= \frac{2}{3} \left(
    \rho_{00} - \rho_{11} - \rho_{22} + \rho_{33}
    \right),\\
    d_k &= \frac{1}{2} \left(
    \rho_{00} + \rho_{11} + \rho_{22} + \rho_{33}
    \right),\\
\end{split}
\end{equation}
$\forall k \in \{1,...,n\}$, where $\rho_{ii}$ are the diagonal elements of $\rho_\mathrm{new}$ in the Bell basis $\left\{ \ket{\phi^+}, \ket{\phi^-}, \ket{\psi^+}, \ket{\psi^-} \right\}$. Note that we define the Bell states as follows:
\begin{equation}
    \ket{\phi^+} = \frac{\ket{00}+\ket{11}}{\sqrt{2}}, \;
    \ket{\phi^-} = \frac{\ket{00}-\ket{11}}{\sqrt{2}}, \;
    \ket{\psi^+} = \frac{\ket{01}+\ket{10}}{\sqrt{2}}, \;
    \ket{\psi^-} = \frac{\ket{01}-\ket{10}}{\sqrt{2}}.
\end{equation}

Regarding a concatenated or a nested DEJMPS policy, one can find the purification coefficients by applying (\ref{eq.coefs-DEJMPS}) recursively. For each round of DEJMPS, the coefficients $\rho_{ii}$ in (\ref{eq.coefs-DEJMPS}) are the diagonal elements of the output state from the previous application of DEJMPS. These diagonal elements are given by~\cite{Deutsch1996}
\begin{equation}
    \begin{split}
        \rho_{00} = \frac{\sigma_{00} \sigma'_{00} + \sigma_{33}\sigma'_{33}}{P},\\
        \rho_{11} = \frac{\sigma_{00} \sigma'_{33} + \sigma_{33}\sigma'_{00}}{P},\\
        \rho_{22} = \frac{\sigma_{11} \sigma'_{11} + \sigma_{22}\sigma'_{22}}{P},\\
        \rho_{33} = \frac{\sigma_{11} \sigma'_{22} + \sigma_{11}\sigma'_{22}}{P},
    \end{split}
\end{equation}
with $P=(\sigma_{00}+\sigma_{33})(\sigma'_{00}+\sigma'_{33}) + (\sigma_{11}+\sigma_{22})(\sigma'_{11}+\sigma'_{22})$,
where $\sigma_{ii}$ and $\sigma'_{ii}$ are the Bell diagonal elements of the two input states, $\sigma$ and $\sigma'$.

\subsection{Optimal bilocal Clifford policy}\label{app.subsec.optimalbiCliff}
In the main text, we compare the concatenated versions of the DEJMPS policy to the optimal bilocal Clifford (optimal-bC) policy. When there is a buffered link in memory and $k$ new links are generated, the optimal-bC policy operates as follows:
\begin{itemize}
    \item When $k=1$, DEJMPS is applied, using the buffered link and the newly generated link as inputs.
    \item When $k>1$, the optimal $k$-to-1 purification protocol from ref.~\cite{Jansen2022} is applied to all $k$ new links. Then, the resulting state is used for DEJMPS, together with the buffered link.
    This is illustrated in Fig.~\ref{fig.optimal-bC-sketch}b.
\end{itemize}

The reason why we apply an optimal bilocal Clifford protocol followed by DEJMPS is because these bilocal Clifford protocols have been shown to be optimal when the input states are identical. Hence, they ensure that the second link used in the final DEJMPS subroutine has maximum fidelity (see ref.~\cite{Jansen2022} for a comparison of the output fidelity using the optimal protocol versus concatenated DEJMPS).
This combined protocol (optimal $k$-to-1 followed by DEJMPS, Fig.~\ref{fig.optimal-bC-sketch}b) is not necessarily the $(k+1)$-to-1 protocol that yields the largest output fidelity. However, one would expect it to provide better buffering performance than a simple concatenation of DEJMPS subroutines (Fig.~\ref{fig.optimal-bC-sketch}a) -- nevertheless, in the main text we show that this intuition is incorrect.

Let us now show how to compute the purification coefficients $a_k$, $b_k$, $c_k$, and $d_k$ of the optimal-bC policy:
\begin{itemize}
    \item When $k=1$ new links are generated, the purification coefficients $a_1$, $b_1$, $c_1$, and $d_1$ are given by (\ref{eq.coefs-DEJMPS}), as in the DEJMPS policy.

    \item When $k>1$, we first apply the optimal bilocal Clifford protocol, which outputs a state $\sigma_k$, with diagonal elements in the Bell basis $\sigma_{k,ii}$. The probability of success of this subroutine is $\theta_k$. Then, the state $\sigma_k$ is used as input for a final DEJMPS subroutine. Using (\ref{eq.coefs-DEJMPS}), we obtain
    \begin{equation}
    \begin{split}
        a_k &= \frac{1}{6\theta_k} \left(
        5\sigma_{k,00} + \sigma_{k,11} + \sigma_{k,22} - 3\sigma_{k,33}
        \right),\\
        b_k &= \frac{1}{24\theta_k} \left(
        3\sigma_{k,00} -3 \sigma_{k,11} -3 \sigma_{k,22} +5 \sigma_{k,33}
        \right),\\
        c_k &= \frac{2}{3\theta_k} \left(
        \sigma_{k,00} - \sigma_{k,11} - \sigma_{k,22} + \sigma_{k,33}
        \right),\\
        d_k &= \frac{1}{2\theta_k} \left(
        \sigma_{k,00} + \sigma_{k,11} + \sigma_{k,22} + \sigma_{k,33}
        \right).
    \end{split}
    \end{equation}
\end{itemize}

In the example discussed in the main text, we consider $n=4$. We also consider the newly generated links to be Werner states~\cite{Werner1989} with fidelity $F_\mathrm{new}$, i.e.,
\begin{equation}
    \rho_\mathrm{new} = F_\mathrm{new} \ketbra{\phi^+} + \frac{1-F_\mathrm{new}}{3} \ketbra{\phi^-} + \frac{1-F_\mathrm{new}}{3} \ketbra{\psi^+} + \frac{1-F_\mathrm{new}}{3} \ketbra{\psi^-}.
\end{equation}
Under these assumptions, the values of $\sigma_{k,00}$ (fidelity of the output state from the optimal bilocal Clifford protocol) and $\theta_k$ are given explicitly in ref.~\cite{Jansen2022}:
    \begin{equation}
    \begin{split}
        \theta_2 &= \frac{8}{9} F_\mathrm{new}^2 - \frac{4}{9} F_\mathrm{new} + \frac{5}{9},\\
        \theta_3 &= \frac{32}{27} F_\mathrm{new}^3 - \frac{4}{9} F_\mathrm{new}^2 + \frac{7}{27},\\
        \theta_4 &= \frac{32}{27} F_\mathrm{new}^4 - \frac{4}{9} F_\mathrm{new}^2 + \frac{4}{27} F_\mathrm{new} + \frac{1}{9},
    \end{split}
    \end{equation}
    
    \begin{equation}
    \begin{split}
        \sigma_{2,00} &= \frac{1}{\theta_2} \cdot \left( \frac{10}{9} F_\mathrm{new}^2 - \frac{2}{9} F_\mathrm{new} + \frac{1}{9} \right),\\
        \sigma_{3,00} &= \frac{1}{\theta_3} \cdot \left( \frac{28}{27} F_\mathrm{new}^3 - \frac{1}{9} F_\mathrm{new} + \frac{2}{27} \right),\\
        \sigma_{4,00} &= \frac{1}{\theta_4} \cdot \left( \frac{8}{9} F_\mathrm{new}^4 + \frac{8}{27} F_\mathrm{new}^3 - \frac{2}{9} F_\mathrm{new}^2 + \frac{1}{27} \right),
    \end{split}
    \end{equation}
where $F_\mathrm{new}$ is the fidelity of the newly generated Werner states.
The rest of the diagonal elements of $\sigma_k$ can be found using the code provided in our repository (\href{https://github.com/AlvaroGI/buffering-1GnB}{https://github.com/AlvaroGI/buffering-1GnB}; this code is based on the methods from ref.~\cite{Jansen2022}). For $F_\mathrm{new}=0.7$, which we use in the example from the main text, we have
\begin{equation}
    \begin{cases}
        \sigma_{2,11} &= 0.20589\\
        \sigma_{2,22} &= 0.02941\\
        \sigma_{2,33} &= 0.02941
    \end{cases}
    \;\; , \;\;\;
    \begin{cases}
        \sigma_{3,11} &= 0.14287\\
        \sigma_{3,22} &= 0.03571\\
        \sigma_{3,33} &= 0.03571
    \end{cases}
    \;\; \mathrm{and} \;\;\;
    \begin{cases}
        \sigma_{4,11} &= 0.04545\\
        \sigma_{4,22} &= 0.04545\\
        \sigma_{4,33} &= 0.04545
    \end{cases}
    \;\; .
    \end{equation}

The calculations from ref.~\cite{Jansen2022} can also be used to obtain $\theta_k$ and $\sigma_k$ for $k>4$, although their methods become infeasible for $k>8$ due to the large computational cost, as discussed in their paper.

\begin{figure}[t!]
    \centering
    \includegraphics[width=0.45\linewidth]{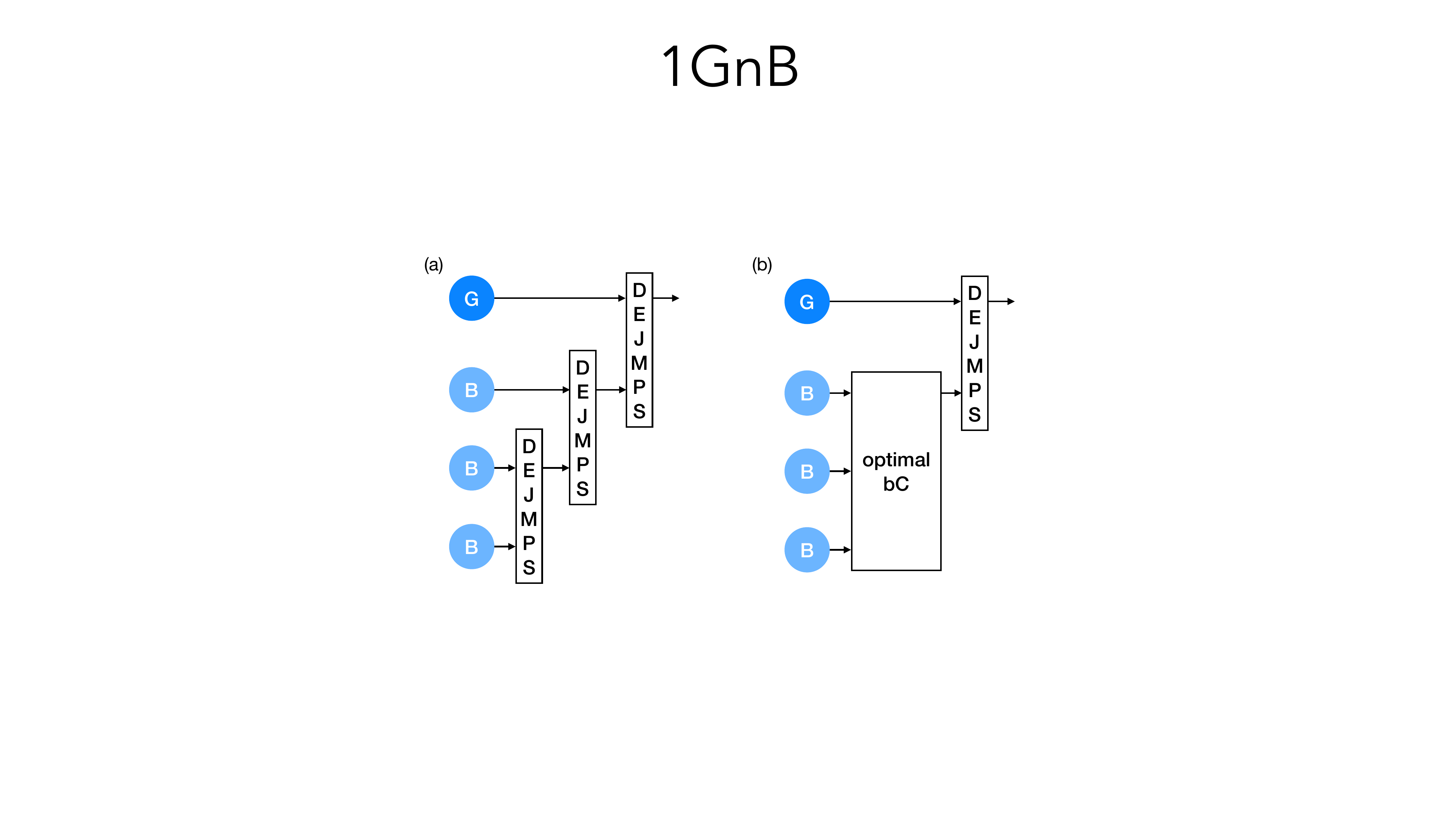}
    \caption{\textbf{The optimal bilocal Clifford policy applies an optimal protocol followed by DEJMPS.}
    Illustration of two purification policies: (a) concatenated DEJMPS and (b) optimal bilocal Clifford.
    }
    \label{fig.optimal-bC-sketch}
\end{figure}

\clearpage
\section{Buffering with the 513 EC policy}\label{app.subsec.513policy}
In this appendix, we compare the performance of a 1G$n$B system that uses a concatenated DEJMPS policy to a system that uses the \emph{513 EC policy}.
When there is a buffered link in memory and $k$ new links are generated, the 513 EC policy operates as follows:
\begin{itemize}
    \item When $k=1$, DEJMPS is applied, using the buffered link and the newly generated link as inputs.
    \item When $k=5$, the purification protocol based on the $[[5,1,3]]$ quantum error-correcting code~\cite{Laflamme1996} from ref.~\cite{Stephens2013} is applied to all $k$ new links. Then, the output state is twirled into a Werner state (that is, it is transformed into Werner form while preserving the fidelity) and used for DEJMPS, together with the buffered link.
    \item Otherwise, twice-concatenated DEJMPS is applied.
\end{itemize}

This policy is heavily based on twice-concatenated DEJMPS, with the main difference being that, when $k=5$, a different protocol is applied.
Note that, when $k=5$, we apply some twirling after the purification step to be able to use the results reported in ref.~\cite{Stephens2013}, where they provide the output fidelity and the success probability of the protocol but not the full density matrix of the output state.

The purification coefficients of the 513 EC policy can be computed as follows:
\begin{itemize}
    \item When $k \neq 5$, this policy applies DEJMPS or concatenated DEJMPS. Hence, $a_k$, $b_k$, $c_k$, and $d_k$ can be found as explained in Appendix~\ref{app.subsec.DEJMPS}.
    \item When $k=5$, the purification coefficients are given by the output fidelity and probability of success of the 513 EC protocol reported in Figure~3 from ref.~\cite{Stephens2013}. Since we apply this protocol followed by twirling and DEJMPS, we can use (\ref{eq.coefs-DEJMPS}) to compute the purification coefficients of the whole protocol:
    \begin{equation}
        \begin{split}
            a_5 &= \frac{1}{6\theta} \left(
            5\sigma_{00} + \sigma_{11} + \sigma_{22} - 3\sigma_{33}
            \right),\\
            b_5 &= \frac{1}{24\theta} \left(
            3\sigma_{00} -3 \sigma_{11} -3 \sigma_{22} +5 \sigma_{33}
            \right),\\
            c_5 &= \frac{2}{3\theta} \left(
            \sigma_{00} - \sigma_{11} - \sigma_{22} + \sigma_{33}
            \right),\\
            d_5 &= \frac{1}{2\theta} \left(
            \sigma_{00} + \sigma_{11} + \sigma_{22} + \sigma_{33}
            \right),
        \end{split}
    \end{equation}
    where $\sigma$ is the output state of the 513 EC protocol after twirling: $\sigma_{00}$ is the output fidelity from the 513 protocol (reported in Figure~3 from ref.~\cite{Stephens2013}), and $\sigma_{11} = \sigma_{22} = \sigma_{33} = (1-\sigma_{00})/3$ (since we twirl the output state); and $\theta$ is the probability of success of the 513 EC protocol (reported in Figure~3 from ref.~\cite{Stephens2013}).
    
\end{itemize}

Figure \ref{fig.simple_vs_513} shows the performance of the 513 EC policy versus DEJMPS and twice-concatenated DEJMPS. In this example, twice-concatenated DEJMPS also includes twirling before the final round of DEJMPS, to make the comparison with the 513 EC policy fairer.
In Fig. \ref{fig.simple_vs_513}a, we assume $F_\mathrm{new}=0.86$ (according to Figure~3 from ref.~\cite{Stephens2013}, this corresponds to $\theta = 0.869$ and $\sigma_{00}=0.864$), and in Fig. \ref{fig.simple_vs_513}a, we assume $F_\mathrm{new}=0.95$ (according to Figure~3 from ref.~\cite{Stephens2013}, this corresponds to $\theta = 0.981$ and $\sigma_{00}=0.978$).
Similar to the optimal-bC policies discussed in the main text, we observe that the 513 EC policy can be outperformed by DEJMPS, twice-concatenated DEJMPS, and replacement (Figure \ref{fig.simple_vs_513}a).
In some parameter regions, the 513 EC may provide better performance (Figure \ref{fig.simple_vs_513}b), although this behavior may not be achievable experimentally, as it requires both large $p_\mathrm{gen}$ and large $F_\mathrm{new}$ -- in commonly used entanglement generation protocols, there is a tradeoff between these two parameters, see e.g. ref.~\cite{Hermans2023}.

\begin{figure}[t!]
    \centering
    \includegraphics[width=0.9\linewidth]{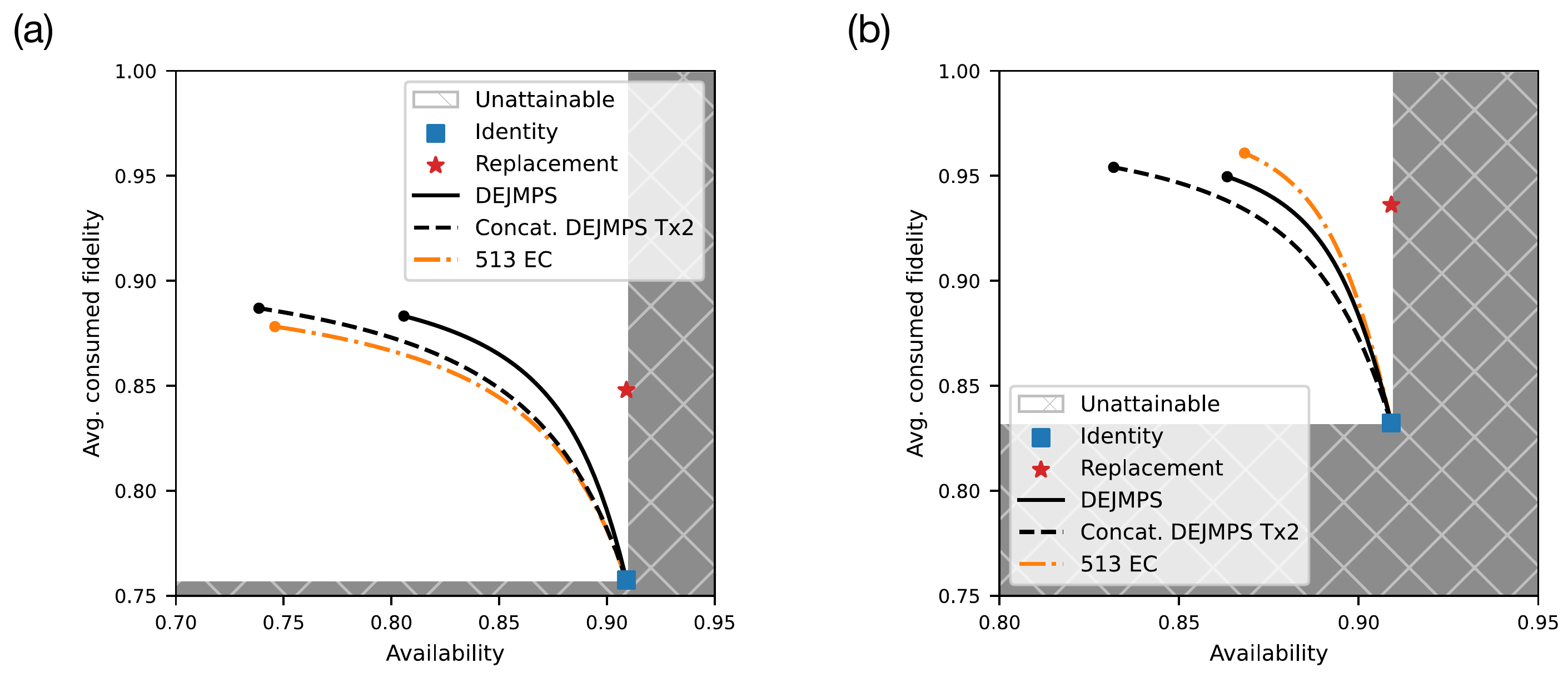}
    \caption{\textbf{The 513 EC policy may perform better than DEJMPS when new links are generated with a very large fidelity.}
    Performance of 1G$n$B systems with different purification policies, in terms of availability $A$ and average consumed fidelity $\overline F$.
    In (a), newly generated links are Werner states with fidelity $F_\mathrm{new}=0.86$, while in (b) we assume $F_\mathrm{new}=0.95$.
    The shaded area corresponds to unattainable values of $A$ and $\overline F$ (see (\ref{eq.A_bounds}) and (\ref{eq.Fbounds})).
    Replacement (star marker) and identity (square marker) policies provide maximum availability.
    Lines represent the achievable values when using one of the following policies: DEJMPS (solid line), twice-concatenated DEJMPS with twirling (dashed line), and 513 EC (dotted line).
    Parameter values used in this example: $n=5$, $p_\mathrm{gen}=1$, $p_\mathrm{con}=0.1$, and $\Gamma=0.02$.
    }
    \label{fig.simple_vs_513}
\end{figure}

\clearpage
\section{Monotonicity of the availability (proof of Proposition \ref{prop.dAdq}) and bounds}\label{app.dAdq}
In this appendix, we show that the availability of the 1G$n$B system (given in Theorem \ref{theorem.A}) is monotonically decreasing with increasing probability of purification $q$ (Proposition \ref{prop.dAdq}). This means that the availability is maximized when no purification is performed. If any purification is performed, the availability can only decrease, until reaching its minimum value at $q=1$. Using these ideas, we compute upper and lower bounds for the availability in Section \ref{app.sec.Abounds}.

\begin{proof}[Proof of Proposition \ref{prop.dAdq}]
We start by taking the partial derivative of the availability:
\begin{equation}\label{eq.dAdq}
    \frac{\partial A}{\partial q} = \frac{\mathbb{E}[T_\mathrm{gen}]}{\left(\mathbb{E}[T_\mathrm{gen}] + \mathbb{E}[T_\mathrm{occ}]\right)^2}
    \frac{\partial \mathbb{E}[T_\mathrm{occ}]}{\partial q},
\end{equation}
where we have used (\ref{eq.A_1GnB}), (\ref{eq.Tgen}), and (\ref{eq.Tocc}). Since the first term in \ref{eq.dAdq} is always positive, the sign of $\partial A/\partial q$ is the same as the sign of $\partial \mathbb{E}[T_\mathrm{occ}] / \partial q$. Hence, we only need to show that $\partial \mathbb{E}[T_\mathrm{occ}] / \partial q \leq 0$.
Next, we write $\mathbb{E}[T_\mathrm{occ}]$ explicitly in terms of $q$:
\begin{equation}\label{eq.ETocc-explicit}
    \mathbb{E}[T_\mathrm{occ}] = \frac{\varepsilon + \varepsilon' q}{\delta + \delta' q + \delta'' q^2},
\end{equation}
where
\begin{equation}\label{eq.eps-and-delta}
\begin{split}
    \varepsilon &\coloneqq \gamma + p_\mathrm{con},\\
    \varepsilon' &\coloneqq (1-p_\mathrm{con}) \left(p_\mathrm{gen}^*-\tilde a +  H_\mathrm{new} \tilde c\right),\\
    \delta &\coloneqq \varepsilon p_\mathrm{con},\\
    \delta' &\coloneqq (1-p_\mathrm{con}) \left( \gamma p_\mathrm{gen}^* +2p_\mathrm{con}p_\mathrm{gen}^* - p_\mathrm{con} \tilde a - (\gamma + p_\mathrm{con}) \tilde d \right),\\
    \delta'' &\coloneqq (1-p_\mathrm{con})^2 \left( (p_\mathrm{gen}^*)^2 - p_\mathrm{gen}^*\tilde a - p_\mathrm{gen}^* \tilde d + \tilde a \tilde d - \tilde b \tilde c \right),
\end{split}
\end{equation}
with $\gamma \coloneqq e^\Gamma-1$, $p_\mathrm{gen}^* \coloneqq 1 - (1-p_\mathrm{gen})^n$ and $H_\mathrm{new} \coloneqq F_\mathrm{new}-\frac{1}{4}$.
The derivative of $\mathbb{E}[T_\mathrm{occ}]$ can be written as
\begin{equation}\label{eq.dETocc}
    \frac{\partial \mathbb{E}[T_\mathrm{occ}]}{\partial q} = \frac
    {\varepsilon \left(\varepsilon' p_\mathrm{con} - \delta'\right)
    - 2\varepsilon\delta'' q
    - \varepsilon'\delta'' q^2}
    {\left( \delta + \delta' q + \delta'' q^2 \right)^2}.
\end{equation}

To prove that $\partial \mathbb{E}[T_\mathrm{occ}] / \partial q \leq 0$, we will now show that all three terms in the numerator are negative.
\\

{\sc First term from (\ref{eq.dETocc})} - The first term can be expanded as follows:
\begin{equation}\label{eq.first-term-epsilons}
\begin{split}
    \varepsilon \left(\varepsilon' p_\mathrm{con} - \delta'\right)
    &= -\varepsilon (1-p_\mathrm{con}) \left( \gamma (p_\mathrm{gen}^*-\tilde d) + p_\mathrm{con}(p_\mathrm{gen}^* -\tilde d - H_\mathrm{new} \tilde c) \right)
    \geq 0,
\end{split}
\end{equation}
where, in the last step, we have used the following: ($i$) $0 \leq p_\mathrm{con} \leq 1$, ($ii$) $\gamma \coloneqq e^\Gamma - 1 \geq 0$, ($iii$) $\tilde d + H_\mathrm{new} \tilde c \leq p_\mathrm{gen}^*$, and ($iv$) $\tilde d \leq p_\mathrm{gen}^*$.
Inequality ($iii$) can be shown as follows:
\begin{equation}\label{eq.dgnewcx}
    \begin{split}
        \tilde d + H_\mathrm{new} \tilde c
        &= \sum_{k=1}^{n} (d_k + H_\mathrm{new} c_k) {n \choose k} (1-p_\mathrm{gen})^{n-k} p_\mathrm{gen}^k\\
        &\leq \sum_{k=1}^{n} {n \choose k} (1-p_\mathrm{gen})^{n-k} p_\mathrm{gen}^k\\
        &= \sum_{k=0}^{n} {n \choose k} (1-p_\mathrm{gen})^{n-k} p_\mathrm{gen}^k
        - (1-p_\mathrm{gen})^n\\
        &= 1 - (1-p_\mathrm{gen})^n = p_\mathrm{gen}^*,
    \end{split}
\end{equation}
where we have used the definition of $\tilde c$ and $\tilde d$ from Theorem \ref{theorem.A} and the fact that $d_k + H_\mathrm{new} c_k \leq 1$ (this is the success probability of purification protocol $k$ when the link in memory has fidelity $F_\mathrm{new}$).
Inequality ($iv$) can be shown in a similar way:
\begin{equation}\label{eq.dx}
    \begin{split}
        \tilde d 
        &= \sum_{k=1}^{n} d_k {n \choose k} (1-p_\mathrm{gen})^{n-k} p_\mathrm{gen}^k\\
        &\leq \sum_{k=1}^{n} {n \choose k} (1-p_\mathrm{gen})^{n-k} p_\mathrm{gen}^k\\
        &= \sum_{k=0}^{n} {n \choose k} (1-p_\mathrm{gen})^{n-k} p_\mathrm{gen}^k
        - (1-p_\mathrm{gen})^n\\
        &= 1 - (1-p_\mathrm{gen})^n = p_\mathrm{gen}^*,
    \end{split}
\end{equation}
where we have used $d_k\leq 1$ (upper bound from (\ref{eq.bounds.d})).
\\

{\sc Second term from (\ref{eq.dETocc})} - Regarding the second term in the numerator of (\ref{eq.dETocc}), we first note that, since $p_\mathrm{con} \geq 0$ and $\gamma \geq 0$, then $\varepsilon \geq 0$. Moreover, $q\geq0$ by definition. Consequently, the second term in the numerator of (\ref{eq.dETocc}) is negative if and only if $\delta'' \geq 0$, which in turn is equivalent to $(p_\mathrm{gen}^*)^2 - p_\mathrm{gen}^* \tilde a - p_\mathrm{gen}^* \tilde d + \tilde a \tilde d - \tilde b \tilde c \geq 0$. This can be shown as follows:
\begin{equation}\label{eq.deltaaa-nonnegative}
    \begin{split}
        (p_\mathrm{gen}^*)^2 - p_\mathrm{gen}^* \tilde a - p_\mathrm{gen}^* \tilde d + \tilde a \tilde d - \tilde b \tilde c
        &\stackrel{\mathrm{i}}{\geq} (p_\mathrm{gen}^*)^2 - p_\mathrm{gen}^* \tilde a - p_\mathrm{gen}^* \tilde d + \tilde a \tilde d - \tilde b \frac{4}{3}(p_\mathrm{gen}^*-\tilde d)\\
        &= \left( p_\mathrm{gen}^* - \frac{4}{3} \tilde b - \tilde a \right) (p_\mathrm{gen}^* - \tilde d)\\
        &\stackrel{\mathrm{ii}}{\geq} 0,
    \end{split}
\end{equation}
with these steps:
\begin{enumerate}[label=\roman*.]
    \item We use $\tilde b \geq 0$ (which follows from the lower bound in (\ref{eq.bounds.b.abs})) and $\tilde c \leq ({p_\mathrm{gen}^*}-\tilde d)/H_\mathrm{new}$ (shown in (\ref{eq.dgnewcx})). This last inequality must hold for any $H_\mathrm{new} \in [0, 3/4]$, and therefore $\tilde c \leq 4({p_\mathrm{gen}^*}-\tilde d)/3$.
    \item To show that the first factor is non-negative, we use $\tilde a + 4\tilde b /3 \leq \tilde d + H_\mathrm{new} \tilde c \leq {p_\mathrm{gen}^*}$. The first inequality can be shown using the definitions of $\tilde a$, $\tilde b$, $\tilde c$, and $\tilde d$ from Theorem~\ref{theorem.A} and the upper bound from (\ref{eq.bounds.a}); while the second inequality was shown in (\ref{eq.dgnewcx}).
    The second factor (${p_\mathrm{gen}^*}-\tilde d$) is also non-negative, as shown in (\ref{eq.dx}).
\end{enumerate}

{\sc Third term from (\ref{eq.dETocc})} - Lastly, the third term in the numerator of (\ref{eq.dETocc}) is negative if and only if $\varepsilon'\geq0$, since we just showed that $\delta''\geq0$.
Moreover, $\varepsilon'\geq0 \Leftrightarrow {p_\mathrm{gen}^*}-\tilde a +  H_\mathrm{new} \tilde c \geq0$. The latter can be shown as follows:
\begin{equation}\label{eq.varepsilonn-nonnegative}
    \begin{split}
        {p_\mathrm{gen}^*} - \tilde a +  H_\mathrm{new} \tilde c
        &\stackrel{\mathrm{i}}{=} {p_\mathrm{gen}^*} + \sum_{k=1}^{n} (-a_k +  H_\mathrm{new} c_k) {n \choose k} (1-p_\mathrm{gen})^{n-k} p_\mathrm{gen}^k\\
        &\stackrel{\mathrm{ii}}{\geq} {p_\mathrm{gen}^*} + \sum_{k=1}^{n} \left(-\left(\frac{3}{4}-H_\mathrm{new}\right) c_k - d_k \right) {n \choose k} (1-p_\mathrm{gen})^{n-k} p_\mathrm{gen}^k\\
        &\stackrel{\mathrm{iii}}{\geq} {p_\mathrm{gen}^*} + \sum_{k=1}^{n} \left(\frac{4}{3} H_\mathrm{new} (1-d_k) - 1 \right) {n \choose k} (1-p_\mathrm{gen})^{n-k} p_\mathrm{gen}^k\\
        &\stackrel{\mathrm{iv}}{\geq} {p_\mathrm{gen}^*} - \sum_{k=1}^{n} {n \choose k} (1-p_\mathrm{gen})^{n-k} p_\mathrm{gen}^k\\
        &\stackrel{\mathrm{v}}{=} 0,
    \end{split}
\end{equation}
with these steps:
\begin{enumerate}[label=\roman*.]
    \item We use the definitions of $\tilde a$ and $\tilde c$ from Theorem \ref{theorem.A}.
    \item We use $a_k \leq 3c_k/4 + d_k$, which can be shown using the upper bound from (\ref{eq.bounds.a}) in combination with the lower bound from (\ref{eq.bounds.b.abs}).
    \item We use $c_k \leq 4(1-d_k)/3$ (upper bound from (\ref{eq.bounds.c})).
    \item We note that $H_\mathrm{new}(1-d_k)\geq0$, since $H_\mathrm{new}\geq 0$ (by definition) and $d_k\leq1$ (as shown in (\ref{eq.bounds.d})).
    \item We recall the definition ${p_\mathrm{gen}^*} \coloneqq 1 - (1-p_\mathrm{gen})^n = \sum_{k=0}^{n} {n \choose k} (1-p_\mathrm{gen})^{n-k} p_\mathrm{gen}^k - (1-p_\mathrm{gen})^n$.
\end{enumerate}

We have now shown that all three terms in the numerator of (\ref{eq.dETocc}) are negative. Therefore, $\partial \mathbb{E}[T_\mathrm{occ}] / \partial q \leq 0$ and, consequently, $\partial A / \partial q \leq 0$.
\end{proof}

\subsection{Upper and lower bounding the availability}\label{app.sec.Abounds}
Since $\partial A/\partial q \leq 0$, the availability is upper bounded by the value it takes when $q=0$.
From (\ref{eq.ETocc-explicit}), we have
\begin{equation}
    \mathbb{E}[T_\mathrm{occ}]\bigr|_{q=0} = \frac{1}{p_\mathrm{con}}.
\end{equation}
Combining this with (\ref{eq.A_1GnB}), we obtain
\begin{equation}\label{eq.A_lowerbound}
    A \leq A\bigr|_{q=0} = \frac{{p_\mathrm{gen}^*}}{{p_\mathrm{gen}^*}+p_\mathrm{con}},
\end{equation}
with ${p_\mathrm{gen}^*}\coloneqq 1 - (1-p_\mathrm{gen})^n$.

To evaluate $A$ at $q=1$, we first use (\ref{eq.A_1GnB}) and (\ref{eq.ETocc-explicit}) to write it as follows:
\begin{equation}\label{eq.Aq1}
    A \geq A\bigr|_{q=1} = \frac{{p_\mathrm{gen}^*} \eta}{{p_\mathrm{gen}^*}\eta + \Delta},
\end{equation}
with $\eta \coloneqq \varepsilon + \varepsilon'$, $\Delta \coloneqq \delta + \delta' + \delta''$, with $\varepsilon,\varepsilon',\delta,\delta',\delta''$ defined in (\ref{eq.eps-and-delta}).

The solution from (\ref{eq.Aq1}) constitutes a lower bound for the availability. However, $\eta$ and $\Delta$ implicitly depend on the parameters of the purification policy, $a_k$, $b_k$, $c_k$, and $d_k$, $k \in \{0,...,n\}$.
Next, we find a more general and meaningful lower bound that applies to any purification policy.

We start by noting that
\begin{itemize}
    \item $\varepsilon\geq0$ (since $p_\mathrm{con} \geq 0$ and $\gamma \geq 0$),
    
    \item $\varepsilon'\geq0$ (as shown in (\ref{eq.varepsilonn-nonnegative})),
    
    \item $\delta\geq0$ (since $\varepsilon\geq0$),
    
    \item $\delta'\geq0$ (this can be shown using the fact that $\tilde d \leq {p_\mathrm{gen}^*}$, shown in (\ref{eq.dx}), and $\tilde a \leq {p_\mathrm{gen}^*}$, which can be shown in a similar way as (\ref{eq.dx}) and using (\ref{eq.bounds.a.abs})),
    
    \item and $\delta''\geq0$ (as shown in (\ref{eq.deltaaa-nonnegative})).
\end{itemize}
As a consequence, none of the factors in (\ref{eq.Aq1}) can be negative: ${p_\mathrm{gen}^*}\geq 0$ (by definition), $\eta\geq 0$, and $\Delta \geq 0$. This means that we can find a lower bound for $A\bigr|_{q=1}$ by lower bounding $\eta$ and upper bounding $\Delta$.
We first lower bound~$\eta$:
\begin{equation}\label{eq.lowerb-eta}
    \eta = \gamma + p_\mathrm{con} + (1-p_\mathrm{con}) \left({p_\mathrm{gen}^*}-\tilde a +  H_\mathrm{new} \tilde c\right)
    \geq \gamma + p_\mathrm{con},
\end{equation}
where we have used ${p_\mathrm{gen}^*}-\tilde a +  H_\mathrm{new} \tilde c \geq 0$, which was shown in (\ref{eq.varepsilonn-nonnegative}).

Regarding $\Delta$, we proceed as follows:
\begin{equation}\label{eq.upperb-Delta}
    \begin{split}
        \Delta &= \delta+\delta'+\delta''\\
        &= (\gamma + p_\mathrm{con})p_\mathrm{con}
        +(1-p_\mathrm{con}) \left( (\gamma + 2p_\mathrm{con}) {p_\mathrm{gen}^*} - p_\mathrm{con} (\tilde a + \tilde d) - \gamma \tilde d \right)
        +(1-p_\mathrm{con})^2 \left( (p_\mathrm{gen}^*)^2 - {p_\mathrm{gen}^*} (\tilde a + \tilde d) + \tilde a \tilde d - \tilde b \tilde c \right)\\
        &\stackrel{\mathrm{i}}{\leq} (\gamma + p_\mathrm{con})p_\mathrm{con}
        +(1-p_\mathrm{con}) \left( \gamma {p_\mathrm{gen}^*} +2p_\mathrm{con}{p_\mathrm{gen}^*} - p_\mathrm{con} \tilde a \right)
        +(1-p_\mathrm{con})^2 \left( (p_\mathrm{gen}^*)^2 - {p_\mathrm{gen}^*}\tilde a + \tilde a \tilde d - \tilde b \tilde c \right)\\
        &\stackrel{\mathrm{ii}}{\leq} (\gamma + p_\mathrm{con})p_\mathrm{con}
        +(1-p_\mathrm{con}) \left( \gamma {p_\mathrm{gen}^*} +2p_\mathrm{con}{p_\mathrm{gen}^*} + p_\mathrm{con} {p_\mathrm{gen}^*} \right)
        +(1-p_\mathrm{con})^2 \left( (p_\mathrm{gen}^*)^2 + (p_\mathrm{gen}^*)^2 + \tilde a \tilde d - \tilde b \tilde c \right)\\
        &= (\gamma + p_\mathrm{con})p_\mathrm{con}
        +(1-p_\mathrm{con}) \left( \gamma + 3p_\mathrm{con} \right){p_\mathrm{gen}^*}
        +(1-p_\mathrm{con})^2 \left( 2(p_\mathrm{gen}^*)^2 + \tilde a \tilde d - \tilde b \tilde c \right)\\
        &\stackrel{\mathrm{iii}}{\leq} (\gamma + p_\mathrm{con})p_\mathrm{con}
        +(1-p_\mathrm{con}) \left( \gamma + 3p_\mathrm{con}\right){p_\mathrm{gen}^*}
        +(1-p_\mathrm{con})^2 \left( 2(p_\mathrm{gen}^*)^2 + {p_\mathrm{gen}^*} \right)\\
        &= (\gamma + p_\mathrm{con})p_\mathrm{con}
        +(1-p_\mathrm{con}) \left(1 + \gamma + 2p_\mathrm{con}\right){p_\mathrm{gen}^*}
        + 2(1-p_\mathrm{con})^2 (p_\mathrm{gen}^*)^2
    \end{split}
\end{equation}
with these steps:
\begin{enumerate}[label=\roman*.]
    \item We use $- (\gamma + p_\mathrm{con}) \tilde d \leq 0$ and
    $- {p_\mathrm{gen}^*} \tilde d \leq 0$, which follows from $\tilde d\geq 0$ (shown in (\ref{eq.bounds.d})).
    \item Using the lower bound from (\ref{eq.bounds.a.abs}) and following a similar derivation as in (\ref{eq.dx}), one can show that $\tilde a \geq -{p_\mathrm{gen}^*}$. This implies that $- p_\mathrm{con} \tilde a \leq p_\mathrm{con} {p_\mathrm{gen}^*}$ and $- {p_\mathrm{gen}^*}\tilde a \leq (p_\mathrm{gen}^*)^2$.
    \item We use $\tilde a \tilde d - \tilde b \tilde c \leq {p_\mathrm{gen}^*}$. This can be shown as follows:
        \begin{equation}
            \begin{split}
                \tilde a \tilde d - \tilde b \tilde c
                \leq -\frac{4}{3}\tilde b \tilde d + \frac{3}{4} \tilde c \tilde d + \tilde d ^2 - \tilde b \tilde c
                \leq -\frac{4}{3}\tilde b \tilde d + (1-\tilde d) \tilde d + \tilde d ^2 - \tilde b \tilde c
                \leq \tilde d
                \leq {p_\mathrm{gen}^*},
            \end{split}
        \end{equation}
    where we have used the upper bound from (\ref{eq.bounds.a}) in the first step; $\tilde d \geq 0$ (see (\ref{eq.bounds.d})) and the upper bound from (\ref{eq.bounds.c}) in the second step; $\tilde b \geq 0$ (see (\ref{eq.bounds.b.abs})) and the lower bound from (\ref{eq.bounds.c}) in the third step; and (\ref{eq.dx}) in the last step.
\end{enumerate}

Lastly, combining (\ref{eq.Aq1}) with the bounds from (\ref{eq.lowerb-eta}) and (\ref{eq.upperb-Delta}), we obtain
\begin{equation}\label{eq.Aq1bound}
    A \geq A\bigr|_{q=1} = \frac{{p_\mathrm{gen}^*} \eta}{{p_\mathrm{gen}^*}\eta + \Delta} \geq
    \frac{{p_\mathrm{gen}^*} \left(\gamma + p_\mathrm{con}\right)}{\xi + \xi' {p_\mathrm{gen}^*} + \xi'' (p_\mathrm{gen}^*)^2},
\end{equation}
with $\xi \coloneqq \gamma p_\mathrm{con} + p_\mathrm{con}^2$, $\xi' \coloneqq 1 + 2\gamma + (2-\gamma) p_\mathrm{con} - 2 p_\mathrm{con}^2 $, and $\xi'' \coloneqq 2 (1-p_\mathrm{con})^2$.
This lower bound is general and applies to every 1G$n$B system, no matter which purification policy it employs.

\clearpage
\section{Monotonicity of the average consumed fidelity (proof of Proposition \ref{prop.dFdq}) and bounds}\label{app.dFdq}
In this appendix, we show that the average consumed fidelity of the 1G$n$B system (given in Theorem \ref{theorem.F}) is monotonically increasing with increasing probability of purification $q$ (Proposition \ref{prop.dFdq}), as long as the purification policy is made of protocols that can increase the fidelity of newly generated links (i.e., $J_k(F_\mathrm{new}) \geq F_\mathrm{new}$, $\forall k \in \{1, ..., n\}$). This means that the average consumed fidelity is maximized when purification is performed every time a new link is generated ($q=1$). Using these ideas, we compute upper and lower bounds for the average consumed fidelity in Section \ref{app.sec.Fbounds}.

\begin{proof}[Proof of Proposition \ref{prop.dFdq}]
Recalling from (\ref{eq.Fbar-Hbar-relation}) that $\overline{F} = \overline{H} + 1/4$, showing the monotonicity of $\overline{H}$ is equivalent to showing the monotonicity of $\overline{F}$. We firstly rewrite the formula for $\overline{H}$ as given in (\ref{eq.app.G.oldform}),
\begin{equation*}
    \overline{H} = 
    \frac{ q (1-p_{\mathrm{con}})\left[ \tilde{b} - \tilde{d}H_{\mathrm{new}} + H_{\mathrm{new}}p_{\mathrm{gen}}^*  \right] + H_{\mathrm{new}} p_{\mathrm{con}} }{ q (1-p_{\mathrm{con}}) \left[ \tilde{c}H_{\mathrm{new}} -\tilde{a} + p_{\mathrm{gen}}^* \right] + e^{\Gamma}-1+p_{\mathrm{con}}},
    \end{equation*}
    where $p_{\mathrm{gen}}^* = 1 - (1-p_{\mathrm{gen}})^n$.
    Now consider functions of the form $g(x) = \frac{\alpha x + \beta}{\gamma x + \delta}$. This is non-decreasing if and only if 
    \begin{align*}
        \dv{g}{x} = \frac{\alpha \delta - \beta \gamma }{(\gamma x + \delta )^2} &\geq 0 \\ \Leftrightarrow \alpha \delta - \beta \gamma &\geq 0.
    \end{align*}
    We therefore see that $\overline{H}$ is non-decreasing in $q$ if and only if
    \begin{equation*}
        (1-p_{\mathrm{con}})\left[ \tilde{b} - \tilde{d}H_{\mathrm{new}} + H_{\mathrm{new}}p_{\mathrm{gen}}^*  \right](e^{\Gamma}-1+p_{\mathrm{con}})  - H_{\mathrm{new}} p_{\mathrm{con}} (1-p_{\mathrm{con}}) \left[ \tilde{c}H_{\mathrm{new}} -\tilde{a} + p_{\mathrm{gen}}^* \right] \geq 0, 
    \end{equation*}
    or equivalently
    \begin{equation}
        (e^{\Gamma}-1)\left(\tilde{b} - \tilde{d}H_{\mathrm{new}} + H_{\mathrm{new}} p_{\mathrm{gen}}^* \right) + p_{\mathrm{con}}\left(\tilde{b} - \tilde{d}H_{\mathrm{new}} - \tilde{c} H_{\mathrm{new}}^2 + \tilde{a}H_{\mathrm{new}} \right) \geq 0 
        \label{eqn:jump_function_non-decreasing_condition}
    \end{equation}
    We now show this by considering the two parts of the expression:
    \begin{enumerate}[label=(\alph*)]
        \item $\tilde{b} - \tilde{d}H_{\mathrm{new}} + H_{\mathrm{new}} p_{\mathrm{gen}}^* \geq 0$ \newline Recall that the jump functions $\tilde{J}_k$ map the shifted fidelity $h$ as
        \begin{equation}
            \tilde{J}_k(h) = \frac{a_k h+b_k}{c_k h+d_k}.
        \end{equation}
        When the input state is completely mixed ($h=0$), the probability of successful purification is 
        \begin{align*}
            \tilde{p}_k(0) = d_k,
        \end{align*}
        and so we must have $0\leq  d_k \leq 1$. If $d_k >0$, the output fidelity when inputting a completely mixed state then satisfies 
        \begin{align*}
            \tilde{J}_k(0) &=  \frac{b_k}{d_k} \geq 0
        \end{align*}
        which implies $b\geq 0$. If $d=0$, the output fidelity as the input state approaches the completely mixed state is 
        \begin{equation*}
            \lim_{h\rightarrow 0} \frac{a_k h+b_k}{c_k h},
        \end{equation*}
        and since this is bounded, it must be the case that $b=0$. Therefore,
        \begin{equation*}
            \tilde{b} = \sum_{k=1}^n b_k \cdot \binom{n}{k}(1-p_{\mathrm{gen}})^{n-k}p_{\mathrm{gen}}^k \geq 0
        \end{equation*}
        and 
        \begin{align*}
            \tilde{d} &= \sum_{k=1}^n d_k \cdot \binom{n}{k}(1-p_{\mathrm{gen}})^{n-k}p_{\mathrm{gen}}^k \\ &\leq \sum_{k=1}^n \binom{n}{k}(1-p_{\mathrm{gen}})^{n-k}p_{\mathrm{gen}}^k = 1- (1-p_{\mathrm{gen}})^n = p_{\mathrm{gen}}^*.
        \end{align*}
        Combining the above, we obtain 
        \begin{align*}
            \tilde{b} - \tilde{d}H_{\mathrm{new}} + H_{\mathrm{new}} p_{\mathrm{gen}}^* \geq H_{\mathrm{new}}(p_{\mathrm{gen}}^* - \tilde{d}) \geq 0.
        \end{align*}
        \item $\tilde{b} - \tilde{d}H_{\mathrm{new}} - \tilde{c} H_{\mathrm{new}}^2 + \tilde{a}H_{\mathrm{new}} \geq 0 $
        \newline We have that
        \begin{align*}
            \tilde{b} - \tilde{d}H_{\mathrm{new}} - \tilde{c} H_{\mathrm{new}}^2 + \tilde{a}H_{\mathrm{new}} &=  \sum_{k=1}^n  \binom{n}{k}(1-p_{\mathrm{gen}})^{n-k}p_{\mathrm{gen}}^k (b_k - d_k H_{\mathrm{new}} - c_k H_{\mathrm{new}}^2 +a_k H_{\mathrm{new}} ) \\ &= \sum_{k=1}^n \binom{n}{k}(1-p_{\mathrm{gen}})^{n-k}p_{\mathrm{gen}}^k \left(\frac{a_k H_{\mathrm{new}}+b_k}{c_k H_{\mathrm{new}} +d_k}
            - H_{\mathrm{new}} \right) \left( c_k H_{\mathrm{new}} +d_k\right) \\ &= \sum_{k=1}^n  \binom{n}{k}(1-p_{\mathrm{gen}})^{n-k}p_{\mathrm{gen}}^k \cdot  \left( \tilde{J}_k(H_{\mathrm{new}}) -H_{\mathrm{new}}\right) \cdot \tilde{p}_k(H_{\mathrm{new}}),
        \end{align*}
        which is non-negative if all jump functions $\tilde{J}_k$ satisfy
        \begin{equation*}
            \tilde{J}_k(H_{\mathrm{new}})\geq H_{\mathrm{new}},
        \end{equation*}
    \end{enumerate}
    or equivalently $J_k(F_{\mathrm{new}})\geq F_{\mathrm{new}}$
    Since $\Gamma \geq 0$, we therefore see that (\ref{eqn:jump_function_non-decreasing_condition}) holds. 
\end{proof}

\subsection{Upper and lower bounding the average consumed fidelity}\label{app.sec.Fbounds}
Here, we only consider purification policies made of protocols that can increase the fidelity of newly generated links (i.e., $J_k(F_\mathrm{new}) \geq F_\mathrm{new}$, $\forall k \in \{1, ..., n\}$).
For these policies, $\partial \overline F/\partial q \geq 0$.
A tight lower bound can be found by setting $q=0$ in (\ref{eq.F}):
\begin{equation}
    \overline F \geq \overline F\bigr|_{q=0} = \frac{\gamma/4 + F_\mathrm{new} p_\mathrm{con}}{\gamma + p_\mathrm{con}},
\end{equation}
where $\gamma \coloneqq e^{\Gamma} -1$.

An upper bound for $\overline F$ can be found by upper bounding it maximum value, which occurs at $q=1$. Using (\ref{eq.app.G.oldform}), we can write the maximum value as
\begin{equation}\label{eq.app.F1}
    \overline{F}\bigr|_{q=1} = \frac{1}{4} + \frac{(1-p_{\mathrm{con}})(\tilde{b} - \tilde{d}H_\mathrm{new}) + H_\mathrm{new} \left(p_{\mathrm{con}}+ p_{\mathrm{gen}}^*(1-p_{\mathrm{con}})\right)}{(1-p_{\mathrm{con}})(\tilde{c}H_\mathrm{new} -\tilde{a}) + \gamma + p_\mathrm{con} + (1-p_{\mathrm{con}}) p_{\mathrm{gen}}^*},
\end{equation}
where $p_\mathrm{gen}^* = 1 - (1-p_\mathrm{gen})^n$.
Using (\ref{eq.bounds.b}) and (\ref{eq.dx}), it can be shown that $\tilde{b} - \tilde{d}H_\mathrm{new} \leq p_\mathrm{gen}^* (3/4-H_\mathrm{new})$. Moreover, from (\ref{eq.varepsilonn-nonnegative}), we know that $H_\mathrm{new} \tilde c - \tilde a \geq -p_\mathrm{gen}^*$.
Applying these two inequalities to (\ref{eq.app.F1}), we find the upper bound:
\begin{equation}
    \overline F \leq \overline F\bigr|_{q=1} \leq
    \frac{1}{4} + \frac{H_\mathrm{new} p_\mathrm{con} + (3/4) (1-p_{\mathrm{con}}) p_{\mathrm{gen}}^*
    }
    {\gamma + p_\mathrm{con}}
    = \frac{\gamma/4 + F_\mathrm{new} p_\mathrm{con}}{\gamma + p_\mathrm{con}}
    + \frac{3}{4}\frac{(1-p_\mathrm{con})p_\mathrm{gen}^*}{\gamma + p_\mathrm{con}}.
\end{equation}

\clearpage
\section{Entanglement buffering with concatenated purification}\label{app.concatenation_extra}
In this appendix, we discuss further features of 1G$n$B buffers that use concatenated purification policies.
In \ref{app.subsec.orderings}, we consider different orderings for the purification subroutines that are being concatenated.
In \ref{app.subsec.increasingconcats}, we show that increasing the number of concatenations is beneficial when noise in memory is very strong.

\subsection{Different concatenation orderings}\label{app.subsec.orderings}
As stated in the main text, we tested different orderings of the concatenated purification subroutines.
In Figure~\ref{fig.concatenation-sketch}, we showed two different orderings for a concatenated DEJMPS policy: sequentially concatenated DEJMPS and nested DEJMPS.
Here, we consider a policy that applies a nested DEJMPS protocol to all the newly generated links, and then uses the output state to purify the link in memory with a final round of DEJMPS.
This policy is only defined when the number of links generated is a power of 2. Hence, we assume $n=4$ bad memories and deterministic entanglement generation ($p_\mathrm{gen}=1$) in the following example.
Figure~\ref{fig.concat-vs-nested} shows the performance of this policy compared to concatenated versions of DEJMPS (in which DEJMPS is applied sequentially to all links, as shown in Figure~\ref{fig.concatenation-sketch}a).
The performance of all policies shown is qualitatively similar. We also observe that, in this case, nesting is better than concatenating as much as possible, but it is worse than concatenating twice.

\begin{figure}[th]
    \centering
    \includegraphics[width=0.5\linewidth]{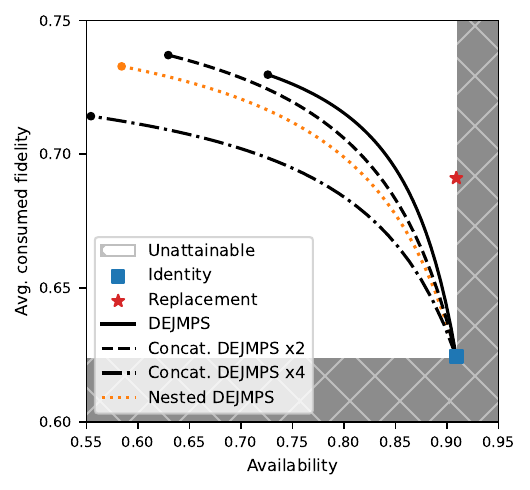}
    \caption{\textbf{Different concatenation orderings seem to yield qualitatively similar performance.}
    Performance of 1G$n$B systems with different purification policies, in terms of availability $A$ and average consumed fidelity $\overline F$. The shaded area corresponds to unattainable values of $A$ and $\overline F$ (see (\ref{eq.A_bounds}) and (\ref{eq.Fbounds})).
    Lines and markers show the combinations of $A$ and $\overline F$ achievable by different purification policies: identity (square marker), replacement (star marker),
    DEJMPS (solid line), twice-concatenated DEJMPS (dashed line), thrice-concatenated DEJMPS (dotted-dashed line), and nested DEJMPS (orange dotted line).
    Parameter values used in this example: $n=4$, $p_\mathrm{gen}=1$, $F_\mathrm{new}=0.7$ ($\rho_\mathrm{new}$ is a Werner state), $p_\mathrm{con}=0.1$, and $\Gamma=0.02$.
    }
    \label{fig.concat-vs-nested}
\end{figure}

\subsection{Increasing number of concatenations}\label{app.subsec.increasingconcats}
In the main text, we showed that using some newly generated entangled links in the purification protocol and discarding the rest may provide a better buffering performance than implementing a more complex protocol that uses all the newly generated links.
In particular, we showed that increasing the maximum number of concatenations in a concatenated DEJMPS policy does not necessarily lead to better performance.
The reason was that, as we increase the number of concatenations, the overall probability of success of the protocol decreases.
Nevertheless, this effect is irrelevant when noise is strong: the quality of the buffered entanglement decays so rapidly that we need a protocol that can compensate noise with large boosts in fidelity, even if the probability of failure is large.
This is shown in Figure~\ref{fig.goodconcatenations}, where we display the maximum average consumed fidelity (i.e. assuming purification probability $q=1$, see Proposition~\ref{prop.dFdq}) versus the number of concatenations. When no purification is applied (zero concatenations), $\overline{F}$ is below 0.5, meaning that the good memory stores no entanglement, on average (see ref.~\cite{Davies2023a}).
As we increase the number of concatenations in the purification protocol, $\overline{F}$ increases, although the increase is marginal.
Note that this is a consequence of the strong noise experienced by the buffered entanglement -- in Figure~\ref{fig.excessive_concatenation} we showed the same plot but considering a lower noise level and the conclusions were different: increasing the number of concatenations eventually led to a decrease in average consumed fidelity.

\begin{figure}[th]
    \centering
    \includegraphics[width=0.5\linewidth]{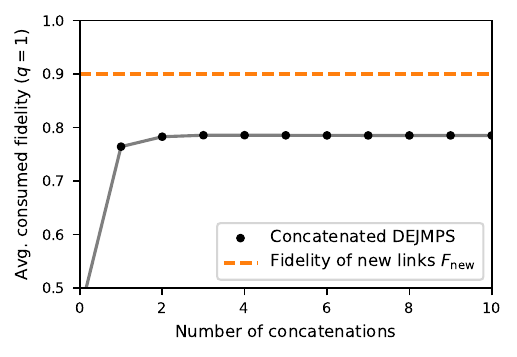}
    \caption{\textbf{Additional concatenation may improve the performance when noise is strong.}
    Maximum average consumed fidelity $\overline F$ achieved by a purification policy that concatenates DEJMPS a limited number of times.
    Zero concatenations corresponds to an identity policy (no purification is performed). One concatenation corresponds to the DEJMPS policy.
    Parameter values used in this example: $n=10$, $p_\mathrm{gen}=0.5$, $F_\mathrm{new}=0.9$ ($\rho_\mathrm{new}$ is a Werner state), $p_\mathrm{con}=0.1$, and $\Gamma=0.2$.
    }
    \label{fig.goodconcatenations}
\end{figure}


\end{document}